\theoremstyle{definition}
\newtheorem{theorem}{Theorem}
\newtheorem*{theorem*}{Theorem}
\newtheorem{assumption}{Assumption}
\newtheorem{claim}{Claim}
\newtheorem{implication}{Implication}
\newtheorem{lemma}{Lemma}
\newtheorem{proposition}[theorem]{Proposition}
\begin{document}


\title{Multidimensional Sorting: Comparative Statics\thanks{%
\baselineskip12.5pt \noindent We thank Anmol Bhandari, Felix Bierbrauer, Matteo Camboni, Hector Chade, Roberto Corrao, Jan Eeckhout, Joel Flynn, Pieter Gautier, Phillip Kircher, Rasmus Lentz, Ilse Lindenlaub, George Mailath, Simon Mongey, Chris Moser, Soumik Pal, Alireza Tahbaz-Salehi, Fedor Sandomirskiy, Vasiliki Skreta, Stefan Steinerberger, Coen Teulings, and Venky Venkateswaran for useful comments.}}
\author{ Job Boerma \\
{\small { \hspace{2 cm} University of Wisconsin-Madison \hspace{2 cm} } }\\  \vspace{-0.5 cm}
\and  Andrea Ottolini \\
{\small { \hspace{4 cm}  University of Alabama-Birmingham\hspace{4 cm} } }\\  \vspace{-0.5 cm}
\and  Aleh Tsyvinski \\
{\small { \hspace{4 cm}  Yale University\hspace{4 cm} } }\\  \vspace{-0.5 cm}
}
\date{\vspace{0.4 cm} December 2025 \vspace{-0.2 cm} }
\maketitle

\begin{abstract}
\fontsize{12.3pt}{20.0pt} 


 \selectfont In sorting literature, comparative statics for multidimensional assignment models with general output functions and input distributions is an important open question. We provide a complete theory of comparative statics for technological change in general multidimensional assignment models. Our main result is that any technological change is uniquely decomposed into two distinct components. The first component (gradient) gives a characterization of changes in marginal earnings through a Poisson equation. The second component (divergence-free) gives a characterization of labor reallocation. For U.S. data, we quantify equilibrium responses in sorting and earnings with respect to cognitive skill-biased technological change.





\vspace{0.6cm} \noindent {\bf JEL-Codes}: J01, D31, C78. \\
{\bf Keywords}: Sorting, Assignment, Multidimensional Skills.
\end{abstract}

\renewcommand{\thefootnote}{\fnsymbol{footnote}} \renewcommand{%
\thefootnote}{\arabic{footnote}} 
\thispagestyle{empty} \setcounter{page}{0}

\fontsize{12.3pt}{22.0pt} \selectfont
\newpage

\clearpage

\section{Introduction}

Recent empirical work in labor economics emphasizes that workers and jobs are multidimensional (see, for example, \citet{Acemoglu:2011} and \citet{Deming:2017}). When workers and jobs are multidimensional, how does technological change affect sorting and earnings? How do changes in earnings and labor reallocation depend on the direction of technological change? In order to answer these questions, we develop a general theory of comparative statics for multidimensional sorting models to characterize reallocation and earnings changes in response to technological change.

Assignment models, starting with \citet{Koopmans:1957} and \citet{Becker:1973}, are an important framework for the analysis of markets with two-sided heterogeneity. Primarily, these models (see \citet{Chade:2017} and \citet{Eeckhout:2018} for comprehensive reviews) are unidimensional in workers characteristics and job requirements.

Characterizing the multidimensional assignment problem is notoriously difficult.\footnote{For example, \citet{Chade:2017} and \citet{Eeckhout:2018} focus on unidimensional models and point out the technical difficulty of analyzing multidimensional models.} Solutions are known only for a small number of examples such as bilinear output functions with normally distributed workers and jobs (see \citet{Lindenlaub:2017} and references therein). Developing a tractable characterization of multidimensional assignment models for all other cases $-$ that is, for general output functions and general distributions $-$ remains an important open problem. 


We develop a general theory of reallocation and earnings changes in response to technological change. Our approach delivers a complete characterization of equilibrium comparative statics for sorting and earnings with respect to technology for  multidimensional assignment models with general output functions and general distributions.



Our main result (\Cref{eq:helmholtz} and \Cref{eq:changeinmearnings}) characterizes the equilibrium response of sorting and earnings to technological change for multidimensional sorting models with general technologies and worker and job distributions. We establish that any technological change can be decomposed into two orthogonal components: one that changes earnings and another that reallocates workers across jobs. Formally, this decomposition is given by the Helmholtz decomposition.\footnote{The Helmholtz decomposition, or the fundamental theorem of vector calculus, states that any vector field can be written as the sum of a gradient field and a divergence-free field.} The first component corresponds to the part of technological change that is a gradient $-$ it fully passes through to earnings without changing the assignment. The second component is a divergence-free reallocation $-$ it reallocates workers across jobs without changing earnings. Given an exogenous change in technology, this decomposition characterizes reallocation and earnings changes and thus yields comparative statics for multidimensional assignment models with general technologies and general distributions.


We now provide intuition for our main result. Any assignment equilibrium has to satisfy two conditions. The market clearing condition states that all workers have to be assigned to jobs. The firm optimality condition equates the marginal output of hiring a worker to the marginal cost, given by the gradient of the earnings function. 

Consider a change in technology. Technological change leads to a change in earnings and to worker reallocation. This in turn changes the firm optimality condition as the changed marginal benefit of hiring a worker has to be equated to a new marginal earnings cost. It also changes the market clearing condition as workers reallocate. For the new technology, the new marginal earnings have to remain a gradient, which implies that the change in marginal earnings has to be a gradient. The new assignment has to satisfy the market clearing condition which we show restricts reallocation to be divergence free. We show that the technological change is equal to the sum of these orthogonal components: changes in marginal earnings (a gradient), and reallocation (divergence-free). This is exactly the Helmholtz decomposition of the technological change. 

The Helmholtz decomposition is unique and yields a complete characterization of how changes in technology translate into earnings and reallocation. Formally, solving for the earnings changes and sorting reduces to a Poisson equation.\footnote{The Poisson equation is one of the most studied partial differential equations. See, e.g., \citet{Strauss:2007} for an undergraduate level exposition and \citet{Evans:2022} for a graduate level treatment.} The solution depends only on information known at the initial equilibrium, thus fully characterizing equilibrium comparative statics with respect to technological change without solving the assignment itself. We show that the solution to the Poisson equation can be characterized as a solution to a quadratic optimization problem (Theorem \ref{t:alternchar}), which can be viewed as a gradient regression of technological change. This gradient regression determines the best fit of technological change in terms of a function that is restricted to be a gradient. This theorem further characterizes the role of production complementarities and the worker skill distribution in the passthrough of technological progress into earnings and allocations. Weak complementarities tilt equilibrium adjustments toward reallocation since firm output is not sensitive to changes in the worker type at the firm, while strong complementarities tilt equilibrium adjustments toward earnings changes. Reallocating worker types which are abundant is more costly on aggregate and marginal earnings adjusts more strongly to reflect technological change. 

The second main result shows that the symmetry features of technological change determine which margin $-$ earnings or labor allocation $-$ responds in equilibrium. Symmetric technological change passes fully into earnings; antisymmetric change results only in reallocation. In general, both margins adjust, with their relative importance governed by the complementarity between workers and their job as well as by the distribution of worker types.

First, consider the case of \textit{symmetric technological change}. For example, an increase only in within-skill complementarities $-$ such as an increase in output produced by cognitive skills in cognitive jobs $-$ is symmetric. When technological change is symmetric, it can be represented as a gradient, so the Helmholtz decomposition has no divergence-free component. There is full passthrough of technological change into earnings and no labor reallocation.

Second, consider the case of \textit{antisymmetric technological change}. For example, an increase in output produced by cognitive skills in manual jobs and a decrease of output produced by manual skills in cognitive jobs. In this case, there is only the divergence-free component of the Helmholtz decomposition and no gradient component. Technological change leads only to reallocation and does not affect worker earnings.

Finally, general technological change is neither symmetric nor antisymmetric. In this case, both earnings and allocations change. The Helmholtz decomposition uniquely splits the technological change into two orthogonal components $-$ a gradient (symmetric) part that gives the change in earnings and a divergence-free (antisymmetric) part that gives labor reallocation. The importance of these two margins depends on production complementarities and on the distribution of worker skills.


We then provide a detailed closed-form example to demonstrate the Helmholtz decomposition and the role of production complementarities in determining comparative statics. We specialize the environment to a bilinear technology with a rotationally invariant worker skill distribution. In this setting, the Poisson equation that implements the Helmholtz decomposition simplifies to a linear algebra equation that decomposes any matrix into its symmetric and antisymmetric components: the Sylvester equation. This symmetric-antisymmetric decomposition of the matrix of technological change provides the analog of the Helmholtz decomposition: the symmetric part governs earnings changes, and the antisymmetric part governs reallocation. Changes in within-skill complementarity, represented by the symmetric part, change earnings while leaving the assignment unchanged. Changes in between-skill complementarity, represented by the antisymmetric part, induce reallocation. With two dimensions of skills, reallocation implied by the antisymmetric component takes a particularly simple form $-$ a rotation. When between-skill complementarities change asymmetrically, workers rotate toward jobs that make greater use of their relatively strong skill. Within-skill complementarities govern the magnitude of this rotation.



We next characterize quantitatively reallocation and earnings changes in response to cognitive skill-biased technological change for an empirically relevant economy with heterogeneity in manual and cognitive skills. We infer the distribution of worker skills using American Community Survey data on earnings and occupations for all U.S. workers. The resulting worker skill distribution is sharply non-Gaussian: the skill distribution is multimodal, positively skewed, and has heavy tails. 

We consider cognitive skill-biased technological change that increases the marginal product of cognitive skills and leaves the marginal product of manual skills unchanged. We obtain three quantitative results.  First, cognitive skill-biased technological change induces workers with high (low) cognitive skills to replace workers with more (less) manual skills. Second, marginal earnings for cognitive skills rise for all workers and rise more sharply for high ability workers. This earnings change reflects the symmetric component of technological progress, which shifts the earnings schedule upward without altering allocations. Third, in stark contrast, marginal earnings for manual skills increase for workers with high cognitive skills and decreases for those with low cognitive skills $-$ even though the technological change leaves the marginal product of manual skills unchanged. 


Finally, we provide three additional theoretical results. First, we show for the unidimensional case that there is no reallocation and full passthrough of technological change into earnings as technological change is always symmetric. Second, we prove that one can equivalently characterize reallocation by maximizing the aggregate output gain up to second order in response to technological change, which delivers the aggregate output gain due to reallocation as a byproduct. Third, we show how to apply the Helmholtz decomposition and Poisson equation along a trajectory of technological change to derive the equilibrium response to large changes in technology.

\vspace{0.4 cm}
\noindent \textbf{Literature}. The theory of markets with two-sided heterogeneity, originating with \citet{Koopmans:1957} and \citet{Becker:1973} and surveyed in \citet{Chade:2017} and \citet{Eeckhout:2018}, provides detailed understanding of pairwise unidimensional assignment models in which supermodularity and submodularity yield
positive or negative sorting. Most literature restricts attention to explicitly assortative cases or identifies conditions under which assortative predictions hold (see \citet{Eeckhout:2018b}, \citet{Chade:2020b,Chade:2017b}, and \citet{Calvo:2024} for recent prominent examples).\footnote{Among unidimensional sorting models, rich equilibria may arise beyond positive and negative sorting. \citet{Boerma:2023} develop a theory of composite sorting for settings in which the output function is decreasing and convex in skill gaps and is neither supermodular nor submodular. \citet{BTZ:2021} show how sorting with teams produces mixed and countermonotonic assignment patterns.}

Explicit solutions for multidimensional sorting exist only in very
specific cases. Important papers by \citet{Tinbergen:1956} and \citet{Lindenlaub:2017} derive a complete characterization in the case of diagonal production and bilinear production with normally distributed workers and jobs. In contrast, we characterize comparative statics for general production
technologies and general type distributions.

Most closely related in terms of general structure is the transport problem of \citet{Fajgelbaum:2020}, who study transport networks in spatial equilibrium. Their framework
is a multidimensional assignment problem with a general cost
function that depends on both flows and on investment in infrastructure, and a general spatial distribution of the excess demand for goods. Formally,
goods satisfy balanced-flow constraints (a divergence condition) and local
prices clear the markets. This formulation
extends the continuous transportation problem in \citet{Beckmann:1952} to a
setting with endogenous costs determined by infrastructure investment.
\citet{Fajgelbaum:2020} use a duality approach to characterize equilibrium
conditions and solve the model computationally. Our paper studies a different type of
multidimensional assignment problem with general output and distribution functions,
and a divergence condition implied by labor market clearing, and we provide an analytical characterization of how technological change affects equilibrium prices and allocations. 

The equilibrium implications of technological change have also been studied in different types of multidimensional assignment frameworks that are related to our setting. \citet{Edmond:2019} analyzes the assignment of a continuum of workers into two occupations and provides comparative statics with respect to the parameters of the CES technology in each occupation. \citet{Ocampo:2018} examines a task assignment model with finite worker types and continuous tasks, and provides comparative statics for the task assignment for worker-replacing and skill-enhancing technological change. Our paper studies comparative statics for technological change with general output functions and continuous distributions of workers and jobs.

Our paper also relates to the mathematical literature on optimal transport.\footnote{See \citet{Galichon:2018} and \citet{Sargent:2024} for comprehensive overviews of the literature that builds on optimal transportation theory to solve economic problems.} \citet{Angenent:2003} constructs a gradient flow that moves a suboptimal assignment towards an optimum for a given technology. In contrast, our analysis starts from an optimal assignment and analyze equilibrium adjustments to technological change. Since the envelope theorem eliminates first-order variations at the optimum, comparative statics for technological change is inherently a second-order problem. We show how these second-order terms $-$ captured by local complementarities and densities $-$ govern equilibrium adjustments. Our analysis is also related to the literature on the stability of optimal transport with respect to perturbations of the marginal distributions and the cost function (see, e.g., \citet{Chen:2016} and \citet{Figalli:2017}), which shows that marginal earnings and allocations maintain the same regularity as a function of worker types, under suitable geometric conditions. In contrast, we show regularity with respect to technological change, which is necessary to obtain a partial differential equation for earnings changes and reallocation in response to technological change.

A large literature has explored comparative statics for unidimensional assignment frameworks. \citet{Legros:2002} provides general sufficient conditions for assortative sorting and show how changes in the surplus function yield insights about comparative statics. \citet{Mailath:2017} provide comparative statics with respect to premuneration values for an assignment model where workers make investments before matching similar to \citet{Cole:2001} with the difference that worker investments are unobservable. \citet{Chade:2022} studies an assignment problem
with risk and investment in skills, and provides detailed comparative statics for changes in technology, risk, and worker and firm heterogeneity. \citet{Chade:2024} extends this analysis to settings when agents on both sides of the market are risk averse. \citet{Anderson:2024} develop a general theory of increasing sorting, showing that sorting increases in the positive quadrant dependence partial order when cross-type synergy rises and is monotone in types. \citet{Teulings:2005} develops an sorting model with distance-dependent substitution and derives explicit comparative statics for changes in relative earnings in response to shifts in the skill distribution using a first-order Taylor expansion. \citet{Teulings:2004} incorporate search frictions into this model and apply a second-order Taylor expansion to characterize equilibrium and derive the cost of search. \citet{Baccara:2020} analyzes dynamic unidimensional assignment models and provides comparative statics with respect to waiting costs. In addition, \citet{Perez:2022,Perez:2025} study a falsification-proof mechanism design problem that they represent as an optimal transport problem and provide comparative statics with respect to falsification costs. In contrast, we analyze comparative statics for static multidimensional assignment models.


Another class of assignment models with multiple dimensions of heterogeneity are models with unobserved heterogeneity \citep{Choo:2006}. \citet{Bojilov:2016} characterize the equilibrium and comparative statics in closed form when the output function is bilinear, workers and jobs are normally distributed, and the heterogeneity in tastes is of the continuous logit type. Our paper stays within the deterministic assignment problem and delivers equilibrium comparative statics for technological change without random shocks and for general production technologies and general distributions.

\section{Model}

We study an economy in which workers with multidimensional heterogeneous skills sort into jobs with multidimensional heterogeneous characteristics. 


\subsection{Environment}

The economy is populated by a unit measure of risk-neutral workers and firms. Each worker is endowed with a $d$-dimensional vector of skills $x=(x_1,\dots,x_d) \in X$, where $d$ denotes the number of skill dimensions (such as cognitive and manual skill). Firms offer a unit measure of jobs. Jobs are described by a $d$-dimensional vector of characteristics $z=(z_1,\dots,z_d) \in Z$.\footnote{The set of workers $X$ and the set of jobs $Z$ are compact, strictly convex, and have an analytic boundary. We review analytic functions in Appendix \ref{a:analytical} and discuss additional model assumptions in Section \ref{s:assumptions}.} Workers and jobs are distributed according to analytic probability density functions $f(x)$ and $g(z)$ that are bounded away from zero and infinity.

When workers and jobs are paired, they produce output using an analytic production function $y_t(x,z)$.\footnote{The marginal product of worker $x$ is uniformly convex and the marginal product of job $z$ is uniformly convex.} The dependence of the production function on technology is captured through $t$. Our main goal is to characterize how earnings and sorting respond to changes in technology $t$, that is, to $\dot{y}_t=\frac{d y_t}{d t}$. The initial value of the technology parameter is normalized to $t = 0$.\footnote{For ease of presentation, we omit the index $t$ when the production function or its time derivative is evaluated at the initial value $t=0$.}




\subsection{Competitive Equilibrium}\label{s:eqconditions}

We next describe a competitive equilibrium for this economy given technology $t$. 


\vspace{0.3 cm}
\noindent A firm with job $z$ chooses a worker $x$ to maximize profits given worker earnings $w_t$: 
\begin{equation}
v_t(z) = \max\limits_{x \in X} \; \big\{  y_t(x,z) - w_t(x) \big\} \hspace{0.05 cm}. \label{eq:firmprob}
\end{equation}
A worker of type $x$ chooses a job $z$ to maximize earnings given firm profits $v_t$:
\begin{equation*}
w_t(x) = \max\limits_{z \in Z} \; \big\{ y_t(x,z) - v_t(z) \big\} \hspace{0.05 cm}. \label{eq:workerprob}
\end{equation*}
An assignment $\pi_{t}$ specifies a joint distribution of workers and jobs. Feasibility requires that the marginal distributions of $\pi_{t}$ equal the worker and job distributions, so that all workers and all jobs are paired. Feasibility of an assignment is thus equivalent to labor market clearing.\footnote{Formally, an assignment is a coupling between $f$ and $g$, that is, a probability measure on $X \times Z$ with marginal distributions $f$ and $g$.}

An equilibrium is a triple $(\pi_t,w_t,v_t)$ such that $(i)$ firms maximize profits, $(ii)$ workers maximize earnings, $(iii)$ the labor market clears, and $(iv)$ the resource constraint is satisfied. The resource constraint ensures that total output equals the sum of earnings and profits: 
\begin{equation*}
\int_{X \times Z} y_t(x, z) d \pi_t = \int_Z v_t(z) g(z) dz + \int_X w_t(x) f(x) dx \hspace{0.03 cm}.
\end{equation*}

\vspace{0.1 cm}
\noindent \textbf{Equilibrium Condition}. The first-order optimality condition to the firm problem (\ref{eq:firmprob}) is:
\begin{equation}
(\nabla_1 y_t)(x,z) = (\nabla w_t)(x) . \label{eq:focfirm}
\end{equation}
which must hold for all pairs $(x,z)$ on the support of the assignment.\footnote{We adopt the notation $(\nabla_1 y_t)$ to denote the gradient of the function $y_t$ with respect to its first argument. We omit the subscript for functions with a single argument. Firm optimality condition (\ref{eq:focfirm}) assumes differentiability of the earnings function. The assumptions in Section \ref{s:assumptions} ensure that the earnings function is smooth.} Firms equate the marginal benefit of hiring worker type $x$, which is given by the marginal output $(\nabla_1 y_t)(x,z)$, to the marginal cost of hiring worker type $x$, given by the marginal earnings $(\nabla w_t)(x)$. Equation (\ref{eq:focfirm}) is similar to a hedonic pricing condition (see, for example, \citet{Rosen:1981}) showing that marginal earnings equal marginal products in the assigned job in the setting of multiple dimensions of skills. The second-order condition to the profit maximization problem (\ref{eq:firmprob}) is that $(\nabla^2_{11} y_t)(x,z) - (\nabla^2 w_t)(x)$ is negative semidefinite. 


\subsection{Assignment Problems}

We solve two problems to characterize an equilibrium.\footnote{We describe the known relationship between the planning problems and equilibrium in Appendix \ref{pf:prop:equilibrium}.}

\vspace{0.35cm}
\noindent \textbf{Primal Problem}. The planner chooses a feasible assignment $\pi_t$ to maximize aggregate output:
\begin{equation}
\max_{\pi_t \in \Pi} \int y_t (x,z ) d \pi_t . \label{e:pp}
\end{equation}
This is an optimal transport problem with multidimensional marginal distributions. 


\vspace{0.35cm}
\noindent \textbf{Dual Problem}. Equilibrium earnings $w_t$ and firm profits $v_t$ are characterized by the dual problem. The dual problem is to choose functions $w_t$ and $v_t$ that solve:
\begin{equation*}
\min_{w_t,v_t} \; \int w_t(x) f(x) d x + \int v_t(z) g(z) d z \label{e:pp_dual},
\end{equation*}
subject to the constraint $w_t(x) + v_t(z) \geq y_t(x,z)$ for any $(x,z)$. The constraint states that earnings and firm values do not exceed the surplus that would be produced by the worker-job pair. 

\vspace{0.35cm}
\noindent By Monge-Kantorovich duality, the primal problem and dual problem deliver the same optimal value.\footnote{See, for instance, Theorem 5.10 in \citet{Villani:2009}.} This implies that earnings and firm profits split the surplus on the support of the optimal assignment, $w_{t}(x)+v_{t}(z)=y_{t}(x,z)$. 

\subsection{Assumptions} \label{s:assumptions}


In order to ensure uniqueness and smoothness of equilibrium, we impose two standard conditions from the optimal transport literature \citep{Villani:2009}. Specifically, we assume the output function satisfies the Ma-Trudinger-Wang condition \citep{Ma:2005} as well as the twist condition.\footnote{The output function $y_t(x,z)$ satisfies the twist condition, which means that if $(\nabla_1 y_t)(x,z) = (\nabla_1 y_t)(x,\hat{z})$, then $z = \hat{z}$ and if $(\nabla_2 y_t)(x,z) = (\nabla_2 y_t)(\hat{x},z)$, then $x = \hat{x}$. Moreover, $(\nabla^2_{12} y_t)(x,z)$ is invertible. In \Cref{pf:lemmamap}, we show that the twist condition ensures that each worker type is paired with a unique job type. \label{a:twist}} The first condition ensures smoothness of the earnings function; the second condition ensures that each worker type is paired with a unique job type. Then, the optimal assignment $\pi_t$ is given by a unique assignment function $\tau_t$. Moreover, the assignment  and its inverse are both smooth.\footnote{See, for example, Theorem 5.6 and Theorem 5.7 in \citet{Figalli:2017}.} These results allow us to work with a smooth assignment function rather than with the joint distribution $\pi_t$. We thus formulate the assignment problem  (\ref{e:pp}) as finding an optimal smooth assignment function $\tau_t$.

\section{Assignment and Rearrangement Problem}

Equilibrium sorting is the solution to the planning problem that assigns heterogeneous workers to heterogeneous jobs (\ref{e:pp}). In order to understand how sorting and earnings respond to technological progress, it is useful to analyze an equivalent but simpler problem which we call a \textit{rearrangement problem}. The rearrangement problem assigns workers to replace other workers in the jobs they were assigned to under the initial equilibrium. 


\vspace{0.4 cm}
\noindent \textbf{Assignment Problem}. The assignment problem maximizes aggregate output by choosing a smooth assignment function $\tau_t$ between workers that are distributed according to density function $f$ and jobs that are distributed according to density function $g$. The planner chooses assignment $\tau_t: X \rightarrow Z$ to maximize aggregate output: 
\begin{equation*}
\max\limits_{\tau_t  \in \mathcal{T}} \int y_t(x, \tau_t(x)) f(x) dx \label{eq:outputmax} .
\end{equation*}

Feasibility requires that the mass of workers with skills in any set $A \subseteq X$ is equal to the mass of jobs they are assigned to under the assignment $\tau_{t}$. The feasibility condition ensures that every job is filled and every worker is assigned.\footnote{Formally, an assignment $\tau_t$ is feasible if it maps the worker distribution $f$ onto the job distribution $g$, meaning $\int_{\tau_t(A)} g(z) d z = \int_{A} f(x) dx$ for any Borel subset $A \subseteq X$. \label{eq:constraint}} This constraint can be written as:
\begin{equation}
\int_Z h(z) g(z) d z = \int_X h(\tau_t(x)) f(x) dx \label{eq:probconstraint}
\end{equation}
for any smooth function $h$. The set of feasible assignment functions $\mathcal{T}$ is the set of functions that satisfy (\ref{eq:probconstraint}). The optimal assignment given technology $t$ is denoted $\tau_t$, where $\tau_t(x)$ represents the job that worker  type $x$ is assigned to. 

\vspace{0.4 cm}
\noindent \textbf{Rearrangement Problem}. The assignment problem can also be described from a different perspective. Instead of asking which job each worker is assigned to, we ask which worker replaces a worker in the job they were assigned under the initial assignment. This rearrangement problem is equivalent to the assignment problem and is used in Section \ref{s:comparativestats} for analyzing comparative statics.

Let $\tau$ denote the optimal assignment for the initial technology $t=0$. The rearrangement problem fixes this assignment and asks how workers are reallocated across those different jobs. The rearrangement function $r_t(x)$ stipulates which worker is assigned to the job which was assigned to worker $x$ under the initial assignment $\tau$. That is, $r_{t}(x)$ is the replacement for worker $x$ in job $\tau(x)$ under technology $t$. 

The rearrangement problem is to choose a rearrangement $r_t$ to maximize aggregate output:
\begin{equation*}
\max\limits_{r_t \in \mathcal{R}} \int y_t ( r_t(x), \tau(x)) f(x) dx \label{eq:outputmaxrear}
\end{equation*}
over all rearrangements $r_t$ that map $f$ onto itself, or:
\begin{equation}
\int_X h(z) f(z) d z = \int_X h(r_t(x)) f(x) dx \label{eq:probconstraintrear}
\end{equation}
for any smooth function $h$. The set of feasible rearrangements, that is, the set of assignment functions satisfying (\ref{eq:probconstraintrear}) is denoted by $\mathcal{R}$. A rearrangement problem is an output maximization problem that asks: when the technology changes, which worker optimally takes job $\tau(x)$ that was assigned to worker $x$ under the initial technology? The rearrangement formulation specifies the problem in worker-worker terms rather than in worker-job terms.

\vspace{0.4 cm}
\noindent \textbf{Equivalence between the Assignment Problem and the Rearrangement Problem}. The next result shows that the assignment problem and the rearrangement problem are equivalent. 


\begin{lemma}{\textit{Equivalence between the Assignment Problem and the Rearrangement Problem.}}\label{l:reallocation}
The planning problem and the rearrangement problem deliver the same aggregate output: 
\begin{equation*}
\max\limits_{\tau_t \in \mathcal{T}} \int y_t(x, \tau_t(x)) f(x) dx = \max\limits_{r_t \in \mathcal{R}} \int y_t ( r_t(x), \tau(x)) f(x) dx 
\end{equation*}
with the optimal assignment and rearrangement linked as $\tau_t(r_t(x)) = \tau (x)$. 
\end{lemma}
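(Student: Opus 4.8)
The plan is to exhibit a value-preserving bijection between feasible assignments $\tau_t\in\mathcal{T}$ and feasible rearrangements $r_t\in\mathcal{R}$, namely $r_t=\tau_t^{-1}\circ\tau$ (equivalently $\tau_t=\tau\circ r_t^{-1}$, i.e. $\tau_t(r_t(x))=\tau(x)$), and to show that corresponding pairs attain equal objective values. Since both problems are maximizations, it then suffices to prove the two inequalities between the optima: each direction uses the maximizer of one problem to build a feasible point of the other with the same value.

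\textbf{Feasibility is preserved.} I would argue at the level of pushforward measures. Feasibility of the initial assignment means $\tau_\#f=g$, and feasibility of $\tau_t$ (condition (\ref{eq:probconstraint})) means $(\tau_t)_\#f=g$, hence $(\tau_t^{-1})_\#g=f$ at the optimum where $\tau_t$ is invertible. Composing, $(\tau_t^{-1}\circ\tau)_\#f=(\tau_t^{-1})_\#(\tau_\#f)=(\tau_t^{-1})_\#g=f$, so $r_t:=\tau_t^{-1}\circ\tau$ satisfies (\ref{eq:probconstraintrear}) and lies in $\mathcal{R}$. Conversely, if $r_t\in\mathcal{R}$ then $(r_t)_\#f=f$, and $\tau_t:=\tau\circ r_t^{-1}$ satisfies $(\tau_t)_\#f=\tau_\#((r_t^{-1})_\#f)=\tau_\#f=g$, so $\tau_t\in\mathcal{T}$.

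\textbf{Objective values coincide and the inequalities follow.} Writing $\tau(x)=\tau_t(r_t(x))$ and applying the rearrangement feasibility identity (\ref{eq:probconstraintrear}) to the integrand $a\mapsto y_t(a,\tau_t(a))$ gives
\[
\int_X y_t\big(x,\tau_t(x)\big) f(x)\,dx=\int_X y_t\big(r_t(x),\tau_t(r_t(x))\big) f(x)\,dx=\int_X y_t\big(r_t(x),\tau(x)\big) f(x)\,dx,
\]
so corresponding feasible points have identical values. Now let $\tau_t^\star$ attain the assignment maximum; by the twist condition and the regularity results cited in Section \ref{s:assumptions} it is a smooth diffeomorphism, so $r_t^\star:=(\tau_t^\star)^{-1}\circ\tau$ is feasible with the same value, giving "$\le$". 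Symmetrically, the rearrangement problem is an optimal transport problem between $f$ and $f$ with surplus $(a,x)\mapsto y_t(a,\tau(x))$, which inherits the twist and Ma--Trudinger--Wang conditions because $\tau$ is a diffeomorphism onto $Z$; hence its maximizer $r_t^{\star\star}$ is a diffeomorphism, and $\tau_t^{\star\star}:=\tau\circ(r_t^{\star\star})^{-1}$ is feasible for the assignment problem with the same value, giving "$\ge$". Equality of the maxima follows, with optimizers linked by $\tau_t(r_t(x))=\tau(x)$.

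\textbf{Main obstacle.} The only delicate point is the appeal to invertibility and smoothness of the relevant \emph{optimal} maps: a generic feasible $\tau_t$ or $r_t$ need not be injective, so the construction must be applied at the optima, where the twist condition supplies a unique smooth inverse. The accompanying routine checks are that the rearrangement surplus $y_t(a,\tau(x))$ inherits the twist / MTW structure from $y_t$ via the diffeomorphism $\tau$, and that the feasibility identity (\ref{eq:probconstraintrear}), stated for smooth $h$, extends to the continuous integrand $y_t(\cdot,\tau_t(\cdot))$ by approximation.
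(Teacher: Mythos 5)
Your proposal is correct and follows essentially the same route as the paper's proof in Appendix \ref{a:reallocationproblem}: a value-preserving correspondence $\tau_t(r_t(x))=\tau(x)$ between feasible points of the two problems (the paper phrases the value-preservation step as a change of variables with $f(x)=g(\tau_t(x))|\det\nabla\tau(x)|$, you phrase it via pushforwards, which is the same computation), followed by the two inequalities obtained by transporting each problem's optimizer to the other. Your extra care about invertibility of the optimal maps and the inheritance of the twist/MTW structure by the rearrangement surplus is a sound elaboration of what the paper leaves implicit, not a different argument.
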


\vspace{0.1 cm}
\noindent We prove this result in Appendix \ref{a:reallocationproblem}. For the initial technology $\tau_t = \tau$, implying that the optimal rearrangement at the initial technology is the identity $r_t(x) = x$. 

Lemma \ref{l:reallocation} shows that the rearrangement $r_{t}$ and the assignment $\tau_{t}$ describe equilibrium from complementary perspectives. The assignment $\tau_t$ specifies which job a worker holds for technology $t$. The rearrangement $r_{t}$ describes which worker fulfills the job initially held by worker $x$. The connection is formalized by $\tau_{t}(r_{t}(x))=\tau(x)$: under assignment $\tau_t$ worker $r_{t}(x)$ works in the job that worker $x$ worked in under the initial assignment $\tau$.

\vspace{0.4 cm}
\noindent \textbf{Equilibrium Condition for the Rearrangement Problem}. We use the equivalence between the assignment problem and the rearrangement problem in Lemma \ref{l:reallocation} to restate the firm optimality condition (\ref{eq:focfirm}). In the rearrangement formulation, the firm optimality condition takes the form:
\begin{equation}
(\nabla_1 y_t) (r_t(x),\tau(x)) = (\nabla_1 w_t) (r_t(x)) \hspace{0.04 cm}. \label{eq:hedonicw}
\end{equation}
This is a standard marginal pricing condition: the marginal compensation for replacement worker $r_t(x)$, which is denoted by $(\nabla_1 w_t) (r_t(x) )$, is equal to the marginal product of this worker in the job they take over, which is $(\nabla_1 y_t) (r_t(x) , \tau(x))$. 

For notational convenience, we define the output produced by replacement worker $\tilde{x}$ when they work in the job initially done by worker $x$ as: 
\begin{equation}
\mathsf{y}_t(\tilde{x},x) := y_t(\tilde{x},\tau(x)) \hspace{0.04 cm}. \label{eq:outputww}
\end{equation} 
The output complementarity between replacement worker $\tilde{x}$ and the original worker $x$ is then given by the cross-derivative $(\nabla^2_{12}\mathsf{y}_t)(\tilde{x},x)$. 

\vspace{0.4 cm}
\noindent \textbf{Overview of Approach to Equilibrium Comparative Statics}. A competitive equilibrium is characterized by two conditions: labor market clearing (\ref{eq:probconstraintrear}) and firm optimality (\ref{eq:hedonicw}). To derive comparative statics for technological change, we first analyze the response of both conditions to technological change (Section \ref{s:feasiblereall} and Section \ref{s:changeearnings}). Our analysis of comparative statics then follows three steps. First, assuming the equilibrium is differentiable with respect to technological change, we show that equilibrium comparative statics around the initial equilibrium can be characterized by the Helmholtz decomposition (Section \ref{s:solvingrealloc}). Second, we show that the same ideas can be used to derive a partial differential equation to characterize comparative statics for large technological change (Section \ref{s:largetc}). Third, we establish that this partial differential equation has a solution and show that this solution  characterizes the equilibrium response for small and large technological change (Theorem \ref{t:flow}).\footnote{For this analysis, we assume that initial equilibrium earnings are analytic.} The final step confirms that the equilibrium is differentiable with respect to technological change.

\section{Comparative Statics} \label{s:comparativestats}

This section derives the main result of the paper: comparative statics for multidimensional sorting. We proceed in four steps. \Cref{s:feasiblereall} establishes feasibility conditions for reallocation. \Cref{s:changeearnings} identifies two distinct margins of equilibrium adjustment $-$ changes in earnings and allocations $-$ by analyzing the firm optimality condition (\ref{eq:hedonicw}). \Cref{s:solvingrealloc} established that comparative statics is given by the Helmholtz decomposition. \Cref{s:symmetry} shows how symmetry and antisymmetry of technological change determine whether equilibrium adjustments occur in earnings, sorting, or both. 



\subsection{Feasible Labor Reallocation} \label{s:feasiblereall}

This section characterizes the set of feasible labor reallocations. As technology changes, workers are rearranged and this rearrangement must satisfy labor market clearing. Labor market clearing implies two restrictions on reallocation specific to assignment models. First, the worker inflow into each job must equal the worker outflow out of this job everywhere inside the type space $-$ a divergence-free condition. Second, workers in the boundary of the worker type space remain inside the boundary. Labor market clearing (\ref{eq:probconstraintrear}) requires the worker distribution to be preserved under rearrangement, and differentiating this market clearing condition yields the divergence-free and boundary conditions. Together with the firm optimality condition developed in \Cref{s:changeearnings}, these conditions enable our comparative statics analysis for assignment models.


\vspace{0.4 cm}
\noindent \textbf{Labor Reallocation}. We first differentiate the labor market clearing condition (\ref{eq:probconstraintrear}) with respect to technology: 
 \begin{equation*}
0 = \frac{d}{dt} \int_X h(r_t(x)) f(x) dx = \int_X (\nabla h) (r_t(x)) \cdot \dot{r}_t(x) f(x) dx .
\end{equation*}
At the initial technology $t=0$, where $r(x) = x$, this corresponds to $0 = \int (\nabla h) (x) \cdot \dot{r}(x) f(x) dx$.\footnote{In \Cref{s:largetc} we show optimal reallocation is  differentiable with respect to technological change. We also show how our analysis can also be used to characterize equilibrium comparative statics for large changes in technology.} Integration by parts then gives:
 \begin{equation*}
0 = \int_{\partial X} h(x) \dot{r}(x) f(x) \cdot n(x) dx - \int_X h (x) \nabla \cdot (\dot{r}(x) f(x)) dx \;, 
\end{equation*}
where $\nabla \cdot$ denotes the divergence and $n(x)$ is the unit outward normal vector at the boundary. Since this must hold for all $h$, the reallocation $\dot{r}$ must satisfy two conditions:
\begin{align}
0 & = \nabla \cdot (\dot{r}(x) f(x)) \hspace{3.65 cm} \text{$ x \in X$} \label{eq:rearr0} \\
0 & = \dot{r}(x) f(x) \cdot n(x)  \hspace{3.5 cm} \text{$x \in \partial X$} \label{eq:rearr0b}
\end{align}
The first condition (\ref{eq:rearr0}) requires that reallocation is divergence-free in the interior of the worker types: the worker inflow into each job must equal the worker outflow out of this job. The second condition (\ref{eq:rearr0b}) requires that workers in the boundary of the worker types remain inside the boundary: a worker in the boundary cannot be replaced by a worker from the interior.\footnote{Alternatively, one could solve directly for the change in the optimal assignment by differentiating the assignment function $\tau_t$ with respect to technology. Under this approach, reallocations also play a prominent role even if not explicit. We introduce the reassignment problem to make this link explicit, and to provide clear expression for the comparative statics. In \Cref{s:largetc}, we also present the direct approach of differentiating the assignment function $\tau_t$ to analyze the change in the assignment function with respect to large changes in technology.} We denote the set of reallocations that satisfy these conditions by $\mathcal{D}$.



\vspace{0.4 cm}
\noindent \textbf{Understanding Labor Reallocation}. The feasibility conditions (\ref{eq:rearr0}) and (\ref{eq:rearr0b}) can be understood in terms of reallocation. The vector $\dot{r}(x)$ describes how the worker type assigned to occupation $\tau(x)$ changes with technological progress. Under the initial technology, worker $x$ holds job $\tau(x)$. As technology changes, that job is taken over by replacement $r_{t}(x)$, so that $\dot{r}(x)$ gives the rate of replacement in occupation $\tau(x)$. Scaling by the density $f(x)$ gives the total replacement rate of worker type $x$ as $\dot{r}(x)f(x)$. The first reallocation condition (\ref{eq:rearr0}) means that the total inflow and outflow must be equal $-$ a worker that is replaced from one occupation must be assigned to another. Since the total replacement rate into and out of each job must balance, the replacement rate does not change at worker $x$, $\nabla \cdot (\dot{r}(x) f(x)) = 0$. 

The second reallocation condition considers the rate of replacement $\dot{r}(x) f(x)$ for all workers on the boundary of the worker type space, $x \in \partial X$. At the boundary, replacements cannot come from outside the support. This means that the replacement rate of workers in the outward direction of the worker type space is equal to zero, or $\dot{r}(x) \cdot n(x) = 0$. This implies that reallocation is parallel to the boundary: workers in the boundary remain in the boundary.\footnote{Workers in the boundary cannot move to the interior as no other worker can replace them when reallocation is smooth.}


\vspace{0.4 cm}
\noindent \textbf{Example: Labor Reallocation as an Infinitesimal Rotation}. In order to build intuition for labor reallocation, consider a setting where the skills distribution is rotationally invariant over a $d$-dimensional sphere.\footnote{Formally, the density $f(x) = \tilde{f} \big(\frac{1}{2} \lVert x \rVert ^2\big)$ only depends on the distance from the origin.} Moreover, consider reallocation that is linear in skills, $\dot{r}(x) = \dot{R} x $. Lemma \ref{l:antisymmetrics} shows that reallocation is feasible when the matrix $\dot{R}$ is antisymmetric, that is, $\dot{R} = - \dot{R}^\top$. 


\begin{lemma}{\textit{Antisymmetric Reallocations}.} \label{l:antisymmetrics}
Suppose the distribution of workers $f$ is rotationally invariant and that reallocation is linear in worker skills, $\dot{r}(x) = \dot{R} x$. If reallocation is feasible, then $\dot{R}$ is antisymmetric.
\end{lemma}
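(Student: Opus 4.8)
The plan is to substitute the linear ansatz $\dot{r}(x) = \dot{R}x$ into the two feasibility conditions (\ref{eq:rearr0}) and (\ref{eq:rearr0b}) and exploit the rotational invariance of $f$. First I would work out the divergence condition. Since $f(x) = \tilde{f}\big(\tfrac12\lVert x\rVert^2\big)$, we have $\nabla f(x) = \tilde{f}'\big(\tfrac12\lVert x\rVert^2\big)\, x$. Expanding $\nabla\cdot(\dot{r}(x)f(x)) = f(x)\,\nabla\cdot\dot{r}(x) + \nabla f(x)\cdot\dot{r}(x)$ and using $\nabla\cdot(\dot{R}x) = \operatorname{tr}\dot{R}$ together with $\nabla f(x)\cdot\dot{R}x = \tilde{f}'\big(\tfrac12\lVert x\rVert^2\big)\, x^\top\dot{R}x$, condition (\ref{eq:rearr0}) becomes
\begin{equation*}
f(x)\operatorname{tr}\dot{R} + \tilde{f}'\big(\tfrac12\lVert x\rVert^2\big)\, x^\top\dot{R}x = 0 \qquad \text{for all } x \in X.
\end{equation*}
Note that $x^\top\dot{R}x = x^\top\big(\tfrac12(\dot{R}+\dot{R}^\top)\big)x$ depends only on the symmetric part $S := \tfrac12(\dot{R}+\dot{R}^\top)$, so this reads $f(x)\operatorname{tr} S + \tilde{f}'(\cdot)\, x^\top S x = 0$.

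Next I would argue that this forces $S = 0$. Averaging the identity over a sphere $\lVert x\rVert = \rho$ of a fixed radius, $x^\top S x$ averages to $\tfrac{\rho^2}{d}\operatorname{tr} S$ (by symmetry of the uniform measure on the sphere), while $f$ and $\tilde{f}'$ are constant on that sphere; this yields $\big(f + \tfrac{\rho^2}{d}\tilde{f}'\big)\operatorname{tr} S = 0$ for every admissible $\rho$. Since $f$ is bounded away from zero, the bracket is strictly positive on at least some sphere (indeed on a set of positive measure of radii), hence $\operatorname{tr} S = 0$. Feeding this back, the divergence condition reduces to $\tilde{f}'(\cdot)\, x^\top S x = 0$; since $f$ is not constant (it is a probability density bounded away from zero and infinity on a bounded set), $\tilde{f}'$ is nonzero on a set of positive measure, forcing $x^\top S x = 0$ on an open set and hence $S = 0$, i.e. $\dot{R} = -\dot{R}^\top$ is antisymmetric. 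An alternative, cleaner route avoiding the averaging step: since $f(x)$ and $x^\top S x$ are the only $x$-dependent quantities after fixing $\operatorname{tr} S$, and one is radial while the other is a nontrivial quadratic form unless $S$ is a multiple of the identity, one concludes $S = c\,I$ with $c\operatorname{tr} I = c d$; combined with $f(x)\, cd + \tilde{f}'(\cdot)\, c\lVert x\rVert^2 = 0$ and the fact that $f + \tfrac{\lVert x\rVert^2}{d}\tilde{f}'$ cannot vanish identically (it integrates, up to the divergence theorem, against the positive density), one gets $c = 0$.

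Finally I would check consistency with the boundary condition (\ref{eq:rearr0b}). On $\partial X$, for $X$ a ball centered at the origin, the outward normal at $x$ is $n(x) = x/\lVert x\rVert$, so $\dot{r}(x)\cdot n(x) = \tfrac{1}{\lVert x\rVert}x^\top\dot{R}x = \tfrac{1}{\lVert x\rVert}x^\top S x$, which vanishes identically precisely when $S = 0$ — so the antisymmetry derived from the interior condition is exactly what is needed, and is also directly implied by the boundary condition alone. I would present the divergence-condition argument as the main one and remark that the boundary condition gives an independent confirmation. The main obstacle is the step that upgrades "$f(x)\operatorname{tr} S + \tilde{f}'(\cdot)\,x^\top S x \equiv 0$" to "$S = 0$": it requires care about the regularity and non-degeneracy of $\tilde f$ (that $f$ is genuinely non-constant, so $\tilde f'$ does not vanish identically) and about separating the radial part from the genuinely anisotropic quadratic part of $x^\top S x$. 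The sphere-averaging argument handles the trace cleanly; isolating $S = 0$ from the residual equation is where one must invoke that $f$ is bounded away from zero and infinity on a set with nonempty interior.
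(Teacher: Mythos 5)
Your proposal and the paper's proof of Lemma \ref{l:antisymmetrics} establish different implications. You prove the lemma as literally stated (feasibility of $\dot r(x)=\dot Rx$ forces $\dot R$ antisymmetric), whereas the paper's appendix verifies the converse: it assumes $\dot R$ antisymmetric and checks that conditions (\ref{eq:rearr0}) and (\ref{eq:rearr0b}) hold, which is also the direction in which the lemma is invoked in Section \ref{s:compsylvester}. Within your chosen direction, the computation of the divergence condition is correct, the sphere-averaging step giving $\bigl(\tilde f(\tfrac12\rho^2)+\tfrac{\rho^2}{d}\tilde f'(\tfrac12\rho^2)\bigr)\operatorname{tr}S=0$ is correct, and $\operatorname{tr}S=0$ does follow, since the bracket vanishing for all admissible $\rho$ would force $\tilde f(u)\propto u^{-d/2}$, which is unbounded at the origin and so cannot be a density bounded away from infinity.

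The one genuine flaw is the claim that $f$ cannot be constant because it is a probability density bounded away from zero and infinity on a bounded set: the uniform density on a ball is exactly such a density, is rotationally invariant, and has $\tilde f'\equiv 0$. In that case the interior condition (\ref{eq:rearr0}) reduces to $\operatorname{tr}\dot R=0$ and is satisfied by any traceless symmetric $\dot R$, so the interior argument alone cannot force $S=0$. Fortunately, the boundary argument you relegate to an ``independent confirmation'' is in fact the decisive step: since rotational invariance makes $X$ a ball, (\ref{eq:rearr0b}) gives $x^\top Sx=0$ on the entire sphere $\partial X$, and by homogeneity of the quadratic form this yields $x^\top Sx\equiv 0$ and hence $S=0$. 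Promote that step to the main argument (or at least make it carry the uniform-density case) and your proof of the stated implication is complete; it is then a genuinely different, and complementary, result to the converse that the paper's appendix actually verifies.
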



\noindent The proof is in Appendix \ref{a:antisymmetrics}. 

\begin{figure}[t!]
\begin{center}
\subfigure{\includegraphics[trim=0.0cm 0.0cm 0.0cm 0.0cm, width=0.58\textwidth,height=0.35\textheight,angle=0]{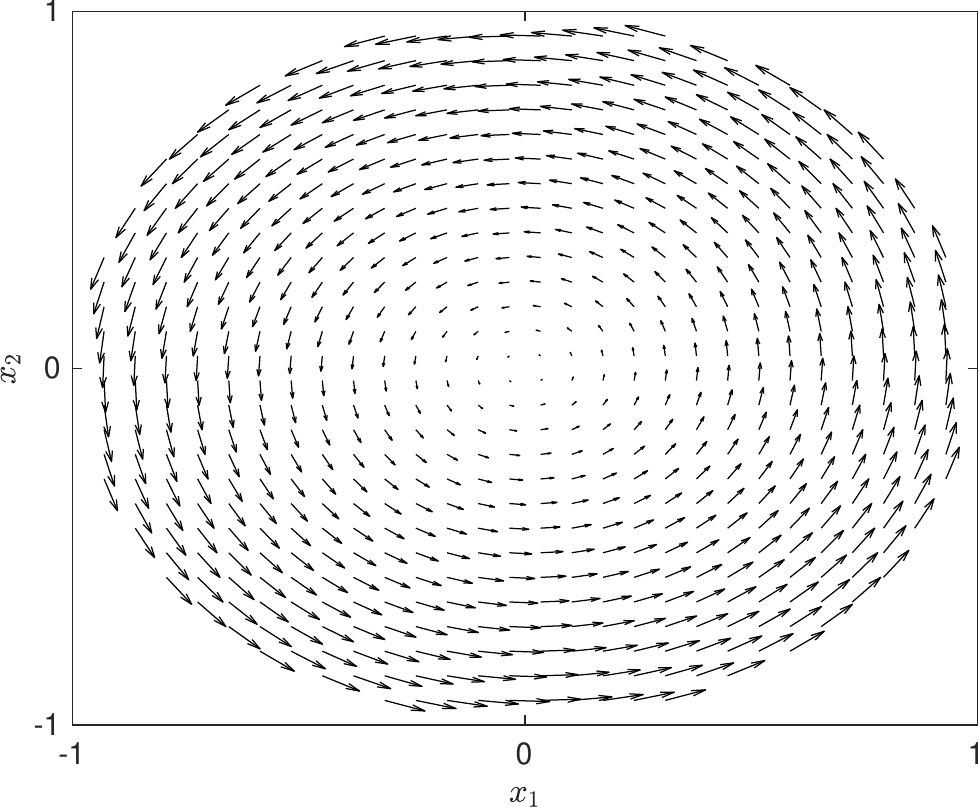}}
\end{center}\vspace{-0.75 cm}
\caption{An Example of Labor Reallocation} \label{f:rotation}
{\scriptsize \vspace{.2 cm} \Cref{f:rotation} shows a counterclockwise example of labor reallocation. The replacement of worker $x = (x_1,x_2)$ is given by $\dot{r}(x) = \theta (- x_2, x_1)$ where $\theta > 0$. For example, the arrow for worker $(1,0)$ points in the direction $(0,1)$. The interpretation is that the worker who replaces worker $(1,0)$ under the new technology comes from the direction $(0,1)$ which implies that the replacement is more skilled in the second dimension.}
\end{figure}

In order to see this more clearly, consider the two-dimensional case. In two dimensions, labor reallocation being antisymmetric corresponds to reallocation being a rotation of the distribution $\dot{R} x$, where:
\begin{equation}
\dot{R} = \theta \Bigg( \begin{matrix} 0\phantom{-} & -1 \\ 1\phantom{-} & \phantom{-}0 \end{matrix} \Bigg) \;. \label{eq:rotationmat}
\end{equation}
When $\theta > 0$, reallocation is counterclockwise as shown in Figure \ref{f:rotation}. For worker $x = (x_1,x_2)$, the change in assignment is given by $\dot{r}(x) = \theta (- x_2 , x_1)$. In Figure \ref{f:rotation}, the arrow for worker $(1,0)$ points upwards in the direction $(0,1)$. That is, the worker who replaces worker $(1,0)$ under the new technology comes from the direction $(0,1)$ $-$ the replacement is more skilled in the second skill dimension. For worker $(-\frac{1}{2},\frac{1}{2})$, the arrow points in the direction $(-1,-1)$ $-$ the replacement of worker $(-\frac{1}{2},\frac{1}{2})$ is less skilled in both dimensions. When $\theta \leq 0$, reallocation is clockwise. The replacement of worker $x$ under the new technology is given by $\dot{r}(x)$. In short, $\dot{r}(x)$ is the change in the worker type that fulfills job $\tau(x)$.

\subsection{Changes in Earnings} \label{s:changeearnings}

We next return to the general case, and discuss how to determine optimal labor reallocation from the earnings condition. Differentiating the earnings condition (\ref{eq:hedonicw}) with respect to technology:
\begin{align}
(\nabla_1 \dot{\mathsf{y}}_t) (r_t(x) , x) + (\nabla^2_{11} \mathsf{y}_t) (r_t(x) , x) \cdot \dot{r}_t(x) = (\nabla_{1} \dot{w}_t) (r_t(x) ) + (\nabla_{11}^2 w_t) (r_t(x) ) \cdot  \dot{r}_t(x) . \label{e:diffhed}
\end{align}
This condition equates changes in marginal output on the left to changes in marginal labor expenses on the right. Each side has two terms. The \textit{direct effects} give the change in marginal output and marginal labor expenses holding fixed the rearrangement. The \textit{reallocation effects} yield the change in marginal output and labor expenses for each job when they are filled by different workers.

\vspace{0.4 cm}
\noindent \textbf{Changes in Marginal Output}. As technology changes, marginal output changes for two reasons, given by the left-hand side of equation (\ref{e:diffhed}). The first is a direct effect: technological change leads to changes in marginal output even if the same worker continues to work the job: 
\begin{equation}
\dot{\mathcal{A}}_t(x) := (\nabla_1 \dot{\mathsf{y}}_t) (r_t(x) , x)
\end{equation}
The second is a reallocation effect: marginal output changes because the firm is hiring a different worker. This effect is given by $(\nabla^2_{11} \mathsf{y}_t) (r_t(x) , x) \cdot \dot{r}_t(x)$, where $\dot{r}_t(x)$ captures the change in the worker that is hired to fulfill the job that was done by worker $x$ at the initial technology.\footnote{Similar to the firm problem yielding the marginal earnings condition (\ref{eq:hedonicw}), the worker problem yields the marginal firm profit condition $(\nabla_1)(v_t(\tau(x))) = (\nabla_2 y_t)(r_t(x),\tau(x))$. By analyzing the sensitivity of this condition with respect to technological change, $(\nabla_1)(\dot{v}(\tau(x))) = (\nabla_2 \dot{y})(x,\tau(x)) + (\nabla^2_{12} \dot{y})(x,\tau(x)) \dot{r}(x)$, we obtain the change in marginal profits given a change in earnings.} This term reflects the curvature (Hessian) of the output function. When curvature is large, small reallocations lead to large changes in marginal output, while reallocations have little effect on marginal output when curvature is small. The curvature is zero for the bilinear output function.

\vspace{0.4 cm}
\noindent \textbf{Changes in Marginal Labor Expenses}. As technology changes, the marginal expense of the firm on their employee also changes, given by the right-hand side of equation (\ref{e:diffhed}). As with the change in the marginal output, there are two effects on the marginal expense of the firm. The first is a direct effect: even if the same worker stayed in the job, marginal earnings directly respond to technological change. This effect is given by $(\nabla \dot{w}_t) (r_t(x))$. The second is a reallocation effect, the marginal expense of the firm changes because they employ a different worker for the position: when jobs are filled by different workers, the marginal expenses change through the curvature of the earnings function. This effect is given by $(\nabla^{2}w_{t})(r_{t}(x)) \cdot \dot{r}_{t}(x)$. Curvature plays the same role as with marginal output: it measures how marginal earnings change around a worker type. When curvature is large, marginal earnings change significantly when a worker is replaced by a nearby type.


\vspace{0.4 cm}
\noindent \textbf{Distinct Margins of Adjustment}. We next analyze the change in the earnings condition (\ref{e:diffhed}) at the initial technology $t=0$.\footnote{Since the initial reallocation is the identity, $r(x) = x$, the derivative with respect to the reallocation is equal to the derivative with respect to the worker's skill $x$, or $\nabla_1 = \nabla$.} In this case, the change in the earnings condition (\ref{e:diffhed}) is:
\begin{align}
\dot{\mathcal{A}}(x) + (\nabla^2_{11} \mathsf{y}) (x , x) \cdot \dot{r}(x) = (\nabla \dot{w}) (x ) + (\nabla^2 w) (x ) \cdot \dot{r}(x) \label{e:diffhed0}
\end{align}
Equation (\ref{e:diffhed0}) shows that two margins adjust in response to technological change. The first is the marginal earnings change $\nabla \dot{w}$, the second is the reallocation of labor $\dot{r}$. 

The two margins of adjustment are distinct in the following precise sense. The change in marginal earnings $\nabla \dot{w}$ is the gradient of a scalar function, specifically, of the change in earnings $\dot{w}$. We denote the space of vector fields which are gradients by $\mathcal{G}$. Reallocations, in contrast, belong to the space of vector fields which are divergence free $\mathcal{D}$. A key property is that these two spaces are orthogonal.\footnote{Specifically, the two spaces are orthogonal with respect to the inner product, $\langle a, b \rangle = \int a(x) b(x) f(x)dx = 0$ for every $a \in \mathcal{G}$ and $b \in \mathcal{D}$. In order to see this, note that the gradient $(\nabla w) \in \mathcal{G}$ and reallocation $\dot{r} \in \mathcal{D}$. Integrating by parts yields $\int_X (\nabla w)(x) \dot{r}(x) f(x) dx = \int_{\partial X} w(x) \dot{r}(x) f(x) \cdot n(x) dx - \int_X w(x) \nabla \cdot (\dot{r}(x) f(x)) dx = 0$ where the final equality comes from the reallocation restrictions (\ref{eq:rearr0}) and (\ref{eq:rearr0b}).} Orthogonality implies that technological change affects equilibrium along two distinct margins: changes in marginal earnings are a gradient and reallocation is divergence-free. 


\vspace{0.4 cm}
\noindent \textbf{Complementarity}. In equation (\ref{e:diffhed0}), reallocation affects changes in marginal output and marginal labor expenses. We equivalently can combine these two effects into one $-$ the complementarity between a worker and the replacement worker. We define \textit{complementarity} as:
\begin{equation}
\mathcal{C}(x) := (\nabla_{12}^2 \mathsf{y}) (x , x) . \label{eq:C0}
\end{equation}
Complementarity at skills $x$ is given by the matrix of cross-derivatives in the production function. Diagonal elements of the complementarity matrix capture complementarities within skills. They measure how the marginal product of a given skill for the replacement worker changes as the job changes to one that is fulfilled by a worker with a marginally higher level of that same skill under the initial output function. Non-diagonal terms capture complementarities between different skills. They measure how marginal output of a given skill for the replacement worker changes as the job is changed to one that is fulfilled by a worker with a marginally higher level of a different skill under the initial technology. 

In order to see the combined effect of reallocation on the change in the earnings condition (\ref{e:diffhed0}), differentiate the marginal earnings condition (\ref{eq:hedonicw}) at the initial technology $( (\nabla_1 \mathsf{y}) (x , x) = (\nabla w) (x ) )$ with respect to worker skill $x$: 
\begin{equation}
(\nabla_{11}^2 \mathsf{y}) (x , x) - (\nabla^2 w) (x ) = - \; \mathcal{C}(x) .\label{eq:difffocwrtx}
\end{equation}
The symmetric left-hand side captures the second-order loss in profits with respect to changes in the worker type employed at the firm.\footnote{Since the cross-derivatives of earnings $\nabla^2 w$ and the technology $\nabla_{11}^2 \mathsf{y}$ are symmetric, $\mathcal{C}$ is symmetric.} When the left-hand side is more negative semidefinite, profits decrease more when moving away from the optimum worker. The second-order condition to the firm problem implies the left-hand side in equation (\ref{eq:difffocwrtx}) is negative semidefinite, implying $\mathcal{C}(x)$ is positive semidefinite for every $x$. In fact, it is positive definite under our assumptions.\footnote{See, e.g., identity (12.2) in \citet{Villani:2009}. The right-hand side of the identity is strictly positive by the twist condition, because the optimal assignment is invertible, and since $f$ and $g$ are strictly positive. The left-hand side, the determinant of the complementarity matrix, is therefore also strictly positive.} This shows that $\mathcal{C}_f(x) := f(x) \mathcal{C}^{-1}(x)$ is also symmetric and positive definite. Equation (\ref{eq:difffocwrtx}) shows that profits decrease more when complementarity is large. In sum, the change in the marginal earnings condition (\ref{e:diffhed}) is given by:
\begin{align}
\dot{\mathcal{A}}(x) = (\nabla \dot{w}) (x ) +  \mathcal{C}(x)  \cdot \dot{r}(x) .  \label{e:equation2}
\end{align}
 
\subsection{Solving for Reallocation and Earnings Changes}  \label{s:solvingrealloc}
 
The previous sections establish two results. First, labor market clearing restricts reallocation to be divergence-free and parallel to the boundary (\Cref{s:feasiblereall}). Second, marginal earnings changes are a gradient, and the firm profit maximizing condition connects marginal earnings changes to reallocation (\Cref{s:changeearnings}). This structure allows us to represent equilibrium adjustments along two distinct margins $-$ earnings and reallocation $-$ that are orthogonal to each other. This section completely characterizes comparative statics for multidimensional assignment models.
 
The marginal earnings condition (\ref{e:equation2}) has two unknowns: earnings change $\dot{w}$ and reallocation $\dot{r}$. In order to characterize these unknowns, we show that technological change $\dot{\mathcal{A}}$ must be decomposed into two distinct components: a gradient that captures changes in marginal earnings, and a divergence-free reallocation. The relative relevance of these two components for each worker type is governed by the complementarity matrix. 

 
\vspace{0.4 cm}
\noindent We next establish the main theoretical results of the paper.

\begin{theorem}{\textit{Structure of Comparative Statics: the Helmholtz Decomposition of Technological Change}.} \label{eq:helmholtz}
For every technological change $\dot{\mathcal{A}}$, there exists a unique decomposition $\dot{\mathcal{A}}(x) = (\nabla \dot{w}) (x ) + \mathcal{C}(x) \cdot \dot{r}(x)$ into marginal earnings changes $(\nabla \dot{w}) \in \mathcal{G}$ and labor reallocation $\dot{r} \in \mathcal{D}$. 
\end{theorem}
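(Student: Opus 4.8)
The plan is to convert the stated decomposition into an existence-and-uniqueness statement for a linear elliptic boundary value problem for $\dot{w}$, and then read off $\dot{r}$. Membership $\nabla\dot{w}\in\mathcal{G}$ is automatic once $\dot{w}$ is a scalar function, so the only real constraints are on $\dot{r}$. Any candidate decomposition forces $\dot{r}=\mathcal{C}^{-1}(\dot{\mathcal{A}}-\nabla\dot{w})$, which is well defined since $\mathcal{C}$ is positive definite. By the feasibility conditions \eqref{eq:rearr0}--\eqref{eq:rearr0b}, $\dot{r}\in\mathcal{D}$ is equivalent to $\nabla\cdot(f\dot{r})=0$ in $X$ and $f\dot{r}\cdot n=0$ on $\partial X$. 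Writing $\mathcal{C}_f:=f\,\mathcal{C}^{-1}=\mathcal{C}_f(x)$, which is symmetric and positive definite by the discussion following \eqref{eq:difffocwrtx}, we have $f\dot{r}=\mathcal{C}_f(\dot{\mathcal{A}}-\nabla\dot{w})$, so these two conditions become the Neumann problem $\nabla\cdot(\mathcal{C}_f\nabla\dot{w})=\nabla\cdot(\mathcal{C}_f\dot{\mathcal{A}})$ in $X$ with $(\mathcal{C}_f\nabla\dot{w})\cdot n=(\mathcal{C}_f\dot{\mathcal{A}})\cdot n$ on $\partial X$. Thus the theorem is equivalent to: this Neumann problem has a solution $\dot{w}$, unique up to an additive constant (which is irrelevant since only $\nabla\dot{w}$ enters).

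\textbf{Existence.} The operator $w\mapsto\nabla\cdot(\mathcal{C}_f\nabla w)$ is uniformly elliptic: since $X$ is compact, $f$ is bounded away from $0$ and $\infty$, and $\mathcal{C}$ is continuous and (strictly) positive definite — strict positivity coming from the second-order condition \eqref{eq:difffocwrtx} together with the Villani identity cited there — the eigenvalues of $\mathcal{C}_f$ are bounded above and below by positive constants. The compatibility condition $\int_X\nabla\cdot(\mathcal{C}_f\dot{\mathcal{A}})\,dx=\int_{\partial X}(\mathcal{C}_f\dot{\mathcal{A}})\cdot n\,dx$ needed for solvability of the Neumann problem holds automatically by the divergence theorem. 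Hence, by the Lax--Milgram theorem applied to the coercive bilinear form $(u,v)\mapsto\int_X\nabla u\cdot\mathcal{C}_f\nabla v\,dx$ on $H^1(X)/\mathbb{R}$ (coercivity via Poincar\'e--Wirtinger), a weak solution $\dot{w}$ exists; equivalently, $\dot{w}$ is the minimizer of the convex coercive ``gradient regression'' functional $w\mapsto\tfrac12\int_X(\dot{\mathcal{A}}-\nabla w)^\top\mathcal{C}^{-1}(\dot{\mathcal{A}}-\nabla w)\,f\,dx$, which ties this construction to \Cref{t:alternchar}. Given the analytic boundary and analytic coefficients $f$ and $\mathcal{C}$, standard elliptic regularity upgrades $\dot{w}$ to an analytic function up to $\partial X$, which legitimizes the integrations by parts above and guarantees $\dot{r}:=\mathcal{C}^{-1}(\dot{\mathcal{A}}-\nabla\dot{w})$ is smooth and satisfies the interior and boundary conditions pointwise, i.e. $\dot{r}\in\mathcal{D}$. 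By construction $\dot{\mathcal{A}}=\nabla\dot{w}+\mathcal{C}\dot{r}$.

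\textbf{Uniqueness.} Suppose $\nabla\dot{w}_1+\mathcal{C}\dot{r}_1=\nabla\dot{w}_2+\mathcal{C}\dot{r}_2$ with $\nabla\dot{w}_i\in\mathcal{G}$ and $\dot{r}_i\in\mathcal{D}$. Then $\nabla(\dot{w}_1-\dot{w}_2)=\mathcal{C}(\dot{r}_2-\dot{r}_1)$. Taking the inner product $\langle\cdot,\cdot\rangle$ of both sides with $\dot{r}_2-\dot{r}_1\in\mathcal{D}$, the left-hand side vanishes by the orthogonality $\mathcal{G}\perp\mathcal{D}$ established in \Cref{s:changeearnings}, leaving $\int_X(\dot{r}_2-\dot{r}_1)^\top\mathcal{C}(x)(\dot{r}_2-\dot{r}_1)\,f(x)\,dx=0$; positive definiteness of $\mathcal{C}$ forces $\dot{r}_1=\dot{r}_2$, and then $\nabla(\dot{w}_1-\dot{w}_2)=0$, so $\dot{w}_1$ and $\dot{w}_2$ differ by a constant.

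\textbf{Main obstacle.} The substantive step is the existence step: producing a solution to the weighted elliptic Neumann problem on a bounded domain and upgrading it to a classical (indeed analytic) solution, so that the formal divergence/boundary manipulations are valid and the constructed $\dot{r}$ genuinely lies in $\mathcal{D}$. This rests entirely on the standing assumptions — analytic strictly convex boundary, analytic density $f$ bounded away from $0$ and $\infty$, and analytic positive-definite complementarity $\mathcal{C}$ — which deliver uniform ellipticity, the automatic compatibility condition, and analytic regularity up to the boundary. The reduction and the uniqueness argument are then straightforward orthogonality bookkeeping.
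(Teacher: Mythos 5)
Your proof is correct and takes essentially the same route as the paper: uniqueness via the orthogonality of $\mathcal{G}$ and $\mathcal{D}$ combined with positive definiteness of $\mathcal{C}$ (as in Appendix \ref{a:helmholtz}), and existence by reducing the decomposition to the weighted Neumann/Poisson problem of \Cref{eq:changeinmearnings}, where the paper likewise verifies the compatibility condition and appeals to standard elliptic theory (Appendix \ref{a:poissoneq}). The only difference is that you spell out the Lax--Milgram and elliptic-regularity details that the paper delegates to a citation of \citet{Evans:2022}.
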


\noindent Theorem \ref{eq:helmholtz} shows that any technological change is uniquely decomposed into two orthogonal components: a gradient component which corresponds to earnings changes, and a divergence-free component which corresponds to reallocation. This is the Helmholtz decomposition which states that any vector field can be written as the sum of a gradient component and divergence-free component (see, e.g., \citet{Arfken:2011}).\footnote{The Helmholtz decomposition is known to be unique. For completeness, we present a proof in Appendix \ref{a:helmholtz}.}


We next show how the Helmholtz decomposition of technological progress is obtained. Specifically, we show that the Poisson equation gives a direct way to characterize earnings changes. Once worker earnings are solved for, reallocation follows residually.

\begin{theorem}{\textit{Implementation of Comparative Statics: the Poisson Equation}.} \label{eq:changeinmearnings}
For any technological change $\dot{\mathcal{A}}$, the change in earnings $\dot{w}$ is uniquely determined as the solution to a Poisson equation:
\begin{align}
\nabla \cdot \big( \mathcal{C}_f(x) \nabla \dot{w} (x ) \big) & = \nabla \cdot \big( \mathcal{C}_f(x)  \dot{\mathcal{A}} (x) \big) \hspace{2.84 cm} \text{ for  $x \in X$} \label{e:csgen1} \\
\big( \mathcal{C}_f(x) \dot{\mathcal{A}}(x) \big) \cdot n(x) & = \big(  \mathcal{C}_f (x) \nabla \dot{w} (x ) \big) \cdot n(x)  \hspace{2.05 cm} \text{ for $x \in \partial X$} \label{e:csgen2} 
\end{align}
Given the unique characterization of the change in marginal earnings, the optimal reallocation is uniquely given using equation (\ref{e:equation2}).
\end{theorem}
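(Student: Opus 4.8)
The plan is to show that the Poisson problem~(\ref{e:csgen1})--(\ref{e:csgen2}) is nothing but a restatement of the feasibility conditions for the residual reallocation implied by~(\ref{e:equation2}), and then to invoke the standard theory of the Neumann problem for a uniformly elliptic divergence-form operator. First I would rearrange~(\ref{e:equation2}): since $\mathcal{C}(x)$ is symmetric positive definite, hence invertible, by the discussion preceding the theorem, any candidate earnings change $\dot{w}$ pins down the reallocation residually as $\dot{r}(x)=\mathcal{C}^{-1}(x)(\dot{\mathcal{A}}(x)-\nabla\dot{w}(x))$, so that $f(x)\dot{r}(x)=\mathcal{C}_f(x)(\dot{\mathcal{A}}(x)-\nabla\dot{w}(x))$. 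Imposing the interior feasibility condition~(\ref{eq:rearr0}), $\nabla\cdot(f\dot{r})=0$, is then literally~(\ref{e:csgen1}), and the boundary feasibility condition~(\ref{eq:rearr0b}), $f\dot{r}\cdot n=0$, is literally~(\ref{e:csgen2}). Hence $\dot{w}$ solves the Poisson problem if and only if the associated residual $\dot{r}$ belongs to $\mathcal{D}$.

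Next I would establish existence, uniqueness, and smoothness for the Poisson problem. The operator $L u:=\nabla\cdot(\mathcal{C}_f(x)\nabla u)$ is uniformly elliptic in divergence form: $\mathcal{C}_f=f\,\mathcal{C}^{-1}$ is symmetric, and its eigenvalues are bounded above and below by positive constants, because $f$ is bounded away from $0$ and $\infty$ and $\mathcal{C}(x)$ is continuous and positive definite on the compact type space $X$ (so $\mathcal{C}^{-1}$ is uniformly bounded, using $\det\mathcal{C}>0$). The compatibility condition for the Neumann problem holds automatically: by the divergence theorem, $\int_X\nabla\cdot(\mathcal{C}_f\dot{\mathcal{A}})\,dx=\int_{\partial X}(\mathcal{C}_f\dot{\mathcal{A}})\cdot n\,dx$, so the interior source and the Neumann data are consistent. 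Existence and uniqueness up to an additive constant then follow from the Lax--Milgram theorem applied to the weak formulation on $H^1(X)/\mathbb{R}$: the bilinear form $a(u,\phi)=\int_X(\mathcal{C}_f\nabla u)\cdot\nabla\phi\,dx$ is bounded and, by uniform ellipticity together with the Poincar\'e--Wirtinger inequality, coercive, while $\phi\mapsto\int_X(\mathcal{C}_f\dot{\mathcal{A}})\cdot\nabla\phi\,dx$ is a bounded linear functional. Smoothness of $\dot{w}$ follows from interior and boundary elliptic regularity, and analyticity from analytic elliptic regularity, since the coefficients, the data $\dot{\mathcal{A}}$, and $\partial X$ are analytic. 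The additive constant is immaterial since only $\nabla\dot{w}$ enters~(\ref{e:equation2}); a normalization (for instance via the resource constraint) can fix it if desired.

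Finally I would close the loop with Theorem~\ref{eq:helmholtz}. Let $\dot{w}$ solve~(\ref{e:csgen1})--(\ref{e:csgen2}) and set $\dot{r}:=\mathcal{C}^{-1}(\dot{\mathcal{A}}-\nabla\dot{w})$. By construction $\dot{\mathcal{A}}=\nabla\dot{w}+\mathcal{C}\dot{r}$ with $\nabla\dot{w}\in\mathcal{G}$, and by the equivalence established in the first step the Poisson equation and its boundary condition give exactly $\nabla\cdot(f\dot{r})=0$ in $X$ and $f\dot{r}\cdot n=0$ on $\partial X$, i.e., $\dot{r}\in\mathcal{D}$. Thus $(\nabla\dot{w},\dot{r})$ is a Helmholtz decomposition of $\dot{\mathcal{A}}$, which is unique by Theorem~\ref{eq:helmholtz}; hence $\nabla\dot{w}$, and therefore $\dot{w}$ up to a constant, is uniquely determined, and $\dot{r}=\mathcal{C}^{-1}(\dot{\mathcal{A}}-\nabla\dot{w})$ is the unique optimal reallocation. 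Conversely, the equilibrium earnings change furnished by Theorem~\ref{eq:helmholtz} does solve the Poisson problem, again by the equivalence of the first step, so the characterization is exact.

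The main obstacle is the middle step: marshalling the hypotheses of the Neumann elliptic theory — uniform ellipticity of $\mathcal{C}_f$ (which rests on the uniform bound for $\mathcal{C}^{-1}$, itself a consequence of positive definiteness of $\mathcal{C}$ on the compact type space), the automatic compatibility condition, and the boundary regularity needed to upgrade the weak solution to a classical, indeed analytic, one using the analyticity of $\partial X$. By contrast, the algebraic rearrangement in the first step and the appeal to Theorem~\ref{eq:helmholtz} in the last step are routine.
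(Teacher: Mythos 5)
Your proposal is correct and follows essentially the same route as the paper: you rearrange~(\ref{e:equation2}) into $f\dot{r}=\mathcal{C}_f(\dot{\mathcal{A}}-\nabla\dot{w})$, impose the feasibility conditions~(\ref{eq:rearr0})--(\ref{eq:rearr0b}) to obtain the Neumann problem~(\ref{e:csgen1})--(\ref{e:csgen2}), and then invoke standard uniformly elliptic theory (the paper cites \citet{Evans:2022} where you spell out Lax--Milgram and the compatibility condition, and its uniqueness argument is the same energy estimate your coercivity step encodes). Your explicit remark that $\dot{w}$ is determined only up to an additive constant, which is immaterial since only $\nabla\dot{w}$ enters~(\ref{e:equation2}), is a small but welcome precision relative to the paper's statement.
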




\vspace{0.1 cm}
\noindent In order to prove Theorem \ref{eq:changeinmearnings}, we first rewrite the change in marginal earnings in equation (\ref{e:equation2}) by multiplying by the inverse of the complementarity matrix, and the density of worker skills $f$:
\begin{align*}
f(x) \dot{r}(x) = \mathcal{C}_f(x) \dot{\mathcal{A}}(x) - \mathcal{C}_f(x) \nabla \dot{w} (x) \;.  
\end{align*}
Recall that we established that $\dot{r}(x) f(x)$ is divergence free and parallel to the boundary. By taking divergences with respect to the previous equation, and inner products with respect to the outward normal vector at the boundary, we obtain the Poisson equation for the change in earnings $\dot{w}$ establishing Theorem \ref{eq:changeinmearnings}.\footnote{For the paper to be self-contained, we present well-known results on existence and uniqueness of the solution to the Poisson equation in Appendix \ref{a:poissoneq}.} By solving for the reallocation $\dot{r}(x)$, we characterize the worker that replaces $x$ under technology $t$ as $r_t(x) \approx r(x) + t \dot{r}(x) = x + t \dot{r}(x)$. 


\vspace{0.4 cm}
\noindent We now provide an alternative characterization of earnings changes as a solution to a quadratic optimization problem. This characterization enables our numerical analysis in Section \ref{s:quant}.

\begin{theorem}{\textit{Characterization of Earnings Changes}.} \label{t:alternchar}
The earnings change $\dot{w}$ solves the following quadratic minimization problem:
\begin{equation}
\min_{\varphi} \int ( \dot{\mathcal{A}}(x) - \nabla \varphi(x) )^\top \mathcal{C}_f(x) ( \dot{\mathcal{A}}(x) - \nabla \varphi(x) ) dx  \label{eq:quadproblem}
\end{equation}
That is, $\nabla \dot{w} = \nabla \varphi^*$. Given the change in marginal earnings, optimal reallocation is given using equation (\ref{e:equation2}).
\end{theorem}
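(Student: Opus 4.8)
The plan is to recognize the quadratic minimization \eqref{eq:quadproblem} as the variational (weak) formulation of the Poisson equation established in Theorem \ref{eq:changeinmearnings}. First I would note that the objective $J(\varphi) := \int (\dot{\mathcal{A}} - \nabla\varphi)^\top \mathcal{C}_f (\dot{\mathcal{A}} - \nabla\varphi)\, dx$ is a convex quadratic functional of $\nabla\varphi$, since $\mathcal{C}_f(x) = f(x)\mathcal{C}^{-1}(x)$ is symmetric positive definite for every $x$ (as established just before \eqref{e:equation2}), and it is bounded below by zero. Hence a minimizer exists among (say, $H^1$) functions $\varphi$, and it is unique up to an additive constant because $J$ is strictly convex in $\nabla\varphi$ and the $\mathcal{C}_f$-weighted Dirichlet form controls $\|\nabla\varphi\|_{L^2}$ by the uniform bounds on $\mathcal{C}_f$.

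Next I would compute the first-order (Euler--Lagrange) condition. For an arbitrary smooth test function $\psi$, differentiating $J(\varphi + s\psi)$ at $s = 0$ gives
\begin{equation*}
0 = -2 \int (\dot{\mathcal{A}}(x) - \nabla\varphi(x))^\top \mathcal{C}_f(x)\, \nabla\psi(x)\, dx .
\end{equation*}
Integrating by parts, this is equivalent to
\begin{equation*}
\int_X \psi(x)\, \nabla\cdot\!\big(\mathcal{C}_f(x)(\dot{\mathcal{A}}(x) - \nabla\varphi(x))\big)\, dx \;-\; \int_{\partial X} \psi(x)\, \big(\mathcal{C}_f(x)(\dot{\mathcal{A}}(x) - \nabla\varphi(x))\big)\cdot n(x)\, dx = 0 .
\end{equation*}
Since $\psi$ is arbitrary, the interior integral forces $\nabla\cdot(\mathcal{C}_f \nabla\varphi) = \nabla\cdot(\mathcal{C}_f \dot{\mathcal{A}})$ on $X$, and the boundary integral forces $(\mathcal{C}_f \nabla\varphi)\cdot n = (\mathcal{C}_f \dot{\mathcal{A}})\cdot n$ on $\partial X$ — exactly the Poisson system \eqref{e:csgen1}--\eqref{e:csgen2}. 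By Theorem \ref{eq:changeinmearnings} this system has a unique solution up to a constant, namely $\dot{w}$; therefore $\nabla\varphi^* = \nabla\dot{w}$, and the residual characterization of reallocation via \eqref{e:equation2} is inherited directly from that theorem.

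The main obstacle is purely the functional-analytic bookkeeping: one must specify the function class over which the minimization is taken so that the minimizer exists and the integration by parts is licensed (the Lax--Milgram / direct-method argument on the Neumann-type bilinear form $a(\varphi,\psi) = \int \nabla\varphi^\top \mathcal{C}_f \nabla\psi\, dx$ on $H^1(X)/\text{constants}$), and one must confirm that the weak solution thus obtained coincides with the smooth solution $\dot{w}$ asserted in Theorem \ref{eq:changeinmearnings} — regularity here follows from analyticity of $f$, $\mathcal{C}$, and $\dot{\mathcal{A}}$, which the paper has already invoked. All of this is standard; I would relegate it to the appendix, citing the same Poisson-equation existence/uniqueness facts referenced for Theorem \ref{eq:changeinmearnings}. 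The one genuinely substantive point to state cleanly is the equivalence "minimizer of $J$ $\iff$ weak solution of the weighted Neumann problem," which is the content of the Euler--Lagrange computation above.
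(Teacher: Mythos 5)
Your proposal is correct and follows essentially the same route as the paper's Appendix \ref{a:alternativechar}: a first variation of the quadratic objective, integration by parts to recover the interior and boundary conditions (\ref{e:csgen1})--(\ref{e:csgen2}), and an appeal to the uniqueness in Theorem \ref{eq:changeinmearnings} to conclude $\nabla\varphi^*=\nabla\dot{w}$. Your added remarks on existence of the minimizer and the weak/strong-solution identification go slightly beyond what the paper writes down, but they do not change the argument.
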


\noindent Equilibrium reallocation $\dot{r}$ and earnings changes $\dot{w}$ are characterized by finding $\varphi$ that solves the quadratic problem (\ref{eq:quadproblem}). Intuitively, since the Poisson equation is linear in marginal earnings, we find an optimization problem that is quadratic with respect to marginal earnings that delivers the Poisson equation as an optimality condition. The proof is in Appendix \ref{a:alternativechar}. 

The quadratic minimization problem for earnings (\ref{eq:quadproblem}) in Theorem \ref{t:alternchar} can be interpreted as the \textit{gradient regression of technological change}. The objective minimizes the sum of squared differences between technological change and a gradient. Rather than fitting a line (as in the case of linear regression), or any other particular function (as in the case of nonlinear regression), the function is restricted to be a gradient. In other words, the gradient regression gives the best gradient fit. 


Theorem \ref{t:alternchar} is also useful to further describe the role of complementarities and the worker skill distribution in the passthrough of technological change to worker earnings and allocations. Weak complementarities tilt equilibrium adjustments toward reallocation since firm output is not sensitive to changes in the worker type at the firm, whereas strong complementarities tilt equilibrium adjustments toward earnings changes. Regions of the worker type space with many workers have strong earnings responses, while regions with few workers feature more reallocation as reallocation induces a smaller loss in the quadratic objective (\ref{eq:quadproblem}). When workers are abundant, residuals are costly and marginal earnings adjusts more strongly to reflect technological change. In sum, common worker types experience earnings responses, while rare worker types experience reallocation.

\vspace{0.5 cm}
\noindent \textbf{Discussion}. Theorems \ref{eq:helmholtz} and \ref{eq:changeinmearnings} provide both the structure and the solution to comparative statics for multidimensional sorting models. Theorem \ref{eq:helmholtz} shows that the structure of the comparative statics is given by the Helmholtz decomposition: technological change separates into two orthogonal, and thus distinct components, an earnings gradient and a divergence-free reallocation. Theorem \ref{eq:changeinmearnings} shows this decomposition is implemented through a Poisson equation. The earnings change is the solution to the Poisson equation, with reallocations determined residually. These results provide a full characterization of equilibrium comparative statics for the multidimensional assignment model for an arbitrary change in technology. In Appendix \ref{a:distributionalch}, we show that we can extend our characterization of comparative statics in Theorems \ref{eq:helmholtz} and \ref{eq:changeinmearnings} to incorporate changes in the distributions of workers and jobs. 

\subsection{Symmetric and Antisymmetric Technological Change} \label{s:symmetry}

This section develops the second main result of the paper. We show that the symmetry properties of technological change determine whether earnings, allocations or both change in equilibrium. When technological change is symmetric, there is a full passthrough to earnings. When technological change is antisymmetric, there is only reallocation. In the general case, technological progress results in changes along both margins, with the importance of each margin determined by production complementarities and the distribution of worker skills.

\vspace{0.4 cm}
\noindent \textbf{Symmetric Technological Change: Full Passthrough into Earnings}. First, suppose that technological change is symmetric, that is, $\frac{\partial \dot{\mathcal{A}}_i}{\partial x_j}(x) = \frac{\partial \dot{\mathcal{A}}_j}{\partial x_i}(x)$ for all $i,j$ and for every $x$.

\begin{proposition}{\textit{Symmetric Technological Change $-$ Full Passthrough into Worker Earnings}}. \label{p:symmetricse}
If technological change $\dot{\mathcal{A}}$ is symmetric then there is no reallocation and technological change fully passes through to earnings.
\end{proposition}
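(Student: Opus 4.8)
The plan is to show that symmetry of $\dot{\mathcal{A}}$ forces $\dot{\mathcal{A}}$ itself to be a gradient field, and then to invoke the uniqueness of the Helmholtz decomposition established in \Cref{eq:helmholtz}. First I would recall that the worker type space $X$ is convex, hence simply connected. A smooth vector field on a simply connected domain whose Jacobian is symmetric $-$ that is, $\partial \dot{\mathcal{A}}_i/\partial x_j = \partial \dot{\mathcal{A}}_j/\partial x_i$ for all $i,j$ and every $x$ $-$ is curl-free, so by the Poincar\'e lemma it is the gradient of a scalar potential $\psi$: $\dot{\mathcal{A}} = \nabla \psi$ on $X$. The regularity assumed on $y_t$ guarantees $\dot{\mathcal{A}}$ is smooth, so this representation is valid globally on $X$.

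Second, I would observe that $(\nabla \psi,\, 0)$ is a valid instance of the decomposition in \Cref{eq:helmholtz}: the gradient $\nabla\psi$ lies in $\mathcal{G}$, and the zero field is trivially divergence-free and parallel to the boundary, so $0 \in \mathcal{D}$, and indeed $\dot{\mathcal{A}} = \nabla\psi + \mathcal{C}\cdot 0$. Since \Cref{eq:helmholtz} asserts that this decomposition is unique, it must coincide with the equilibrium decomposition $\dot{\mathcal{A}} = \nabla\dot w + \mathcal{C}\dot r$. Hence $\nabla\dot w = \nabla\psi = \dot{\mathcal{A}}$, so technological change passes through fully to marginal earnings, and $\mathcal{C}(x)\dot r(x) = 0$ for every $x$. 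Because $\mathcal{C}(x)$ is positive definite (as shown in \Cref{s:changeearnings}), it is invertible, so $\dot r(x) = 0$ for all $x$: there is no reallocation.

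Alternatively, and equivalently, one can argue directly from the Poisson equation in \Cref{eq:changeinmearnings}: when $\dot{\mathcal{A}} = \nabla\psi$, the function $\dot w = \psi$ solves both the interior equation and the boundary condition trivially, and uniqueness of the Poisson solution up to an additive constant gives $\nabla\dot w = \nabla\psi = \dot{\mathcal{A}}$; substituting into $f(x)\dot r(x) = \mathcal{C}_f(x)\big(\dot{\mathcal{A}}(x) - \nabla\dot w(x)\big)$ then yields $\dot r \equiv 0$.

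The only real content is the first step $-$ that a symmetric Jacobian implies the field is a gradient $-$ which is where convexity (simple connectedness) of $X$ enters; everything after is a direct appeal to the uniqueness already proved in \Cref{eq:helmholtz} (or \Cref{eq:changeinmearnings}) together with positive definiteness of $\mathcal{C}$. I do not expect a genuine obstacle here, but one should verify that the assumed smoothness of $\dot{\mathcal{A}}$ is enough to apply the Poincar\'e lemma globally on $X$, which it is.
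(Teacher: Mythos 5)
Your proposal is correct and follows essentially the same route as the paper: the symmetry of the cross-derivatives implies $\dot{\mathcal{A}}$ is a gradient (the paper's Lemma \ref{l:poincare}, the Poincar\'e lemma on the convex set $X$), and then $(\nabla\psi,0)$ is a valid decomposition which, by the uniqueness in \Cref{eq:helmholtz}, must be the equilibrium one, giving $\nabla\dot w = \dot{\mathcal{A}}$ and $\dot r = 0$. The additional observations $-$ invertibility of $\mathcal{C}$ and the alternative argument via the Poisson equation of \Cref{eq:changeinmearnings} $-$ are consistent with the paper but not needed beyond what it already establishes.
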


\noindent The symmetry condition requires that cross-effects of technological change across different skills coincide. Formally, this means that the technological change is the gradient of a function, that is, there exists a function $\zeta$ such that $\dot{\mathcal{A}} = \nabla \zeta$. In this case, the change in equilibrium is given only by changes in earnings without any reallocation: $\nabla \dot{w} = \nabla \zeta$ and $\dot{r} = 0$ satisfies equation (\ref{e:equation2}). In other words, there is full passthrough of technological change into earnings. The complementarity matrix $\mathcal{C}$ and the skill distribution $f$ are thus irrelevant in this case and matter only when part of the technological change cannot directly be absorbed into earnings. That is, the complementarity matrix and skill distribution only affect how technological change is divided between earnings and reallocation when technological progress is not symmetric.


A classic example of symmetric technological change is Hicks-neutral technological change that scales output by a constant factor, for example, $\mathsf{y}_t(\tilde{x},x) = A_t \mathsf{y}(\tilde{x},x)$ where $A_t > 0$. Technological change scales output of any given worker-job pair by the same constant $A_t$, and hence scales aggregate output by the same constant $A_t$, without changing tradeoffs. It thus follows immediately that the assignment that maximizes aggregate output is identical for all $A_t > 0$. 


\vspace{0.4 cm}
\noindent \textbf{Antisymmetric Technological Change: Full Passthrough into the Assignment}. Suppose next that complementary-weighted technological change $\dot{\mathcal{A}}_{\mathcal{C}} := \mathcal{C}^{-1}\mathcal{A}$ is antisymmetric, meaning that $\frac{\partial \dot{\mathcal{A}}_{\mathcal{C} i}}{\partial x_j}(x) = -\frac{\partial \dot{\mathcal{A}}_{\mathcal{C} j}}{\partial x_i}(x)$ for all $i,j$ and for every $x$. 

\begin{proposition}{\textit{Antisymmetric Technological Change $-$ Full Passthrough into Reallocation}}.
If the complementarity-weighted technological change $\dot{\mathcal{A}}_{\mathcal{C}}$ is antisymmetric, then there are no earnings changes and technological change fully passes through to reallocation.
\end{proposition}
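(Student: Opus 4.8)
The plan is to show that the hypothesis forces the gradient component of the Helmholtz decomposition of $\dot{\mathcal{A}}$ to be trivial, so that by the uniqueness in \Cref{eq:helmholtz} all of the technological change must load on reallocation. I would propose the candidate pair: $\dot{w}$ constant, so $\nabla\dot{w}=0$ (no earnings change), together with $\dot{r}:=\mathcal{C}^{-1}\dot{\mathcal{A}}=\dot{\mathcal{A}}_{\mathcal{C}}$. Substituting into the marginal-earnings equation (\ref{e:equation2}) gives $\nabla\dot{w}+\mathcal{C}\dot{r}=0+\mathcal{C}\mathcal{C}^{-1}\dot{\mathcal{A}}=\dot{\mathcal{A}}$, so the pair is consistent with (\ref{e:equation2}), and $0\in\mathcal{G}$ trivially. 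By \Cref{eq:helmholtz} it then remains only to verify that $\dot{\mathcal{A}}_{\mathcal{C}}$ is an admissible reallocation, i.e. $\dot{\mathcal{A}}_{\mathcal{C}}\in\mathcal{D}$; uniqueness then identifies the candidate as the equilibrium response. Two equivalent routes lead to the same reduction: via \Cref{eq:changeinmearnings} one shows that the right-hand side data of the Poisson problem (\ref{e:csgen1})$-$(\ref{e:csgen2}), namely $\nabla\cdot(\mathcal{C}_f\dot{\mathcal{A}})$ and $(\mathcal{C}_f\dot{\mathcal{A}})\cdot n$, both vanish, so $\dot{w}$ is constant and $\dot{r}$ is the residual; via \Cref{t:alternchar} one shows $\dot{\mathcal{A}}$ is $\mathcal{C}_f$-orthogonal to every gradient (integrate by parts the cross term in (\ref{eq:quadproblem})), so the minimizer is $\nabla\varphi=0$ since $\mathcal{C}_f$ is positive definite.

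The substantive step common to all routes is the claim: antisymmetry of the Jacobian of $\dot{\mathcal{A}}_{\mathcal{C}}$ implies $\mathcal{C}_f\dot{\mathcal{A}}=f\,\dot{\mathcal{A}}_{\mathcal{C}}$ is divergence-free in $X$ and tangent to $\partial X$. I would first note that the antisymmetry relation at $i=j$ forces every diagonal entry of $\nabla\dot{\mathcal{A}}_{\mathcal{C}}$ to vanish, hence $\nabla\cdot\dot{\mathcal{A}}_{\mathcal{C}}=\operatorname{tr}(\nabla\dot{\mathcal{A}}_{\mathcal{C}})=0$; differentiating the antisymmetry relation and cycling indices (using equality of mixed partials) then shows the Jacobian is in fact constant, so $\dot{\mathcal{A}}_{\mathcal{C}}(x)=c+\dot{R}x$ with $\dot{R}=-\dot{R}^{\top}$, linking back to the linear-rotation structure of \Cref{l:antisymmetrics}. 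It then remains to absorb the density weight, $\nabla\cdot(f\dot{\mathcal{A}}_{\mathcal{C}})=f\,\nabla\cdot\dot{\mathcal{A}}_{\mathcal{C}}+\nabla f\cdot\dot{\mathcal{A}}_{\mathcal{C}}=\nabla f\cdot(c+\dot{R}x)$, and to handle the boundary condition $(c+\dot{R}x)\cdot n(x)=0$ on $\partial X$; both are controlled using the standing assumptions on $f$ and $X$ from \Cref{s:assumptions} (in the rotationally invariant case $\nabla f\parallel x$ and $x\cdot\dot{R}x=0$ by antisymmetry, while the sphere's outward normal is radial, which is precisely the mechanism behind \Cref{l:antisymmetrics}).

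Assembling the pieces: $\dot{\mathcal{A}}_{\mathcal{C}}\in\mathcal{D}$ and $0\in\mathcal{G}$ give the admissible decomposition $\dot{\mathcal{A}}=\nabla\dot{w}+\mathcal{C}\dot{r}$ with $\nabla\dot{w}=0$ and $\dot{r}=\dot{\mathcal{A}}_{\mathcal{C}}$, which by the uniqueness in \Cref{eq:helmholtz} is the equilibrium response: earnings are unchanged and technological change passes entirely into reallocation, with replacements given to first order by $r_t(x)\approx x+t\,\dot{\mathcal{A}}_{\mathcal{C}}(x)$. I expect the feasibility verification of $\dot{\mathcal{A}}_{\mathcal{C}}$ to be the main obstacle: pointwise antisymmetry immediately kills the \emph{unweighted} divergence, but membership in $\mathcal{D}$ is the \emph{density-weighted} divergence-free condition plus a no-flux boundary condition, so the argument must show that both the $\nabla f$ term and the boundary flux of $\dot{\mathcal{A}}_{\mathcal{C}}$ vanish $-$ the step that genuinely uses that $\dot{\mathcal{A}}_{\mathcal{C}}$ is complementarity-\emph{weighted} and that $X$ and $f$ satisfy the geometric assumptions of the model.
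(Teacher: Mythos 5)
Your overall strategy is exactly the paper's: exhibit the candidate decomposition $\dot{\mathcal{A}} = 0 + \mathcal{C}\,\dot{\mathcal{A}}_{\mathcal{C}}$ and invoke the uniqueness in \Cref{eq:helmholtz} (equivalently, observe that the data $\nabla\cdot(\mathcal{C}_f\dot{\mathcal{A}})$ and $(\mathcal{C}_f\dot{\mathcal{A}})\cdot n$ of the Poisson problem in \Cref{eq:changeinmearnings} vanish, so $\dot{w}$ is constant). The paper's own justification is a single sentence asserting that antisymmetry implies $\dot{\mathcal{A}}=\mathcal{C}\rho$ with $\rho f$ divergence-free and parallel to the boundary; you go further by deriving from the pointwise Jacobian antisymmetry, via cycling mixed partials, that $\dot{\mathcal{A}}_{\mathcal{C}}$ is necessarily affine, $c+\dot{R}x$ with $\dot{R}=-\dot{R}^{\top}$. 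That is a correct and useful sharpening the paper does not state, and it links the proposition cleanly to \Cref{l:antisymmetrics}.

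You have also correctly isolated the one step that neither you nor the paper actually proves: membership of $\dot{\mathcal{A}}_{\mathcal{C}}$ in $\mathcal{D}$ concerns the \emph{density-weighted} field $f\dot{\mathcal{A}}_{\mathcal{C}}$ together with a no-flux condition on $\partial X$, whereas pointwise Jacobian antisymmetry only kills the unweighted divergence. For a general analytic density $f$ and a general strictly convex $X$ (the standing assumptions of \Cref{s:assumptions}), $\nabla f\cdot(c+\dot{R}x)$ need not vanish in $X$ and $(c+\dot{R}x)\cdot n(x)$ need not vanish on $\partial X$; your affine-structure computation makes this failure transparent. The argument closes only under the geometry of \Cref{l:antisymmetrics} (rotationally invariant $f$ on a ball, with $c=0$), or if one reads the hypothesis as including $f\dot{\mathcal{A}}_{\mathcal{C}}\in\mathcal{D}$ --- which is how the paper implicitly uses it in the subsequent ``general technological change'' paragraph, where $\rho f$ is \emph{assumed} divergence-free and parallel to the boundary. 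So your proposal matches the paper's route, is at least as complete as the paper's own treatment, and is more candid about where the substantive content lies; to turn it into a proof you must either add that feasibility hypothesis explicitly or restrict to the rotationally invariant setting.
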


\noindent Formally, the antisymmetry condition implies that the technological change $\dot{\mathcal{A}}(x) = \mathcal{C}(x) \rho(x)$, for some $\rho(x)$ such that $\rho(x)f(x)$ is divergence-free and parallel to the boundary. The Helmholtz decomposition of technological progress then results in no change in earnings, and changes sorting as $\dot{r} = \rho$. Complementarities $\mathcal{C}$ and the worker distribution $f$ govern how technological change translates into reallocation for different worker types. In this case, the technological change fully passes through to sorting, leaving earnings unaffected. This result sharply contrasts to the symmetric case, where production complementarities and the worker skill distribution are irrelevant. 

\vspace{0.4 cm}
\noindent \textbf{General Technological Change: Changes in Earnings and Assignments}. The symmetric and antisymmetric cases are the two polar cases that isolate the main forces of equilibrium comparative statics. More generally, technological progress has both symmetric and antisymmetric components and thus changes both earnings and allocations.

Consider a linear combination of the two previous cases and suppose that technological change $\dot{\mathcal{A}}$ can be written as $\dot{\mathcal{A}}(x) = \nabla \zeta (x) + \mathcal{C}(x) \rho(x)$, where $\rho(x)f(x)$ is divergence-free and parallel to the boundary. The marginal earnings condition (\ref{e:equation2}) holds when earnings change as $\dot{w} = \zeta$ and the assignment changes as $\dot{r} = \rho$. Alternatively phrased, we take a linear combination of the previous two cases such that technological change is absorbed both through earnings and the assignment. Theorem \ref{eq:helmholtz} shows that technological change separates uniquely into an earnings adjustment and reallocation and Theorem \ref{eq:changeinmearnings} solves for these changes. The earnings change reflects the symmetric part of technological change, whereas reallocation reflects the antisymmetric part. Production complementarities and the skill distribution govern the extent to which technological change is priced into earnings and how much is absorbed by reallocation.  


An example of such general technological change is automation, which increases the output of cognitive skills in more manual occupations (which now value more programming skills) and decreases the productivity of manual skills in more cognitive occupations (due to the increased automation of manual tasks). This component of automation is antisymmetric: it increases the marginal product of cognitive skills in more manual occupations while it decreases the marginal product of manual skills in more cognitive occupations. In response to automation both earnings and the allocation change and the Helmholtz decomposition dictates by how much.

We next continue developing intuition for the Helmholtz decomposition of technological change. First, by providing a closed-form characterization for bilinear output functions and $d$-dimensional rotationally invariant worker skill distributions (\Cref{s:closedform}). Second, we characterize reallocation and earnings responses to technological change for an empirically-relevant economy (\Cref{s:quant}). 


\section{Bilinear Technology and Sylvester Equation} \label{s:closedform}


We provide a detailed closed-form example to demonstrate \Cref{eq:changeinmearnings} and to highlight the role of complementarities in determining comparative statics. \Cref{s:bilinear} specifies a bilinear production technology. \Cref{s:compsylvester} establishes that the Poisson equation of \Cref{eq:changeinmearnings} reduces to a Sylvester equation. \Cref{s:twodimex} illustrates the results in two dimensions.


\subsection{Bilinear Technology} \label{s:bilinear}

Consider a bilinear technology $\mathsf{y}_t(\tilde{x},x) = \tilde{x}^\top \Sigma_t x$, where $\Sigma_t$ is the complementarity matrix. Output is linear in each input and depends on pairwise complementarities. Diagonal elements capture within-skill complementarity, while off-diagonal elements capture between-skill complementarity. By the profit maximization problem, it follows that $\Sigma$ is symmetric and positive semidefinite.\footnote{The first-order condition to the profit maximization problem (\ref{eq:hedonicw}) is $(\nabla_1 \mathsf{y}) (x , x) = \Sigma x = (\nabla w)(x)$, implying $\Sigma x$ is a gradient of a function. Since the second-order derivatives $(\nabla^2 w)$ are continuous, the mixed derivatives are the same, which implies that $\Sigma$ is symmetric. From the second-order condition to the firm problem $(\nabla^2_{11}  \mathsf{y}) (x , x) - (\nabla^2 w)(x)$ is negative semidefinite implying that $(\nabla^2 w)(x) = \Sigma$ is positive semidefinite since $(\nabla^2_{11}  \mathsf{y}) (x , x)=0$.} Technological change augments the complementarity matrix as $\dot{\Sigma}$. We further assume that the skill distribution is rotationally invariant over a $d$-dimensional sphere.

This setting delivers two results. First, we show that the Helmholtz decomposition reduces to a split of the matrix of technological change $\dot{\Sigma}$ into an antisymmetric and a symmetric component. Second, we show that reallocation is pinned down by a Sylvester equation.



\subsection{Comparative Statics and Sylvester Equation} \label{s:compsylvester}

The Helmholtz decomposition of technological change (\ref{e:equation2}) decomposes technological change into a gradient which corresponds to the change in earnings $(\nabla \dot{w})$ and a divergence-free reallocation $\dot{r}$. With a bilinear technology, the production complementarity in equation (\ref{eq:C0}) is given by $\mathcal{C}(x) = \Sigma$ and technological change is linear $\dot{\mathcal{A}} (x )=\dot{\Sigma}x$. The Helmholtz decomposition of technological change is then given by:
\begin{equation}
\dot{\Sigma} x = \nabla \dot{w} (x) + \Sigma \dot{r}(x)  .\label{eq:intuition0by}
\end{equation}

We suppose that reallocation is linear in worker skills $\dot{r}(x) = \dot{R} x$, where $\dot{R}$ is antisymmetric. This reallocation is feasible by Lemma \ref{l:antisymmetrics}. Since both the technological change and reallocation are linear in worker skills, the change in marginal earnings is also necessarily linear $(\nabla \dot{w}) (x) = \dot{W} x$, where the matrix governing the change in marginal earnings $\dot{W}$ is symmetric. The Helmholtz decomposition (\ref{eq:intuition0by}) becomes $\dot{\Sigma} x = \dot{W} x + \Sigma \dot{R} x$. Since the Helmholtz decomposition is unique, it follows that $\dot{\Sigma} = \dot{W} + \Sigma \dot{R}$. Proposition \ref{p:sylvester} shows that the Helmholtz decomposition can then be fully characterized using the Sylvester equation.\footnote{The Sylvester equation has appeared in a range of macroeconomic models: to solve linear-quadratic dynamic programs \citep{Anderson:1996}, to characterize the stationary covariance matrix of a linear economy \citep{Fernandez:2007}, and recently in \citet{Bilal:2023a} to solve heterogeneous macro models and in \citet{Bilal:2023b} to solve climate change models.} 

\begin{proposition}{\textit{Comparative Statics and Sylvester Equation}}. \label{p:sylvester}
Suppose the technology is bilinear and the worker skill distribution is rotationally invariant on a $d$-dimensional sphere. Then, labor reallocation is determined by $\dot{R}$, which is the unique solution to the Sylvester equation
\begin{align}
\Sigma \dot{R} + \dot{R} \Sigma & = \dot{\Sigma} - \dot{\Sigma}^\top \hspace{0.03 cm}, \label{eq:syl1} \end{align}
and the corresponding change in the slope of marginal earnings is determined as $\dot{W} = \dot{\Sigma} - \Sigma \dot{R}$.
\end{proposition}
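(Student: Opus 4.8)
The plan is to collapse the Helmholtz decomposition (\ref{eq:intuition0by}) to a finite-dimensional matrix equation and then solve that equation by a transpose trick. With the bilinear technology $\mathsf{y}_t(\tilde x,x)=\tilde x^\top\Sigma_t x$ we have, as recorded in the text, $\mathcal{C}(x)=\Sigma$ and $\dot{\mathcal{A}}(x)=\dot\Sigma x$, so \Cref{eq:helmholtz} asserts the existence of a \emph{unique} pair $(\nabla\dot w,\dot r)\in\mathcal{G}\times\mathcal{D}$ with $\dot\Sigma x=\nabla\dot w(x)+\Sigma\dot r(x)$. Since this decomposition is unique, it suffices to exhibit one, and I will search for it inside the class of fields that are linear in $x$. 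Concretely, I posit $\dot r(x)=\dot R x$ and $\nabla\dot w(x)=\dot W x$, where $\dot W x$ is the gradient of the quadratic potential $\tfrac12 x^\top\dot W x$. For $\dot W x$ to lie in $\mathcal{G}$ this potential must be well defined, i.e.\ $\dot W=\dot W^\top$; for $\dot R x$ to lie in $\mathcal{D}$ the matrix $\dot R$ must be antisymmetric, which by \Cref{l:antisymmetrics} (using that $f$ is rotationally invariant on the $d$-sphere) is exactly feasibility (the converse implication is the one-line check $\nabla\cdot(\dot R x\, f(x))=f(x)\operatorname{tr}\dot R+\tilde f'\,x^\top\dot R x=0$ and, on the sphere, $\dot R x\cdot n(x)\propto x^\top\dot R x=0$). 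Substituting the ansatz into (\ref{eq:intuition0by}) and matching coefficients reduces the whole problem to: find $\dot W=\dot W^\top$ and $\dot R=-\dot R^\top$ with $\dot\Sigma=\dot W+\Sigma\dot R$.

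The next step is the transpose trick that produces (\ref{eq:syl1}). Transposing $\dot\Sigma=\dot W+\Sigma\dot R$ and using $\Sigma^\top=\Sigma$, $\dot W^\top=\dot W$, $\dot R^\top=-\dot R$ gives $\dot\Sigma^\top=\dot W-\dot R\Sigma$. Subtracting this from $\dot\Sigma=\dot W+\Sigma\dot R$ eliminates $\dot W$ and yields exactly $\Sigma\dot R+\dot R\Sigma=\dot\Sigma-\dot\Sigma^\top$, the Sylvester equation; reading $\dot W$ off the original equation then gives $\dot W=\dot\Sigma-\Sigma\dot R$.

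It remains to establish existence and uniqueness of $\dot R$ and to verify the two sign conditions are automatic. For existence and uniqueness I will invoke the standard fact that the Sylvester operator $\mathcal{L}(X):=\Sigma X+X\Sigma$ is invertible whenever $\Sigma$ and $-\Sigma$ share no eigenvalue: diagonalizing $\Sigma=Q\Lambda Q^\top$ with $\Lambda=\operatorname{diag}(\lambda_1,\dots,\lambda_d)$, the operator acts on $Q^\top X Q$ entrywise by multiplication by $\lambda_i+\lambda_j$, and since the paper has already shown the complementarity matrix $\mathcal{C}=\Sigma$ is positive definite, every $\lambda_i+\lambda_j\ge 2\lambda_{\min}>0$; hence $\mathcal{L}$ is invertible and $\dot R=\mathcal{L}^{-1}(\dot\Sigma-\dot\Sigma^\top)$ is the unique solution of (\ref{eq:syl1}). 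To see $\dot R$ is antisymmetric, transpose (\ref{eq:syl1}): because the right-hand side $\dot\Sigma-\dot\Sigma^\top$ is antisymmetric, $-\dot R^\top$ also solves (\ref{eq:syl1}), so uniqueness forces $\dot R^\top=-\dot R$. To see $\dot W:=\dot\Sigma-\Sigma\dot R$ is symmetric, compute $\dot W-\dot W^\top=(\dot\Sigma-\dot\Sigma^\top)-(\Sigma\dot R+\dot R\Sigma)=0$ by (\ref{eq:syl1}). Thus the linear fields $\dot r(x)=\dot R x\in\mathcal{D}$ and $\nabla\dot w(x)=\dot W x\in\mathcal{G}$ form a valid Helmholtz decomposition of $\dot\Sigma x$, and by the uniqueness in \Cref{eq:helmholtz} they are \emph{the} equilibrium reallocation and marginal earnings change, which is the assertion of the proposition.

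I expect the only genuinely delicate point to be the logical status of the linear ansatz: a priori the equilibrium reallocation need not be linear in $x$, so the argument must lean on the uniqueness half of \Cref{eq:helmholtz} — once a linear decomposition is exhibited it must coincide with the unique one — rather than assuming linearity at the outset. Everything else (the transpose manipulation, invertibility of $\mathcal{L}$ from positive definiteness of $\Sigma$, and the two symmetry checks) is routine linear algebra. A secondary point worth making explicit is that \Cref{l:antisymmetrics} together with its converse identifies $\mathcal{D}\cap\{\text{fields linear in }x\}$ with precisely $\{\dot R x:\dot R=-\dot R^\top\}$, which is what makes the reduction to the $d\times d$ matrix equation exact rather than merely sufficient.
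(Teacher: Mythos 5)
Your proposal is correct and follows essentially the same route as the paper: a linear ansatz $\dot r(x)=\dot Rx$, $\nabla\dot w(x)=\dot Wx$ justified by the uniqueness of the Helmholtz decomposition, the transpose manipulation to eliminate $\dot W$ and obtain the Sylvester equation, and existence/uniqueness plus antisymmetry of $\dot R$ from invertibility of the Sylvester operator (the paper cites the Bartels--Stewart eigenvalue criterion where you write out the diagonalization $\lambda_i+\lambda_j>0$ explicitly, which is the same fact). Your explicit remarks on the logical role of the ansatz and on which direction of Lemma \ref{l:antisymmetrics} is actually being used are minor clarifications of steps the paper leaves implicit.
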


\noindent The proof uses that the matrix governing earnings changes is symmetric, $\dot{W} = \dot{\Sigma} - \Sigma \dot{R} = \dot{W}^\top = \dot{\Sigma}^\top - \dot{R}^\top \Sigma^\top = \dot{\Sigma}^\top + \dot{R} \Sigma$, where the final equality uses that  $\dot{R}$ is antisymmetric and $\Sigma$ is symmetric. Reorganizing this expression gives the Sylvester equation (\ref{eq:syl1}). We give a direct proof of existence and uniqueness of the solution to the Sylvester equation in Appendix \ref{a:sylvester}.


Proposition \ref{p:sylvester} implies that reallocation is determined by changes in complementarity between skills (the off-diagonal elements of $\dot{\Sigma}$). In contrast, changes in the complementarity within skills (the diagonal elements of $\dot{\Sigma}$) affect only earnings, not reallocation. This follows since the Sylvester equation does not depend on the diagonal elements of the matrix of technological change $\dot{\Sigma}$, which captures the change in the complementarity within skills, due to $\dot{\Sigma}-\dot{\Sigma}^\top$. Specifically, differential change in between-skill complementarities leads to reallocation $\dot{\Sigma} - \dot{\Sigma}^\top$ since the only non-zero elements of this matrix are off-diagonal and, moreover, those that capture the differential change in between-skill complementarity. The change in within-skill complementarity only affects worker earnings not the allocation of labor. 

In the special case where the complementarity matrix is given by the identity matrix $(\Sigma = I)$ reallocation is determined by $\dot{R} = \frac{1}{2} ( \dot{\Sigma} - \dot{\Sigma}^\top )$, and the change in marginal earnings is determined as $\dot{W} = \frac{1}{2} ( \dot{\Sigma} + \dot{\Sigma}^\top )$.  Earnings changes are determined by the symmetric part of the change in the complementarity matrix $\dot{\Sigma}$ which includes both the diagonal elements, which reflect within-skill complementarity, and the averages of non-diagonal elements, $\frac{1}{2}(\dot{\Sigma}_{ij}+\dot{\Sigma}_{ji})$, which reflect between-skill complementarity. In contrast, the antisymmetric part, $\frac{1}{2}(\dot{\Sigma}_{ij} - \dot{\Sigma}_{ji})$, represents between-skill changes in complementarity that cannot be captured in earnings and instead drive reallocation.\footnote{In \Cref{a:intuitionidentitysylvester}, we discuss the case where the complementarity matrix is the identity matrix in more detail for a setting with two dimensions of worker and job characteristics.}

The Sylvester equation simplifies equilibrium comparative statics. The Helmholtz decomposition of Theorem \ref{eq:helmholtz} shows that technological change can be split into an earnings component and a reallocation component, characterized by a Poisson equation. With a bilinear production function and rotationally invariant worker skill distributions, this structure reduces to a simple linear algebra condition: symmetric technological change translates into earnings changes, antisymmetric technological change translates into reallocation. 



\subsection{Two-Dimensional Example} \label{s:twodimex}

We continue the illustration of \Cref{eq:helmholtz} and \Cref{eq:changeinmearnings} using Proposition \ref{p:sylvester} by characterizing comparative statics for a two-dimensional setting. In this setting, the output produced by worker $\tilde{x}$ when they work in the job that is done by worker $x$ under the initial technology is: 
\begin{equation}
\mathsf{y}_t(\tilde{x}, x) = \alpha_{t} \tilde{x}_{1} x_1 + \beta_{t} \tilde{x}_1 x_2 + \gamma_{t} \tilde{x}_2 x_1 + \delta_{t} \tilde{x}_2 x_2 \;.\label{e:bilinearoutputworkers}
\end{equation}
The coefficients $\alpha_{t}$ and $\delta_t$ capture within-skill complementarities, whereas the coefficients $\beta_t$ and $\gamma_t$ capture the between-skill complementarities between workers and their job. Technological change is given by:
\begin{equation*}
\dot{\Sigma}= \bigg( \begin{matrix}\dot{\alpha} & \dot{\beta}\\
\dot{\gamma} & \dot{\delta}  
\end{matrix} \bigg) \;.
\end{equation*}
The diagonal elements $\dot{\alpha}$ and $\dot{\delta}$ capture changes in within-skill productivity, whereas non-diagonal elements $\dot{\beta}$ and $\dot{\gamma}$ capture changes in between-skill productivity.


We characterize reallocation in response to technological change using Proposition \ref{p:sylvester}. As discussed in Section \ref{s:comparativestats}, reallocation being antisymmetric corresponds to reallocation being a rotation (\ref{eq:rotationmat}) in two dimensions. Evaluating the Sylvester equation (\ref{eq:syl1}), $\Sigma \dot{R} + \dot{R} \Sigma = (\alpha + \delta) \dot{R} = \dot{\Sigma} - \dot{\Sigma}^\top$, we obtain:
\begin{equation*}
 \theta = \frac{\dot{\gamma} - \dot{\beta}}{\alpha + \delta} \;,
\end{equation*}
which is the angle of rotation parameter that determines reallocation as $\dot{r}(x) = \dot{R} x = \theta(x_2,-x_1)$.\footnote{In the two-dimensional environment, $\Sigma$ being symmetric is equivalent to the between-task complementarities being identical $\gamma = \beta$, while $\Sigma$ is positive semidefinite implies that the trace $\alpha + \delta$ and determinant $\alpha \delta - \beta \gamma$ are positive. For the rearrangement problem to satisfy the twist condition (see footnote \ref{a:twist}), the complementarity matrix has to be invertible, which restricts $\alpha \delta > \beta \gamma = \gamma^2$ such that both $\alpha$ and $\delta$ are strictly positive.} Labor reallocates when technological change affects the between-skill complementarities asymmetrically $(\dot{\gamma} \neq \dot{\beta})$. The \textit{direction} of reallocation is determined by the relative technological change in between-task complementarities, that is, $\theta \geq 0$ if and only if $\dot{\gamma} \geq \dot{\beta}$. An increase in one between-skill complementarity relative to the other rotates workers to exploit increased complementarity.\footnote{In order to build intuition for the angle of rotation, suppose the complementarity between a worker's skill $x_1$ and their replacement's skill $\tilde{x}_2$ $-$ captured by $\gamma_t$ in the worker-worker complementarity matrix $\Sigma_t$ $-$ grows large. When the reassignment problem with technology (\ref{e:bilinearoutputworkers}) is dominated by $\gamma_t \tilde{x}_2 x_1$, the optimal assignment positively sorts worker skills $x_1$ and the replacements' skills $\tilde{x}_2$. This is the counterclockwise reallocation in Figure \ref{f:rotation}.} The magnitude of this rotation is dampened by the sum of within-skill complementarities. When $\alpha+\delta$ is large, reallocation is limited. When $\alpha+\delta$ is small, small technological changes may induce large reallocations. Within-skill complementarity thus stabilizes sorting.

How does technological change pass through into marginal earnings? Using Proposition \ref{p:sylvester}, the change in the slope of marginal earnings is determined as:
\begin{equation*}
\dot{W} = \dot{\Sigma} - \Sigma \dot{R} = \Bigg( \begin{matrix} \dot{\alpha} - \frac{\beta}{\alpha + \delta}(\dot{\gamma} - \dot{\beta}) & & & \frac{\alpha}{\alpha + \delta}\dot{\gamma}  + \frac{\delta}{\alpha + \delta}\dot{\beta} \\  \frac{\alpha}{\alpha + \delta}\dot{\gamma}  + \frac{\delta}{\alpha + \delta}\dot{\beta} & & &  \dot{\delta} -  \frac{\gamma}{\alpha + \delta} (\dot{\gamma} - \dot{\beta})\end{matrix} \Bigg) \;. \label{eq:intuitionw2}
\end{equation*}
The diagonal elements of $\dot{W}$ capture the effect of changes in within-skill complementarities $\dot{\alpha}$ and $\dot{\delta}$ on earnings. An increase in within-skill complementarity increases returns to the corresponding skill. The off-diagonal entry reflects a weighted average of technological progress in between-skill complementarities, which stretches the earnings for workers with high-skill in both dimensions. These earnings changes are dampened by the sum of within-skill complementarities. In the special case where $\dot{\gamma} = \dot{\beta}$, technological change is symmetric and there is no reallocation and there is complete passthrough of technological change into earnings (Proposition \ref{p:symmetricse}).

\section{Quantitative Analysis} \label{s:quant}

We next conduct a quantitative evaluation of reallocation and earnings changes in response to technological change. We do so for an empirically-relevant economy where we infer the distribution of worker skills and the technology using data on earnings and occupations for all U.S. workers (\Cref{s:data} and \Cref{s:calibration}). We refer to the first skill dimension as manual skill $(x_1 = x_m)$ and to the second skill dimension as cognitive skill $(x_2 = x_c)$. In Section \ref{s:counterfact}, we use the quantitative model to illustrate how the labor market responds to cognitive skill-biased technological change.

\subsection{Data} \label{s:data}

We use data from the American Community Survey (ACS) for all individuals between 25 and 60 years of age. The final sample includes about 16 million individuals between 2000 and 2019. Our measure of earnings is wage and salary income before taxes over the past 12 months.


The ACS contains occupational information for every worker. We combine a worker with the task intensity for their occupation using O*NET task measures from \citet{Acemoglu:2011}. For identification, we construct task intensity by occupation exactly as in \citet{BTZ2:2022}. To obtain aggregated task intensity we use a Cobb-Douglas technology to map the intensity of subtasks into a final task intensity measure similar to \citet{Acemoglu:2011} and \citet{Deming:2017}:  
\begin{equation*}
q_s = \exp \bigg( \frac{1}{|\mathcal{V}_s|}\sum\limits_{\nu \in \mathcal{V}_s} \log q_{s \nu} \bigg) .
\end{equation*}
Letting $\log q_{s \nu}$ be the $Z$-score by subtask $\nu \in \mathcal{V}_s$, we construct task intensity levels $q_s$ for cognitive and manual skills $s \in \{m, c\}$. The resulting task intensity levels are approximately lognormally distributed across occupations (see Figure 1 and Figure 2  in \citet{BTZ2:2022}). We make an identification assumption that relative task intensity is equal to the relative skill level, $x_m/x_c = q_m/q_c$, which is hence also approximately lognormally distributed across occupations.


\subsection{Calibration} \label{s:calibration}

We now identify the worker skill distribution pointwise. The identification argument is similar to \citet{BK2:2020,BK1:2021} who use explicit characterizations of home production models to identify home productivity as well as permanent and transitory market productivity using data on consumption expenditures and time allocation.

Using the O*NET task measures, we have information on the relative task intensity for each worker $x_m/x_c = q_m/q_c$. In order to determine the level of worker skills, we use earnings equation (\ref{eq:hedonicw}) for the bilinear economy at the initial technology to write:
\begin{equation}
2 w(x) = x^\top \Sigma x = \alpha x_c^2 + 2 \beta x_c x_m + \delta x_m^2  = x_c^2 \left[ \alpha  + 2 \beta \frac{x_m}{x_c} + \delta \Big( \frac{x_m}{x_c} \Big)^2 \right] \label{e:earningsdata}
\end{equation}
where the second equality uses that the worker-worker complementary matrix $\Sigma$ is symmetric.\footnote{Since earnings are equal to half of the total surplus under the initial technology it follows that firm profits also equal half the surplus, that is, $v(z) = \frac{1}{2} x^\top \Sigma x$ for $z = \tau x$.}

Given the skill ratio for an individual's occupation, $\frac{q_m}{q_c}$, and an individual's earnings $w$, this equation uniquely characterizes the cognitive skills $x_c$ and the manual skills $x_m$ as:
\begin{equation}
x_c^2 = \frac{2 w(x)}{\alpha  + 2 \beta \frac{q_m}{q_c} + \delta \big( \frac{q_m}{q_c} \big)^2} \hspace{2 cm} \text{ and } \hspace{2 cm} x_m^2 = \frac{2 w(x)}{\alpha \big( \frac{q_c}{q_m} \big)^2  + 2 \beta \frac{q_c}{q_m} + \delta} \;. \label{e:xcxm}
\end{equation}
The key ingredients to our inference of the worker skill distribution are the quadratic earnings equation and the identification assumption that the relative skill level is a known function of the relative task intensity, $x_m/x_c = q_m/q_c$. We use equation (\ref{e:xcxm}) together with data on earnings and occupations to infer the worker skills distribution in the ACS.

\begin{table}[t!]
\def\arraystretch{1.4}%
\begin{center}
\caption{Example of Inference of Skills}\label{t:simple_example}
\begin{tabular}{clcccc}
\hline  \hline
 & & \multicolumn{1}{c}{Relative Skills} &  \multicolumn{1}{c}{Earnings} & \multicolumn{2}{c}{Task Intensity}    \\
 & & \hspace{0.85 cm}  $q_m/q_c$ \hspace{0.85 cm}  & \hspace{0.85 cm} $ w(x)$ \hspace{0.85 cm} & \hspace{1.10 cm}  $x^2_c$ \hspace{1.10 cm}  & \hspace{1.10 cm}  $x^2_m$ \hspace{1.10 cm}  \\
\hline
1 & \; Baseline		   			  & 		1 & 		$1/2$ & 		  $1/(\alpha + \delta)$     & $1/(\alpha + \delta)$   \\
2 & \; Earnings			  			  &		1 & 		1 & 	  $2/(\alpha + \delta)$     &  $2/(\alpha + \delta)$    \\ 
3 & \; Task intensity \hspace{0.10 cm} 	   &		2 & 		$1/2$ &   	$1/(\alpha + 4 \delta)$     &		          	$4/(\alpha + 4 \delta)$  \\ 
\hline \hline
\end{tabular}\end{center}
{\scriptsize \Cref{t:simple_example} illustrates the identification of workers' manual and cognitive skills through three examples. We infer higher levels of manual skills with higher earnings (Row 2), and higher manual task intensity (in Row 3).}
\end{table}

\vspace{0.4 cm}
\noindent \textbf{Examples}. In order to provide insight into the inference of worker skills, we consider a numerical example. We first consider an economy with no cross-skill complementarity, or $\beta = 0$.\footnote{When we estimate the parameters of the production technology (see \Cref{t:modelparam}) we find $\beta$ to be relatively small.}

Suppose a worker's occupational relative task intensity is one, $\frac{q_m}{q_c} = \frac{x_m}{x_c} = 1$, and earnings equal $1/2$. Following equation (\ref{e:xcxm}), the worker's cognitive and manual skills are both equal to $1/(\alpha + \delta)$. This worker is presented in the first row of \Cref{t:simple_example}. Inferred skills increase with earnings. For a worker with the same relative task intensity, but higher earnings levels, the skill level of each task is greater. This worker is presented in the second row of \Cref{t:simple_example}. Finally, inferred manual skill increases with manual task intensity. Consider some worker with relative manual task intensity equal to two, $\frac{q_m}{q_c} = 2$, and earnings of $1/2$. By equation  (\ref{e:xcxm}), the worker's manual skills increases, while the cognitive skills decreases. This worker is presented in the third row of \Cref{t:simple_example}.

\begin{table}[t!]
\def\arraystretch{1.4}%
\begin{center}
\caption{Illustration of Identification}\label{t:example}
\begin{tabular}{lccccc}
\hline  \hline
\multicolumn{1}{l}{Occupation} &  \multicolumn{1}{c}{Relative} &  \multicolumn{1}{c}{Earnings} & \multicolumn{1}{c}{Manual}  &  \multicolumn{1}{c}{Cognitive}  &  \multicolumn{1}{c}{SOC Code}  \\
  & \hspace{0.45 cm}  $\log \frac{q_m}{q_c}$ \hspace{0.45 cm}  & \hspace{0.45 cm} $\mathbb{E} w(x)$ \hspace{0.45 cm} & \hspace{0.45 cm}  $\log x_m$ \hspace{0.45 cm}  &\hspace{0.45 cm}   $\log x_c$ \hspace{0.45 cm}  \\
\hline
Gardeners 				\hspace{2.1 cm} & \phantom{-}1.7 & \phantom{1}23 & \phantom{-}0.73     & 		  -0.99  &37$-$3010  \\ 
Cashiers				 	 			 &\phantom{-}0.7 & \phantom{1}19 & \phantom{-}0.13     & 		  -0.55  &41$-$2010  \\ 
Police officers 			 				 & 		     -0.1 & \phantom{1}64 & \phantom{-}0.74     & \phantom{-}0.88  &33$-$3050 \\ 
Physicians 	 			 			 &		     -0.2 & 			184 & \phantom{-}1.73     & \phantom{-}1.94  &29$-$1060  \\ 
Financial managers 		 	 			 & 		     -1.8 & \phantom{1}92 & 		  -0.49     & \phantom{-}1.34   &11$-$3030  \\ 
Chief executives  						 & 		     -2.1 & 			149 & 		  -0.25     & \phantom{-}1.82  &11$-$1010  \\ 
\hline \hline
\end{tabular}\end{center}
{\scriptsize \Cref{t:example} shows the identification of worker skills for a number of occupations. Holding constant the relative manual skill intensity, high earnings identify high skill levels. This can be seen by contrasting the manual and cognitive skills of police officers and physicians, and of financial managers and chief executives. Holding constant earnings, high manual task intensity identifies high manual skills. This can, for example, be seen by comparing the skills of gardeners and cashiers.}
\end{table}

 Having illustrated the identification with examples, we turn to identification using earnings data. \Cref{t:example} illustrates the identification of underlying skills for representative workers in occupations listed in the first column. The second column shows the relative manual task intensity for these occupations from O*NET task measures. The third column shows average earnings of the workers by occupation in the ACS. 
 

We identify manual and cognitive skills using equation (\ref{e:xcxm}). First, we show that higher earnings identify higher levels of skills, all else equal. Consider an example of police officers and physicians. The relative task intensity for police officers and physicians is comparable, while the earnings of physicians exceed the earnings of police officers. This implies a higher level of both cognitive and manual skills for physicians, which we document in the fourth and fifth column of \Cref{t:example}. 

Second, we consider two occupations with similar earnings to show that high manual task intensity identifies high manual skill all else equal. While the earnings of gardeners and cashiers are similar, gardening is more demanding in manual skills. By equation (\ref{e:xcxm}), it follows that a gardener has more manual skills than a cashier, but less cognitive skills. The fourth and fifth column in \Cref{t:example} displays this pattern.

   \begin{figure}[!t]
   \begin{center}
    \begin{subfigure}{}
        \includegraphics[trim=0.0cm 0.0cm 0.0cm 0.0cm, width=0.485\textwidth,height=0.29\textheight]{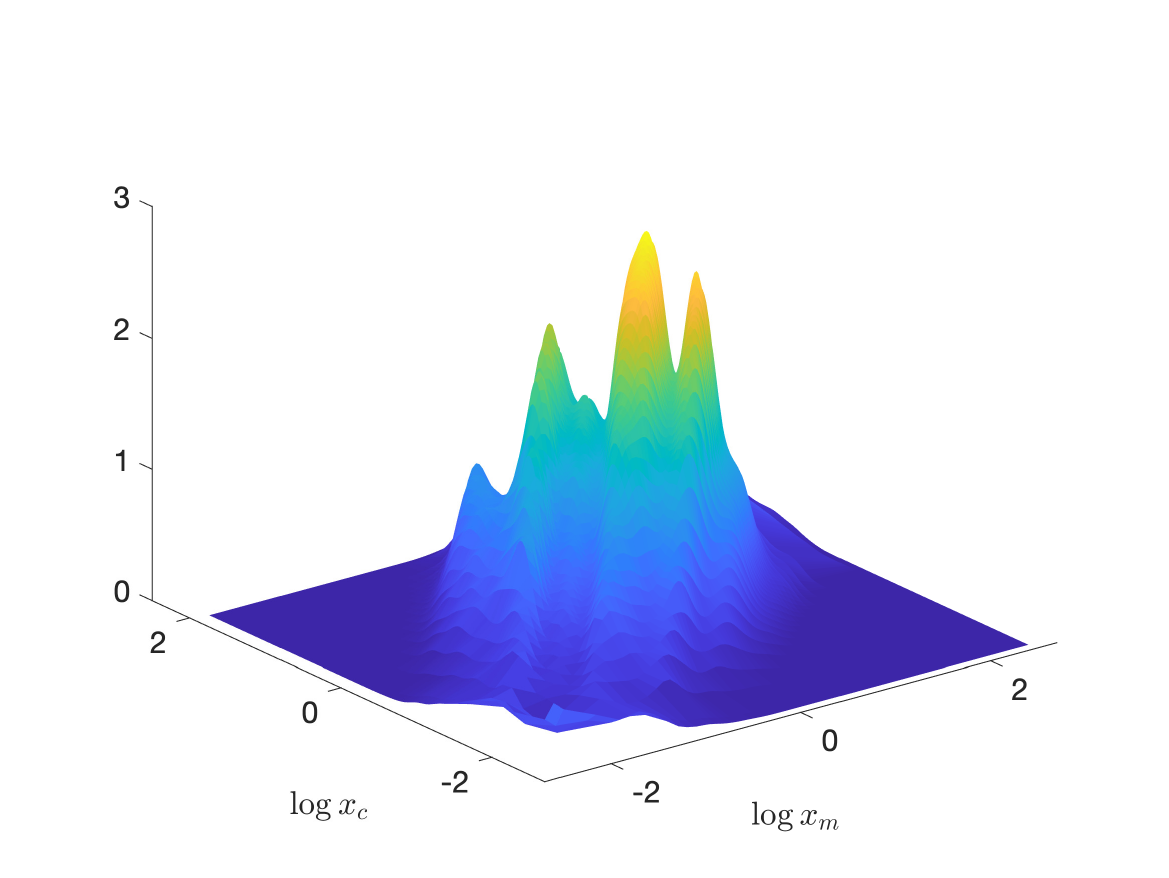}  
    \end{subfigure}%
    \begin{subfigure}{}
        \includegraphics[trim=0.0cm 0.0cm 0.0cm 0.0cm, width=0.485\textwidth,height=0.29\textheight]{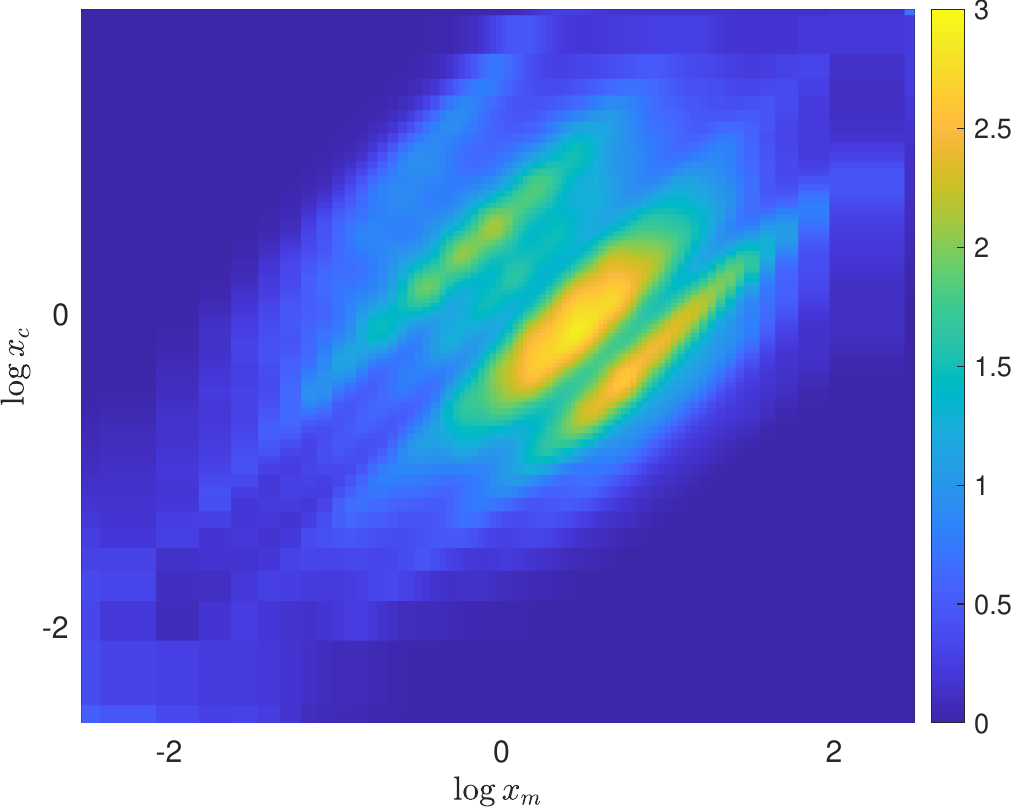}
    \end{subfigure}
    \par\end{center}
    \vspace{-0.6 cm}
    \caption{Distribution of Worker Types} \label{f:worker_types}
{\scriptsize \vspace{.2 cm} Figure \ref{f:worker_types} shows the distribution of manual and cognitive skills across workers. The left-hand panel plots the distribution function, the right-hand panel displays the distributions as a heatmap. The worker skill distribution sharply differs from rotationally invariant distributions such as Gaussians: the skill distribution is multimodal, positively skewed, and has heavy tails.}
    \end{figure}

We apply the identification argument to all workers in the ACS to infer the skill distribution. By identifying skills at the worker level, we allow for skill heterogeneity within occupations driven by earnings variation within that occupation. As in the example, workers with high earnings have higher cognitive and manual skills than a worker with low earnings in the same occupation. \Cref{f:worker_types} shows the distribution of cognitive and manual skills, after 98 percent winsorization and after smoothing the pointwise identified distribution using a kernel density estimation. The worker skill distribution sharply differs from rotationally invariant distributions such as Gaussians: the skill distribution is multimodal, positively skewed, and has heavy tails.  

\begin{table}
\global\long\def\arraystretch{1.4}%

\begin{centering}
\caption{Quantitative Model Parameters}
\label{t:modelparam} %
\begin{tabular}{cclcc}
\hline \hline 
\hspace{0.2 cm} Parameter \hspace{0.2 cm}  & Value & Moment & Data & Model\tabularnewline
\hline 
 $\alpha$ & \hspace{0.7cm} 0.239 \hspace{0.7cm} & Mean of log manual skill & \hspace{0.4cm} -0.036 \hspace{0.4cm} & \hspace{0.4cm} -0.039 \hspace{0.4cm} \tabularnewline
 $\beta$  & 0.020 & Mean of log cognitive skill & \phantom{-}0.132 & \phantom{-}0.129 \tabularnewline
 $\delta$ & 0.036 & Variance of log manual skill & \phantom{-}0.234 & \phantom{-}0.263\tabularnewline
&   & Variance of log cognitive skill & \phantom{-}0.389 & \phantom{-}0.393\tabularnewline
&  & Covariance between log skills \hspace{0.6 cm} & \phantom{-}0.326 & \phantom{-}0.257\tabularnewline
\hline \hline 
\end{tabular}
\par\end{centering}
{\scriptsize \vspace{0.1cm}
 \Cref{t:modelparam} presents the parameter values chosen to minimize the distance between occupational statistics in the data and their model counterpart. The first two columns present the parameters of the earnings equation (\ref{e:earningsdata}) and their estimated values. The third column describes the moments across occupations that jointly inform the parameter values. The fourth column presents the empirical moment, which we calculate using the data of  \citet{Acemoglu:2011}, while the fifth column presents the model counterpart. }
\end{table}

We estimate the three relevant model parameters $(\alpha,\beta,\delta)$ that appear in the earnings equation (\ref{e:earningsdata}) by making sure that the model-implied occupational statistics align with the occupational statistics in the data of \citet{Acemoglu:2011}. For each occupation, we calculate the mean of cognitive and manual skills in logarithms across workers in that occupation. We use the means by occupation to calculate the means, the variances, and the covariance of logarithmic skills across occupations. We then choose the model parameters to minimize the squared distance between the model and data in terms of the means and variances of log manual and log cognitive skills and the covariance between manual and cognitive skill across occupations. The occupation statistics for the model and the data as well as the estimated parameter values are presented in Table \ref{t:modelparam}.

\subsection{Cognitive Skill-Biased Technological Change} \label{s:counterfact}

We next use this empirically-relevant model economy to evaluate how the labor market responds to technological change. In particular, we illustrate our framework by analyzing reallocation and earnings changes in response to an increase in the importance of cognitive skills of employees, that is, \textit{cognitive skill-biased technological change}. If cognitive skills of workers (say, denoted by $\tilde{x}_2$) are becoming increasingly important for production, this is represented by $\dot{\gamma}_t > 0$ and $\dot{\delta}_t > 0$. Specifically, we consider cognitive skill-biased technological change such that $\dot{\gamma} = \dot{\delta}$. 

   \begin{figure}[!t]
   \begin{center}
    \begin{subfigure}{}
        \includegraphics[trim=0.0cm 0.0cm 0.0cm 0.0cm, width=0.625\textwidth,height=0.35\textheight]{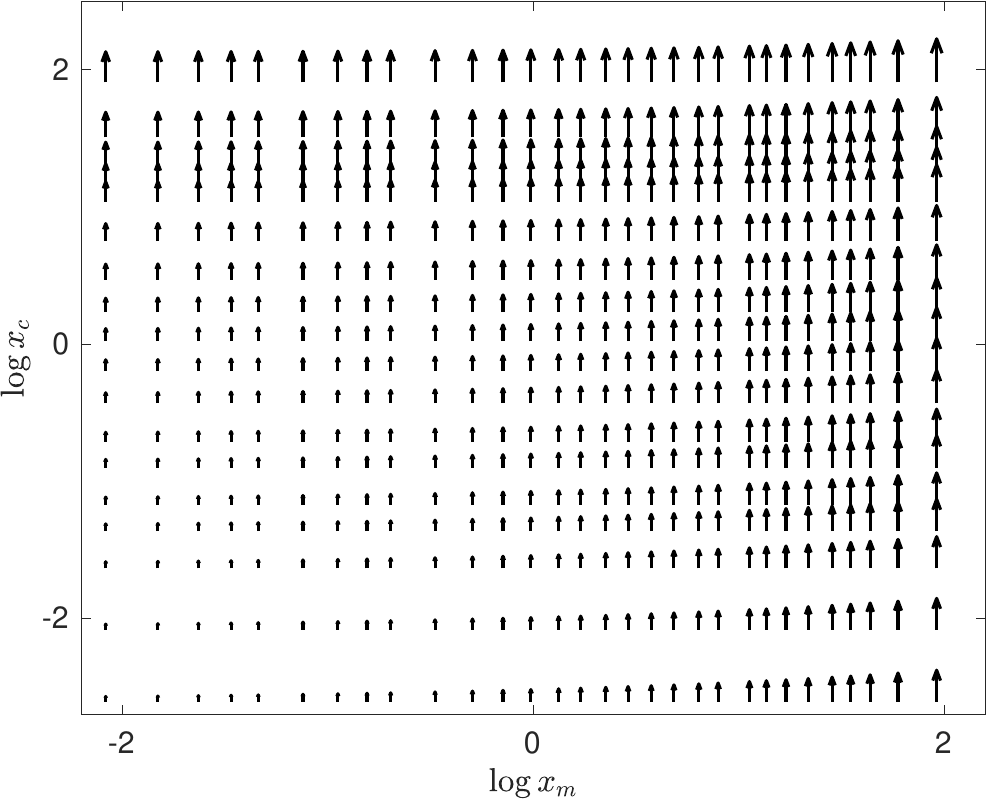}  
    \end{subfigure}%
    \par\end{center}
    \vspace{-0.8 cm}
    \caption{Cognitive Skill-Biased Technological Change}
{\scriptsize \vspace{.2 cm} Figure \ref{fig:techchange} depicts cognitive skill-biased technological change, which increases the marginal product of cognitive skills (indicated by the upward motion of the vectors) while leaving the marginal product of manual skills unaffected (indicated by the absence of horizontal motion). Technological change is skill-biased since the marginal product of high skill workers increases more than the marginal product of low skills workers.}
    \label{fig:techchange}
    \end{figure}

Figure \ref{fig:techchange} presents the technological change over the worker types space. Technological change is the change in marginal product holding fixed the initial assignment, that is, $\dot{\mathcal{A}}(x) = \dot{\Sigma} x$. Since $\dot{\alpha} = \dot{\beta} = 0$ there is no change in the marginal product of manual skills $x_m$. In Figure \ref{fig:techchange} this follows since the vectors do not point  left or right. Due to cognitive skill-biased technological change, the marginal product of cognitive skills increases by $\dot{\gamma} x_m + \dot{\delta} x_c$. Cognitive skill-biased technological change increases the marginal product of cognitive skills leaving the marginal product of manual skills unaffected. In Figure \ref{fig:techchange}, the increase in the marginal product of cognitive skills is seen from the upward pointing vectors. Since technological change is linear in worker skill, marginal product of cognitive skills increases in $x_c$ and $x_m$, which is visually captured by the increasing length of the vectors with $x_c$ and $x_m$.\footnote{We only plot a subset of vectors and vectors are not drawn to scale in order to facilitate reading the figures.} Technological change is thus skill-biased $-$ the marginal product of high skill workers increases more than the marginal product of low skill workers.

Using the Helmholtz decomposition (\Cref{eq:helmholtz}), we decompose the technological change in Figure \ref{fig:techchange} into a divergence-free reallocation and a gradient component that captures the change in marginal earnings. The Helmholtz decomposition of cognitive skill-biased technological change is presented in Figure \ref{fig:techdec} and Figure \ref{fig:techdecwag}. 


   \begin{figure}[!t]
   \begin{center}
    \begin{subfigure}{}
        \includegraphics[trim=0.0cm 0.0cm 0.0cm 0.0cm, width=0.625\textwidth,height=0.35\textheight]{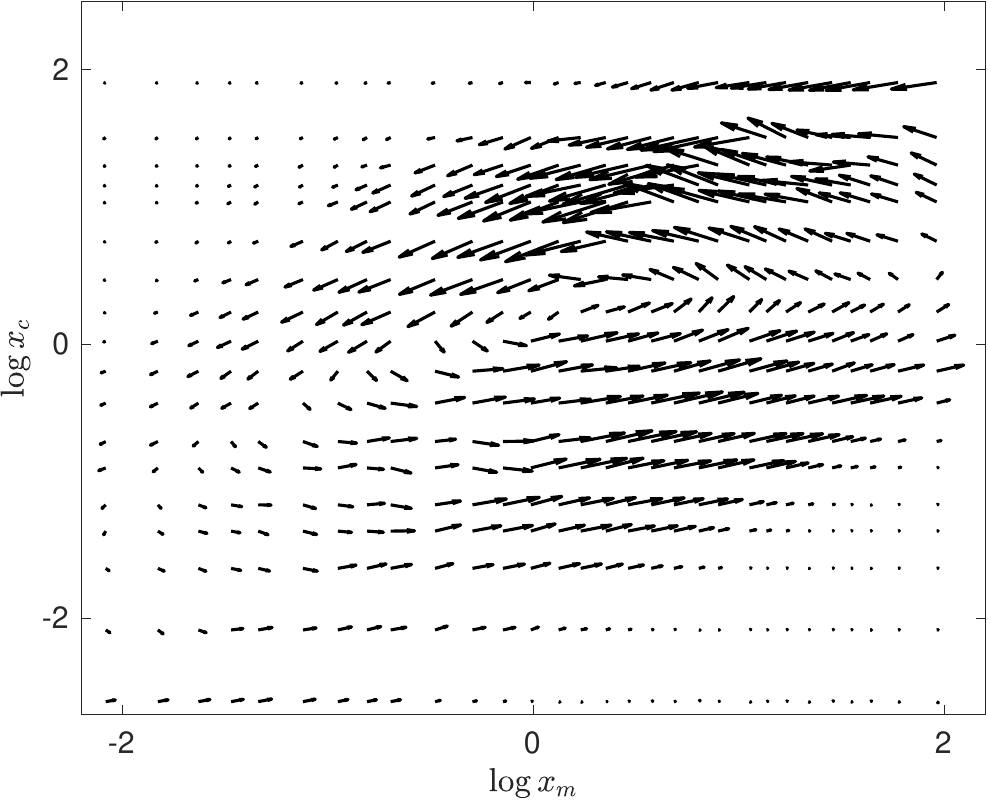}
    \end{subfigure}
    \par\end{center}
    \vspace{-0.8 cm}
    \caption{Labor Reallocation in Response to Cognitive Skill-Biased Technological Change} 
{\scriptsize \vspace{.2 cm} Figure \ref{fig:techdec} shows reallocation in response to cognitive skill-biased technological change. Labor reallocates in a counterclockwise fashion to increase the assortativeness of sorting between workers with high cognitive skill and replacement workers with high manual skill.}
    \label{fig:techdec}
    \end{figure}

\vspace{0.4 cm}
\noindent \textbf{Reallocation}. In response to cognitive skill-biased technological change, labor reallocates in a counterclockwise fashion as shown in Figure \ref{fig:techdec}. In order to build intuition for reallocation of labor, suppose that the complementarity between the worker's manual skills $x_m$ and their replacement's cognitive skills $\tilde{x}_c$, captured by $\gamma_t$ in the complementarity matrix $\Sigma_t$, grows large. When the reallocation problem with technology (\ref{e:bilinearoutputworkers}) is dominated by $\gamma_t \tilde{x}_c x_m$, the planner positively sorts workers' manual skills and the replacements' cognitive skills. The replacement for a worker with high manual skills should be a worker with high cognitive skills. Hence, when $\gamma_t$ increases the replacement for worker $(2,1)$ in logarithms comes from direction $(-1,1)$. Similarly, the replacement for worker $(0,1)$ in logarithms with low manual skills should be a worker with low cognitive skills from the direction $(-1,-1)$. Optimal reallocation is thus counterclockwise as shown in Figure \ref{fig:techdec}. We note that $\dot{\delta}$ does not influence reallocation since the change in $\delta_t$ is a symmetric technological change. Moreover, reallocation is sharply non-linear since the worker skill distribution sharply differs from rotationally invariant distributions in \Cref{p:sylvester}.

\begin{implication}{\textit{Labor Reallocation and Skill Complementarity}.} \label{c:complrl}
Changes in sorting are exclusively driven by changes in between-skill complementarity. Cognitive skill-biased technological change induces workers with high (low) cognitive skills to replace workers with more (less) manual skills. 
\end{implication}


   \begin{figure}[!t]
   \begin{center}
    \begin{subfigure}{}
        \includegraphics[trim=0.0cm 0.0cm 0.0cm 0.0cm, width=0.625\textwidth,height=0.35\textheight]{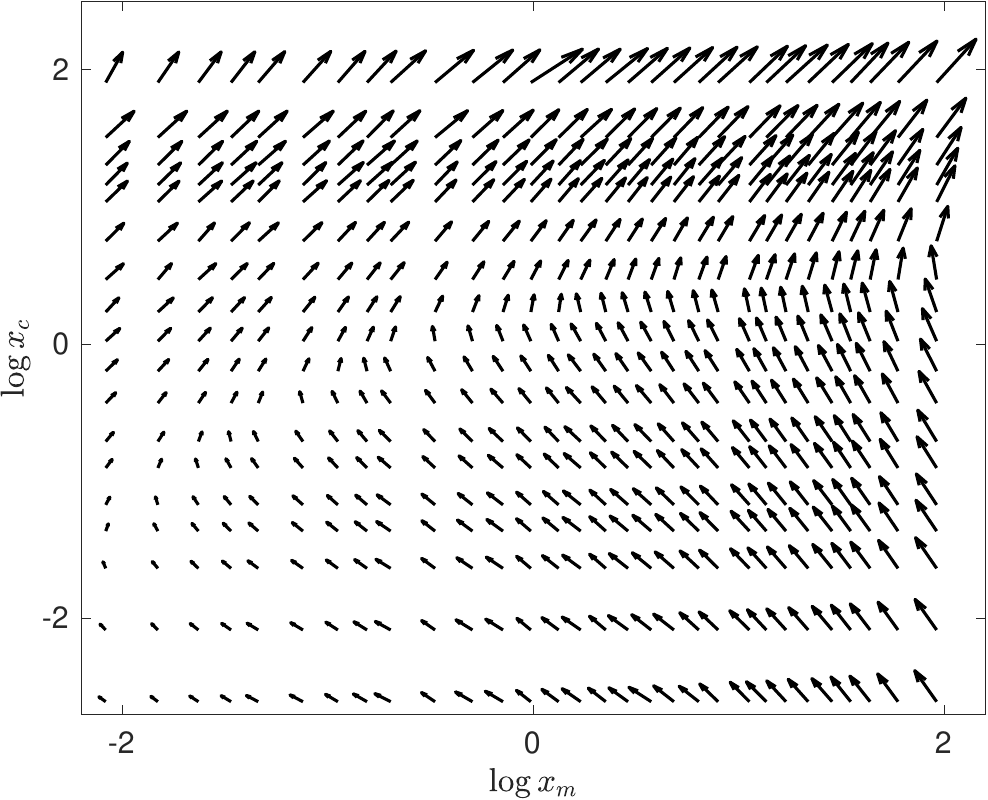}
    \end{subfigure}
    \par\end{center}
    \vspace{-0.8 cm}
    \caption{Earnings Change in Response to Cognitive Skill-Biased Technological Change} 
{\scriptsize \vspace{.2 cm} Figure \ref{fig:techdecwag} illustrates the change in marginal earnings in response to cognitive skill-biased technological change. Since cognitive skill-biased technological change is not a gradient, the change in marginal earnings is not equal to the technological change (\Cref{fig:techchange}).}
    \label{fig:techdecwag}
    \end{figure}

\noindent \textbf{Earnings Changes}. Since cognitive skill-biased technological change is not a gradient, the change in marginal earnings is not equal to the technological change. This can be observed by comparing technological change in Figure \ref{fig:techchange} to marginal earnings changes in Figure \ref{fig:techdecwag}. So, how does cognitive skill-biased technological change pass through to earnings? 

The Helmholtz decomposition when the production technology is bilinear shows that the change in marginal earnings is:
\begin{equation}
(\nabla \dot{w})(x) = \dot{\Sigma} x - \Sigma \dot{r} (x)  \label{eq:changeearn}
\end{equation}
where $ - \dot{r}(x)$ captures the direction of the worker that worker $x$ replaces under the new technology. First, marginal earnings vary with the change in marginal products, $\dot{\Sigma} x$. Second marginal earnings also vary due to marginal productivity being altered because worker $x$ replaces other workers, and marginal products change as $- \Sigma \dot{r}(x)$. 


In Figure \ref{fig:techdecwag}, we illustrate the effect of cognitive skill-biased technological change on marginal earnings. It follows that an increase in $\gamma_t$ and $\delta_t$ directly increase the marginal product of cognitive skills, which is given by $\dot{\gamma} x_1 + \dot{\delta} x_2$, while there is no direct effect on the marginal product of manual skills. However, the marginal product also changes due to reallocation $- \dot{r}(x)$, which is captured by the second term in equation (\ref{eq:changeearn}). 

Consider worker $\log x = (2,1)$ with high manual and cognitive skills. Due to the direct effect of technological change, the marginal product of cognitive skills changes as $\dot{\gamma} x_1 + \dot{\delta} x_2$, with no direct effect on the marginal product of manual skill (Figure \ref{fig:techchange}). However, this worker also replaces the worker in the direction $(1,-1)$, or $\dot{r}(x) = (-1,1)$, changing its marginal product. Specifically, the marginal product of manual skill indirectly changes as $- \alpha \dot{r}_1(x) - \beta \dot{r}_2(x)$, while marginal product of cognitive skill indirectly changes as $- \gamma \dot{r}_1(x) - \delta \dot{r}_2(x)$. Since $\alpha$ exceeds $\beta$ (\Cref{t:modelparam}) the marginal product of manual skills increase, and since $\delta$ exceeds $\gamma$ the increase in the marginal product of cognitive skills is muted. This is seen in Figure \ref{fig:techdecwag}, where changes in marginal earnings with respect to manual skills are given by horizontal movements. The right-and-upward vector for worker $(2,1)$ in logarithms shows that the marginal earnings for manual and cognitive skill both increase.

Similarly, for worker $\log x = (1,-1)$ with high manual and low cognitive skills the direct effect of technological change on marginal output is $\dot{\gamma} x_1 + \dot{\delta} x_2$, which is smaller than for worker $\log x = (2,1)$ because the worker is less skilled. Since this worker replaces the worker in direction $(-1,-1)$ with less manual skills and less cognitive skills, or $\dot{r}(x) = (1,1)$ and since all technology parameters are positive, the increase in the marginal product of both manual and cognitive skills is reduced. Since the direct effect of cognitive skill-biased technological change on the marginal product of manual skills is zero, this introduces a decline in the marginal product of manual skills. 

The changes in marginal earnings for cognitive skills are captured in Figure \ref{fig:techdecwag} by the vertical displacement. Figure \ref{fig:techdecwag} shows that marginal earnings for cognitive skills inherit two properties of the cognitive skill-biased technological change displayed in Figure \ref{fig:techchange}. First, the marginal earnings for cognitive skills increase for all workers. This is given by the fact that all vectors are directed upwards. Second, the increase in marginal earnings is more pronounced for high skill workers, that is, the change in marginal earnings is skill-biased. This is given by the increasing norm of the vectors from the bottom-left (low skill workers) to the top-right (high skill workers) corner.

\begin{implication}{\textit{Cognitive Skill-Biased Technological Change and the Value of Cognitive Skills}.} \label{c:mwc}
Marginal earnings for cognitive skills ($i$) increase for all workers; and ($ii$) increase more for high skill workers, in response to cognitive skill-biased technological change.
\end{implication}

In Figure \ref{fig:techdecwag}, the change in marginal earnings in the manual skill dimension is captured by the horizontal displacement. Strikingly, even though cognitive skill-biased technological change only affects the marginal product of cognitive skills, marginal earnings for manual skills change. Specifically, cognitive skill-biased technological change decreases the marginal earnings of manual skills for workers with low cognitive skills (bottom half of the panel) while increasing the marginal earnings of manual skills for workers with high cognitive skills (top half of the panel) in the manual skill dimension. 


\begin{implication}{\textit{Cognitive Skill-Biased Technological Change and the Value of Manual Skills}.} \label{c:mwm}
Marginal earnings for manual skills increase for workers with high cognitive skills and decrease for workers with low cognitive skills in response to cognitive skill-biased technological change.
\end{implication}



   \begin{figure}[!t]
   \begin{center}
    \begin{subfigure}{}
        \includegraphics[trim=0.0cm 0.0cm 0.0cm 0.0cm, width=0.625\textwidth,height=0.35\textheight]{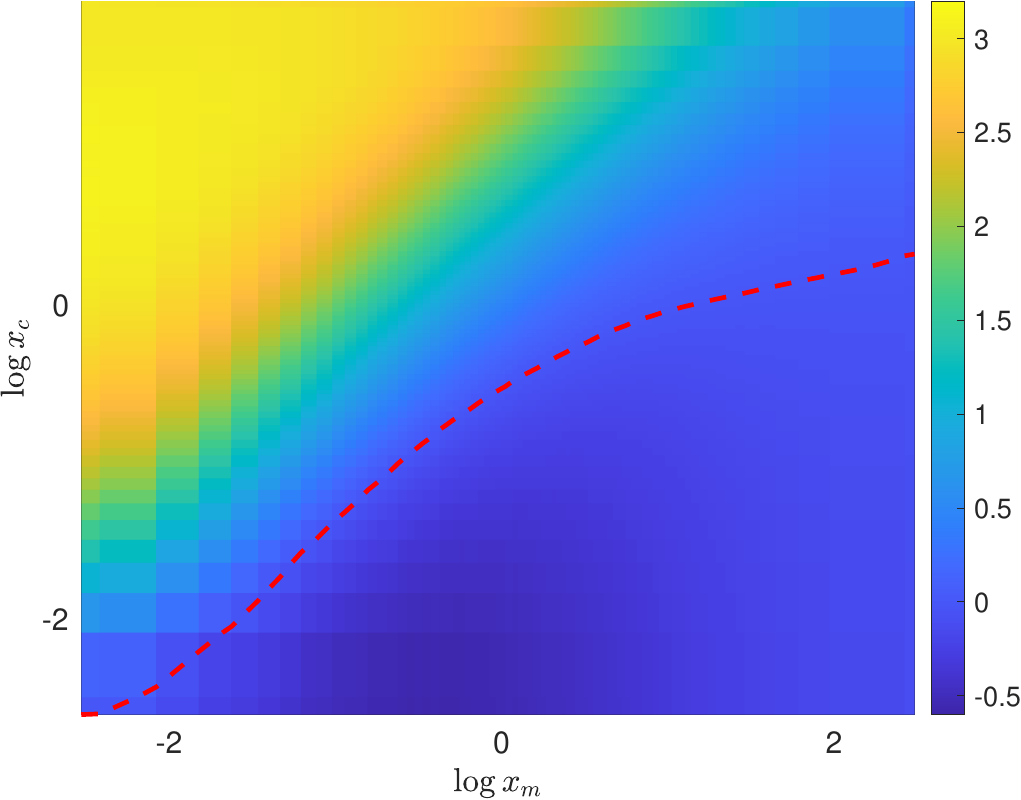}  
    \end{subfigure}%
    \par\end{center}
    \vspace{-0.8 cm}
    \caption{Earnings Changes with Cognitive Skill-Biased Technological Change}
{\scriptsize \vspace{.2 cm} Figure \ref{fig:earnchange} depicts the percentage change in earnings for different worker types in response to cognitive skill-biased technological change. The dashed red line represents an isocurve along which workers do not experience changes in earnings.}
    \label{fig:earnchange}
    \end{figure}
    

Having quantified the change in marginal earnings, we integrate the change in marginal earnings to obtain the earnings changes. Figure \ref{fig:earnchange} shows the resulting percentage change in earnings for all workers, where the dashed red line represents an isocurve of worker types that do not experience overall changes in earnings. Cognitive skill-biased technological change mostly benefits workers with high cognitive skills and low manual skills as shown by the bright-yellow color in the top-left of Figure \ref{fig:earnchange}. While workers with high cognitive and manual skills benefit in terms of the level increase in earnings, the percentage increase is dampened by the higher initial level of earnings. The earnings of workers with high manual skills and low cognitive skills decrease relative to the workers with high cognitive and low manual skills. 

\section{Additional Characterization}

In this section, we provide additional characterization of comparative statics.

\subsection{Unidimensional Case}


We discuss the special environment where the distributions of workers and jobs are unidimensional $(d=1)$. In this case, the restrictions on the reallocation given by (\ref{eq:rearr0}) and (\ref{eq:rearr0b}) simplify to:
\begin{align}
0 & = \frac{d}{dx} (\dot{r}(x) f(x)) \hspace{3.65 cm} \text{$ x \in X$} \tag{\ref{eq:rearr0}} \\
0 & = \dot{r}(x) f(x) \cdot n(x)  \hspace{3.4 cm} \text{$x \in \partial X$} \tag{\ref{eq:rearr0b}}
\end{align}

\noindent The reallocation condition on the interior of worker skills (\ref{eq:rearr0}) implies that $\dot{r}(x) f(x)$ is constant. In the unidimensional case, workers either all replace workers with more skills or all replace workers with fewer skills. By the boundary condition, it follows that the constant has to be zero. The only way in which workers either all replace workers with more skill or all replace workers with fewer skills, which is consistent with the boundary condition, is that no worker moves. That is, sorting does not change in response to a technology change. Alternatively, technological change is always symmetric in the unidimensional setting, and hence by Proposition \ref{p:symmetricse} it follows that there is no reallocation in response to technological change.

How does the earnings schedule change in the setting with unidimensional workers and jobs? From the marginal earnings condition (\ref{e:equation2}) it follows that technological change completely passes through to changes in marginal earnings, or $(\nabla \dot{w})(x ) = (\nabla \dot{\mathsf{y}}_1) (x , x)$. For example, in case of the bilinear technology $y_t(x,z) = \gamma_t x z$ with $\gamma_t > 0$, sorting is positive since the production technology is supermodular, and the corresponding change in marginal earnings is $(\nabla \dot{w})(x ) = \dot{\gamma} \tau(x)$, showing that technological change $\dot{\gamma} > 0$ increases earnings differences as it increases the earnings gradient for all worker types $x$.\footnote{This unidimensional example also illustrates the role of the assumptions (\Cref{s:assumptions}). Moving $\gamma_t$ such that the technology changes from being supermodular to being submodular, the technology crosses $\gamma_t = 0$, where the twist condition is not satisfied. In the unidimensional setting with concave costs of mismatch, as in \citet{Boerma:2023} and \citet{Ottolini:2023}, the Ma-Trudinger-Wang condition is not satisfied, and the optimal assignment is generally not smooth with respect to worker types. \citet{Boerma:2023} provide comparative statics for the assignment with respect to the technology in this setting.}

In sum, this section shows that reallocation in response to technological change is a multidimensional phenomenon. In the unidimensional case, the feasibility condition forces $\dot{r} = 0$, implying only earnings respond to technological progress. In contrast, for dimensions $d\geq2$ the space of feasible reallocations is non-trivial. This implies that in one dimension technological change can only change earnings but in higher dimensions it can also change sorting, reallocating workers across jobs along the many directions allowed by feasibility.

\subsection{Second-Order Perspective on Reallocation}

We next provide an alternative interpretation of reallocation in response to technological change. In order to derive Theorem \ref{eq:helmholtz}, we  analyzed the sensitivity with respect to technological progress of two equilibrium conditions $-$ labor market clearing (\ref{eq:probconstraintrear}) and firm optimality (\ref{eq:hedonicw}). By combining the restrictions (\ref{eq:rearr0}), (\ref{eq:rearr0b}) and (\ref{e:diffhed}) from these two equilibrium conditions, we characterized reallocation and changes in earnings through the Helmholtz decomposition and the Poisson equation. In other words, the equilibrium comparative statics start with first-order necessary conditions, and examines their response to technological change $-$ a first-order expansion of the first-order necessary conditions.

An alternative approach is to directly consider the second-order expansion of the increase in aggregate production with respect to technological change around the initial equilibrium, and to characterize reallocation that maximizes the increase in aggregate production up to second-order. We pursue this alternative approach in \Cref{a:2ndorderapp}. The main result (Proposition \ref{p:socharacterization}) shows that reallocation $\dot{r}$ in \Cref{eq:helmholtz}, where $\nabla \dot{w}$ is characterized using the Poisson equation in Theorem \ref{eq:changeinmearnings}, maximizes aggregate production up to second order in response to technological change. These two approaches are thus equivalent and provide complimentary views. 
  
The result states that a first-order expansion of the equilibrium conditions $-$ corresponding to the first-order conditions of the planning problem $-$ and a second-order expansion of the planning problem yield the same optimal reallocation. A byproduct of the proof is that it characterizes the increase in aggregate output due to optimal reallocation $\dot{r}$ as:
\begin{equation*}
\frac{1}{2} \int \dot{r}(x)^\top \mathcal{C}(x) \dot{r}(x) f(x) d x, 
\end{equation*}
which shows that more reallocation is indicative of larger increases in output. The complementarity matrix $\mathcal{C}$ weights how valuable reallocation is in terms of additional aggregate output: reallocation along dimensions with strong complementarity generates large output gains. The worker distribution $f$ scales these effects, showing that reallocating many workers has a larger aggregate impact than moving few. This result complements Theorem \ref{eq:helmholtz} and Theorem \ref{eq:changeinmearnings} to show not only how technological progress passes through to earnings and reallocation but also how much output increases.

\subsection{Large Technological Change} \label{s:largetc}

So far, we have characterized comparative statics for small technological change around the initial equilibrium. We next discuss equilibrium adjustments in earnings and sorting in response to large technological change. We show how to use the Helmholtz decomposition and Poisson equation to understand the response to large technological change. We then use the prior analysis for each technology $t$ to derive a differential equation that gives the relation between earnings and sorting under technology $t$, and equilibrium responses to technological change. For each technology, technological change decomposes into changes in marginal earnings and sorting that are orthogonal to one another.

\vspace{0.4 cm}
\noindent The objective is to characterize an assignment $\tau_t$ and earnings schedule $w_t$ such that the marginal earnings condition (\ref{eq:focfirm}) given by $(\nabla_1 y_t)(x, \tau_t(x)) = (\nabla w_t)(x)$ is satisfied for every technology $t \in [0,T]$ given an initial equilibrium.

For a change in the assignment function $\dot{\tau}_t(x)$ to clear the labor market, $\upsilon_t(x) f(x)$ is necessarily divergence free and parallel to the boundary, where $(\nabla \tau_t(x)) \upsilon_t(x) + \dot{\tau}_t(x) = 0$. By differentiating the marginal earnings condition with respect to technology $t$, we obtain:
\begin{equation}
(\nabla_1 \dot{y}_t)(x, \tau_t(x)) + (\nabla^2_{12} y_t)(x, \tau_t(x)) \dot{\tau}_t (x)= (\nabla \dot{w}_t)(x) \hspace{0.03 cm}.
\end{equation}
This condition equates changes in marginal production on the left to changes in marginal earnings on the right. The direct effect gives the change in marginal output $(\nabla_1 \dot{y}_t)(x, \tau_t(x))$ holding fixed the assignment $\tau_t$. The reallocation effect yields the change in the marginal production of worker $x$ when they are assigned to a different job. The right-hand side captures the change in marginal earnings due to a change in technology $(\nabla \dot{w}_t)$.

By differentiating the marginal earnings condition with respect to worker skills $x$, $
(\nabla^2 w_t)(x) = (\nabla^2_{11} y_t)(x, \tau_t(x)) + (\nabla^2_{12} y_t)(x, \tau_t(x)) \nabla \tau_t (x)$, the change in the marginal labor earnings with respect to technology is:
\begin{equation}
(\nabla_1 \dot{y}_t)(x, \tau_t(x)) + (\nabla^2_{11} y_t)(x, \tau_t(x))\upsilon_t(x)  = (\nabla \dot{w}_t)(x)+  (\nabla^2 w_t)(x) \upsilon_t(x) , \label{eq:flow} 
\end{equation}
where $\upsilon_t \in \mathcal{D}$. This equation shows that two margins adjust in response to technological change for each technology $t$: marginal earnings change and labor reallocates. The change in marginal earnings is a gradient $(\nabla{\dot{w}}_t \in \mathcal{G})$, while reallocation is divergence-free and parallel to the boundary $(\upsilon_t \in \mathcal{D})$. We recall that the key property is that these two spaces are orthogonal to one another. In sum, equation (\ref{eq:flow}) is thus the generalization of equation (\ref{e:diffhed0}) which we used in Section \ref{s:solvingrealloc} to characterize comparative statics for technological change around the initial equilibrium. We use similar ideas as in Theorem \ref{eq:helmholtz} to characterize the response in earnings and sorting to large technological change in Theorem \ref{t:flow}.

\begin{theorem} \label{t:flow}
Let $(\tau,w)$ be an equilibrium assignment and earnings for the initial technology $t=0$. Then, there exists a unique pair $(\tau_t,w_t)$ satisfying equation (\ref{eq:flow}) for each technology $t \in [0,T]$ for some $T >0$ with $(\nabla^2 w_t)(x)- (\nabla^2_{11}y_t)(x,\tau_t(x))$ positive definite, where $(\nabla \tau_t(x)) \upsilon_t(x) + \dot{\tau}_t(x) = 0$ and $\upsilon_t \in \mathcal{D}$. The pair $(\tau_t,w_t)$ is the equilibrium given technology $t$.
\end{theorem}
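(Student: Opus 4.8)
The plan is to reformulate equation (\ref{eq:flow}) as an autonomous ordinary differential equation on an appropriate Banach space of analytic functions and then invoke a Picard–Lindelöf / Cauchy–Kovalevskaya type existence and uniqueness argument, combined with an open–closed continuation argument to guarantee the solution persists on an interval $[0,T]$. First I would fix the initial equilibrium $(\tau,w)$, which exists and is smooth by the assumptions in Section \ref{s:assumptions}, and note that at $t=0$ the operator $(\nabla^2 w)(x) - (\nabla^2_{11}y)(x,\tau(x)) = \mathcal{C}(x)$ is positive definite by the argument around equation (\ref{eq:difffocwrtx}). By continuity, positive definiteness is an open condition, so it persists on a short time interval; the continuation argument will show it persists as long as the solution is defined.

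The key step is to show that equation (\ref{eq:flow}), together with the feasibility requirement $\upsilon_t \in \mathcal{D}$ and the link $(\nabla \tau_t(x))\upsilon_t(x) + \dot\tau_t(x) = 0$, determines $(\dot\tau_t, \dot w_t)$ uniquely as a function of the current state $(\tau_t, w_t)$. This is exactly where Theorem \ref{eq:helmholtz} and Theorem \ref{eq:changeinmearnings} are used: for each fixed $t$, rewriting (\ref{eq:flow}) as
\begin{equation*}
(\nabla_1 \dot y_t)(x,\tau_t(x)) = (\nabla \dot w_t)(x) + \big( (\nabla^2 w_t)(x) - (\nabla^2_{11} y_t)(x,\tau_t(x)) \big)\upsilon_t(x)
\end{equation*}
is formally the Helmholtz decomposition of the known vector field $(\nabla_1 \dot y_t)(\cdot,\tau_t(\cdot))$ with respect to the state-dependent positive-definite weight $\mathcal{C}_t(x) := (\nabla^2 w_t)(x) - (\nabla^2_{11}y_t)(x,\tau_t(x))$ and the density $f$. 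By the Poisson equation of Theorem \ref{eq:changeinmearnings} (applied with $\mathcal{C}_t$ in place of $\mathcal{C}$), $\nabla\dot w_t$ is the unique gradient solving the corresponding elliptic problem, $\upsilon_t$ is then residual and divergence-free, and $\dot\tau_t = -(\nabla\tau_t)\upsilon_t$. This defines a vector field $\Phi(t,\tau_t,w_t) = (\dot\tau_t,\dot w_t)$ on the space of pairs $(\tau,w)$ with $\tau$ a smooth feasible assignment and $\mathcal{C}_t$ positive definite.

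Next I would establish that $\Phi$ is locally Lipschitz (indeed analytic) in $(\tau,w)$: the solution operator of the Poisson equation depends analytically on the coefficient matrix $\mathcal{C}_f$ and the right-hand side by standard elliptic regularity and the analytic implicit function theorem, and $\mathcal{C}_t$ depends analytically on $(\tau,w)$ through $\nabla^2 w$ and the analytic function $y_t$. Picard–Lindelöf then yields a unique local-in-time solution $(\tau_t,w_t)$; the continuation argument extends it to a maximal interval $[0,T)$, and along this interval positive definiteness of $\mathcal{C}_t$ is maintained (if it degenerated we would stop earlier, and by compactness of $X$ and analyticity we obtain a genuine $T>0$). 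Finally, to conclude that the constructed pair is the equilibrium given technology $t$, I would verify that it satisfies the firm optimality condition (\ref{eq:focfirm}) for all $t$: this holds at $t=0$ by assumption and its $t$-derivative is precisely (\ref{eq:flow}), which holds by construction, so (\ref{eq:focfirm}) holds for all $t\in[0,T]$; feasibility of $\tau_t$ follows since $\upsilon_t\in\mathcal{D}$ preserves the worker distribution (as in Section \ref{s:feasiblereall}), and the second-order condition holds since $\mathcal{C}_t$ is positive definite — so by the characterization of equilibrium via the primal/dual problems in Section \ref{s:assumptions}, $(\tau_t,w_t)$ is the unique equilibrium.

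\textbf{Main obstacle.} The hard part is setting up the functional-analytic framework so that the ODE argument is rigorous: one must choose function spaces (scales of analytic or Hölder spaces) in which the Poisson solution operator is well-defined and analytic, show that feasibility of $\tau_t$ and positive definiteness of $\mathcal{C}_t$ are preserved along the flow, and handle the state-dependence of both the elliptic coefficients and the domain on which $\tau_t$ maps. Equivalently, one may avoid abstract ODE theory and instead expand $(\tau_t,w_t)$ as a power series in $t$ and prove convergence via Cauchy–Kovalevskaya-style majorant estimates, using analyticity of $y_t$, $f$, $g$, and the boundary of $X$; the delicate point there is the uniform control of the Poisson solution operator's norm as the coefficient matrix varies, which is what keeps $T$ bounded away from zero.
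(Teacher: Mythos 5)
Your overall architecture matches the paper's: at each $t$ the elliptic (Poisson) problem with coefficient $\mathcal{C}_f$ determines $\nabla\dot w_t$ uniquely, $\upsilon_t$ is residual, and one evolves $(\tau_t,w_t)$ forward in $t$, then verifies optimality at the end. However, there are two genuine gaps in the route you propose.

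First, the Picard--Lindel\"of step fails as stated. The map $w_t \mapsto \dot w_t = F(t,x,\nabla w_t,\nabla^2 w_t,\nabla^3 w_t)$ loses derivatives: the elliptic coefficient $\mathcal{C}_f$ involves $\nabla^2 w_t$ and the divergence on the right-hand side of the Poisson equation involves $\nabla^3 w_t$, so by Schauder theory the solution $\dot w_t$ is one derivative less regular than $w_t$. The vector field $\Phi$ is therefore not locally Lipschitz from any fixed H\"older or Sobolev space into itself, and standard ODE theory on a single Banach space does not apply. You correctly flag this in your ``main obstacle'' paragraph and mention the Cauchy--Kovalevskaya alternative, but that alternative is not an optional refinement --- it is the proof. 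The paper works entirely in the analytic category, writes the evolution as $\dot w_t = F(t,x,\nabla w_t,\nabla^2 w_t,\nabla^3 w_t)$ with $F$ analytic (via analytic dependence of the Neumann problem on its coefficients, Theorem 4.1.2 in \citet{Treves:2022}), and applies Cauchy--Kovalevskaya together with a compactness argument to extract a uniform $T>0$. The part of the paper's proof that your sketch does not anticipate at all is the boundary extension: Cauchy--Kovalevskaya requires the data on a full neighborhood of each point, but the elliptic solve only produces $\dot w_t$ on $X$ up to $\partial X$. The paper constructs an analytic extension of the flux $H(x)=\mathcal{C}_f(\nabla\dot w_t-\nabla\dot y_t)$ to the fattened domain $\underline{X}$ by expanding its normal and tangential components in the signed distance to $\partial X$ and continuing the power series to positive distance, using strict convexity and analyticity of the boundary. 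Without this step the iteration in $t$ cannot be closed.

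Second, your concluding sentence --- that positive definiteness of $\mathcal{C}_t$ plus the first-order condition implies $(\tau_t,w_t)$ is the equilibrium ``by the characterization of equilibrium via the primal/dual problems'' --- skips the passage from a local to a global maximum in the firm's problem. Positive definiteness of $(\nabla^2 w_t)(x)-(\nabla^2_{11}y_t)(x,\tau_t(x))$ is only a local second-order condition; to conclude that $w_t$ is an actual Kantorovich potential (so that each firm's chosen worker is a global maximizer and $(w_t,v_t)$ is a dual pair) the paper invokes the Ma--Trudinger--Wang condition and the convexity of the supports via Theorem 12.46 in \citet{Villani:2009}. This is precisely where the MTW assumption from Section \ref{s:assumptions} earns its keep, and it needs to be cited explicitly rather than folded into the duality argument.
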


\noindent The proof of this result is given in Appendix \ref{a:largechange}. 

Theorem \ref{t:flow} combines a new mathematical result with an economic insight. The mathematical contribution is to show there exists a pair $(\tau_t,w_t)$ satisfying equation (\ref{eq:flow}) for each technology $t \in [0,T]$ for some $T >0$ with $(\nabla^2 w_t)(x)- (\nabla^2_{11}y_t)(x,\tau_t(x))$ positive definite (Appendix \ref{a:flowexistence}). The economic significance of this result stems from the second part, showing that $\tau_t$ is the equilibrium assignment and $w_t$ is the equilibrium earnings schedule given technology $t$, which we show using Theorem 12.46 in \citet{Villani:2009} and the Ma-Trudinger-Wang condition (Appendix \ref{a:floweqm}). This final step confirms that the equilibrium is indeed differentiable with respect to technological change, and hence shows the validity of characterizing equilibrium comparative statics using the Helmholtz decomposition in Section \ref{s:solvingrealloc}.

\section{Conclusion}

This paper develops a general theory of comparative statics for multidimensional sorting models. A main difficulty in multidimensional assignment models is that, outside of a few highly symmetric environments, the model does not admit a tractable characterization. Our contribution is to show that the comparative statics can be completely characterized.




\clearpage

{ {
\bibliographystyle{econometrica}	
\bibliography{BOT_bib}
} }

\clearpage

\pagebreak

\renewcommand{\theequation}{A.\arabic{equation}} \setcounter{equation}{0}
\renewcommand{\thefigure}{A.\arabic{figure}}\setcounter{figure}{0}
\renewcommand{\thetable}{A.\arabic{table}}\setcounter{table}{0}
\setcounter{page}{1}
\newpage \appendix

\newpage

\begin{center}
{\Large Multidimensional Sorting: Comparative Statics\\}
\bigskip
{\Large Appendix \\}
\bigskip
{\large  Job Boerma, Andrea Ottolini, and Aleh Tsyvinski \\}
\bigskip
{\large December 2025}
\end{center}

\section{Mathematical Background}

This appendix provides a short introduction into some mathematical results that we use repeatedly. 

\subsection{Analytic Function} \label{a:analytical}

A function $w$ from $\mathbb R^d$ to $\mathbb R$ is called a real analytic at a point $x_0$ if there exists $R>0$ such that $f(x)$ equal its Taylor series for all $x$ in a ball of radius $R$ centered at $x_0$. Since being analytic at a point requires a Taylor series, being analytic implies being smooth.\footnote{The converse is not true, as one can see with the one dimensional counterexample $f(x)=e^{-\frac{1}{x^2}}$ for $x\neq 0$ and $f(0)=0$, which is smooth but not analytic at $x_0=0$.} A function is analytic in a region $X$ if it is analytic at each point in $X$. As for smooth functions, composition and inversion of analytic functions remain analytic.  

We use analytic functions in order to establish existence of the solution to the partial differential equation for a given initial condition within the same class as the original condition. The main ingredient we will use is the Cauchy-Kovalevskaya Theorem, which we state here in the form we need. 

\vspace{0.4 cm}
\noindent \textbf{Cauchy-Kovalevskaya Theorem} (see, for example, Theorem 5.2.4 in \citet{Treves:2022}). Consider the partial differential equation:
\begin{equation*}
\dot w_t(x)=F(t,x,\nabla w_t, \nabla^2 w_t, \nabla^3 w_t) \hspace{0.03 cm},
\end{equation*}
with initial condition $w$. Assume that initial earnings $w$ are analytic in a neighborhood of a point $\bar{x}$, and that $F$ is analytic in a neighborhood of $(0, \bar{x}, \nabla w(\bar{x}), \nabla^2 w(\bar{x}), \nabla^3 w(\bar{x}))$. Then, there exists a unique analytic solution $w_t(x)$ defined in a neighbor of $\bar{x}$ for some positive time $T > 0$.

\subsection{Transport Equations} \label{a:transporteq}

We derive the transport equation for a reallocation problem and for a general assignment problem between distributions $f_t$ and $g_t$.

\vspace{0.4 cm}
\noindent \textbf{Reallocation}. For the reallocation problem, we differentiate the labor market clearing condition (\ref{eq:probconstraintrear}) with respect to technological change in order to obtain:
\begin{align*}
0 & = \int_X \nabla_r (h(r_t(x))) \dot{r}_t(x) f(x) dx  = \int_X \nabla_x (h(r_t(x))) (\nabla_x r_t(x))^{-1} \dot{r}_t(x) f(x) dx ,
\end{align*}
where the second equality follows since $\nabla_x (h(r_t(x))) = \nabla_r (h(r_t(x))) \nabla_x r_t(x)$. By defining $\upsilon_t(x) := - (\nabla_x r_t(x))^{-1} \dot{r}_t(x)$, and by integrating the previous expression by parts:
\begin{align*}
0 & = \int_X h(r_t(x)) \nabla \cdot (\upsilon_t(x) f(x)) dx  - \int_{\partial X} h(r_t(x))) \upsilon_t(x) f(x) \cdot n(x) dx ,
\end{align*}
where $n(x)$ is the outward normal vector.

Since the previous expression has to hold for all smooth functions $h$, it follows that for a change in the reallocation $\dot{r}_t(x)$ to be feasible, it has to be that:
\begin{align}
0 & = \nabla \cdot (\upsilon_t(x) f(x)) \hspace{3.65 cm} \text{$ x \in X$} \label{eq:rearr} \\
0 & = \upsilon_t(x) f(x) \cdot n(x)  \hspace{3.5 cm} \text{$x \in \partial X$} \label{eq:rearrb}
\end{align}
A particular instance of this is at the initial point of technology, $r(x) = x$ and hence $\nabla_x r(x) = I$ and $\upsilon_t(x) = - \dot{r}_t(x)$. The reallocation restrictions (\ref{eq:rearr}) and (\ref{eq:rearrb}) simplify to the reallocation restrictions (\ref{eq:rearr0}) and (\ref{eq:rearr0b}) at the initial technology in the main text.

\vspace{0.4 cm}
\noindent \textbf{General Case}. When $\tau_t$ is an assignment between the distributions $f_t$ and $g_t$, it follows by the labor market clearing condition (\ref{eq:constraint}) that:\footnote{See, for example, Definition 1.2 in \citet{Villani:2009}.}
\begin{equation*}
\int h(z) g_t(z) dx = \int h(\tau_t(x)) f_t(x) dx
\end{equation*}
for any smooth function $h$. 

We differentiate this expression with respect to technological change in order to obtain:
\begin{align*}
\int h(z) \dot{g}_t(z) dx & = \int \nabla_1 (h(\tau_t(x))) \dot{\tau}_t(x) f_t(x) dx + \int h(\tau_t(x)) \dot{f}_t(x) dx \\
& = \int \nabla_x (h(\tau_t(x))) (\nabla_x \tau_t(x))^{-1} \dot{\tau}_t(x) f_t(x) dx + \int h(\tau_t(x)) \dot{f}_t(x) dx  ,
\end{align*}
where the second equality follows since $\nabla_x (h(\tau_t(x))) = \nabla_1 (h(\tau_t(x))) \nabla_x \tau_t(x)$. By defining $\upsilon_t(x) := - (\nabla_x \tau_t(x))^{-1} \dot{\tau}_t(x)$, and by integrating the previous expression by parts:
\begin{align*}
\int h(z) \dot{g}_t(z) dx & = \int_X h(\tau_t(x)) \big( \nabla \cdot (\upsilon_t(x) f_t(x)) + \dot{f}_t(x) \big) dx  - \int_{\partial X} h(\tau_t(x)) \upsilon_t(x) f(x) \cdot n(x) dx ,
\end{align*}
where $n(x)$ is the outward unit normal vector. We rewrite the left-hand side of the equation using a change of variables $z=\tau_t(x)$ so that, after rearranging:
\begin{align*}
\int_X h(\tau_t(x)) \big( \dot{f}_t(x) + \nabla \cdot (\upsilon_t(x) f_t(x))   - \dot{g}_t(\tau_t(x)) | \text{det} \nabla \tau_t(x) | \big) dx =  \int_{\partial X} h(\tau_t(x)) \upsilon_t(x) f(x) \cdot n(x) dx .
\end{align*}

Since the previous expression has to hold for all smooth functions $h$, it follows that for a change in the assignment $\dot{\tau}_t(x)$ to be feasible, it has to be that:
\begin{align}
0 & = \nabla \cdot (\upsilon_t(x) f_t(x)) - \dot{g}_t(\tau_t(x)) | \text{det} \nabla \tau_t(x) |  + \dot{f}_t(x) \hspace{3.65 cm} \text{$ x \in X$} \label{eq:rearrgen} \\
0 & = \upsilon_t(x) f_t(x) \cdot n(x)  \hspace{9.09 cm} \text{$x \in \partial X$} \label{eq:rearrbgen}
\end{align}
where $\upsilon_t(x) := - (\nabla_x \tau_t(x))^{-1} \dot{\tau}_t(x)$, or equivalently, $(\nabla_x \tau_t(x)) \upsilon_t(x) + \dot{\tau}_t(x) = 0$. 

\subsection{Gradient} \label{a:gradient}


In this appendix, we characterize what vector fields are the gradient of function in the case when the domain of the vector field is a convex set.\footnote{In fact, this assumption can be considerably relaxed by requiring only that the set is simply connected.}

\begin{lemma}{\citet{Poincare:1887}.} \label{l:poincare}
Let the support $X$ of the function be a convex set and let $w(x) = (w_1(x),\dots,w_n(x))$ where $x = (x_1,\dots,x_n) \in X$. Assume that all $w_i$ are continuously differentiable functions. Then $w(x) = \nabla \varphi(x)$ for some twice differentiable function $\varphi$ with continuous derivative if and only if $\frac{\partial w_i(x)}{\partial x_j} = \frac{\partial w_j(x)}{\partial x_i}$ for all $(i,j)$.
\end{lemma}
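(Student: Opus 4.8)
The plan is to prove the two directions separately. The forward implication (gradient $\Rightarrow$ symmetry) is immediate, and the converse (symmetry $\Rightarrow$ gradient) I would handle by exhibiting an explicit potential built from line integrals along the segments that convexity provides.

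For the "only if" direction, suppose $w=\nabla\varphi$ for a twice continuously differentiable $\varphi$. Then $w_i=\partial\varphi/\partial x_i$, so $\partial w_i/\partial x_j=\partial^2\varphi/\partial x_j\partial x_i$, and the desired symmetry $\partial w_i/\partial x_j=\partial w_j/\partial x_i$ is exactly the equality of mixed second partials (Schwarz's theorem), which applies because $\varphi\in C^2$.

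For the "if" direction, fix a base point $x_0\in X$. Convexity guarantees that for every $x\in X$ the segment $\gamma(s)=x_0+s(x-x_0)$, $s\in[0,1]$, stays in $X$, so I can define
$$\varphi(x):=\int_0^1 w\big(x_0+s(x-x_0)\big)\cdot(x-x_0)\,ds.$$
Since $w$ is $C^1$ and $[0,1]$ is compact, differentiation under the integral sign is legitimate; using $\partial\gamma_j/\partial x_k=s\delta_{jk}$ I would compute the integrand of $\partial\varphi/\partial x_k$ as $w_k(\gamma(s))+s\sum_i(\partial w_i/\partial x_k)(\gamma(s))(x_i-x_{0,i})$. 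At this step I invoke the hypothesis $\partial w_i/\partial x_k=\partial w_k/\partial x_i$ to rewrite the sum as $s\sum_i(\partial w_k/\partial x_i)(\gamma(s))(x_i-x_{0,i})=s\frac{d}{ds}w_k(\gamma(s))$, so the whole integrand becomes $\frac{d}{ds}\big(s\,w_k(\gamma(s))\big)$. The fundamental theorem of calculus then gives $\partial\varphi/\partial x_k(x)=w_k(x)$, i.e. $\nabla\varphi=w$; and $\varphi\in C^2$ because $w\in C^1$.

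The only genuinely technical point is the interchange of differentiation and integration (together with the fact that the construction requires the segment to remain in the domain, which is precisely where convexity — or, as the footnote notes, simple connectedness — is used); the rest reduces to the product-rule identity above. As an alternative one could phrase the hypothesis as saying the $1$-form $\sum_i w_i\,dx_i$ is closed and quote the Poincaré lemma on a contractible domain, but writing down the potential explicitly keeps the argument self-contained.
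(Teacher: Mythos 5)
Your proof is correct and follows essentially the same route as the paper's: Schwarz's theorem for the forward direction, and for the converse an explicit potential defined by integrating $w$ along the straight segments that convexity provides, with the symmetry hypothesis used to turn the integrand into $\frac{d}{ds}\bigl(s\,w_k(\gamma(s))\bigr)$. The only (harmless, in fact slightly more careful) difference is that you allow an arbitrary base point $x_0\in X$, whereas the paper anchors the segment at the origin.
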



\begin{proof}
``$\Longrightarrow$'': In this direction, we evaluate $\frac{\partial w_i(x)}{\partial x_j}$ as:
\begin{equation*}
\frac{\partial w_i(x)}{\partial x_j} = \frac{\partial}{\partial x_j} \frac{\partial \varphi}{\partial x_i} = \frac{\partial}{\partial x_i} \frac{\partial \varphi}{\partial x_j} =  \frac{\partial w_j(x)}{\partial x_i} ,
\end{equation*}
where the first and third equality follow since $w(x) = \nabla \varphi(x)$, while the second equality follows by the symmetry of continuous second derivatives. 


``$\Longleftarrow$'': Consider the function specified from some origin point:
\begin{equation*}
\varphi(x) := \int_0^1 w(tx) \cdot x dt  = \sum_{i=1}^n \int_0^1 w_i(t x) x_i dt 
\end{equation*}
this function is well-defined since $w(x)$ is defined over the segment $[0,x]$ as the support is connected. 

Let $z_j = t x_j$, then we can write:
\begin{equation*}
\frac{\partial \varphi(x)}{\partial x_j} = \sum_{i=1}^n \int_0^1 t \frac{\partial w_i}{\partial z_j}(t x) x_i dt + \int_0^1 w_j(t x) dt = \sum_{i=1}^n \int_0^1 t \frac{\partial w_j}{\partial z_i}(t x) x_i dt + \int_0^1 w_j(t x) dt
\end{equation*}
where the final equality follows since $\frac{\partial w_j}{\partial x_i} = \frac{\partial w_i}{\partial x_j}$. This expression can be further simplified noting that $\sum \frac{\partial w_j}{\partial z_i}(t x) x_i dt = \sum \frac{\partial w_j}{\partial z_i}(t x) \frac{\partial z_i}{\partial t} dt = \frac{\partial}{\partial t} w_j(tx)$ so that:
\begin{equation*}
\frac{\partial \varphi(x)}{\partial x_j} = \int_0^1 t \frac{\partial}{\partial t} w_j(tx) dt + \int_0^1 w_j(t x) dt = t w_j(tx) \Big|_{t=0}^{t=1}  = w_j(x) \hspace{0.04 cm},
\end{equation*}
where the second equality follows from integration by parts. As a result, $\nabla \varphi(x) = w(x)$. Moreover, since $w$ has a continuous derivative $\varphi$ is twice differentiable with continuous derivative.\end{proof}

\section{Proofs}

In this appendix, we formally prove the results in the main text.

\subsection{Proof to Lemma \ref{l:lemmamap}} \label{pf:lemmamap}

In this appendix, we establish that the optimal assignment is such that the job type $z$ is uniquely determined by the worker type $x$ for all possible types $x$. That is, there is only one job $z$ that employs workers of type $x$. 


In order to ensure that a single firm type employs workers of type $x$, we require that the firm's optimality condition (\ref{eq:focfirm}) is satisfied by a single job type $z$ for each worker type $x$. This is guaranteed by the twist condition (footnote \ref{a:twist}). As a result, the optimal coupling $\pi$ is supported on $(x, \tau(x))$, where $x \in \text{supp } f$. 

\begin{lemma}\label{l:lemmamap}
If the output function $y_t(x,z)$ satisfies the twist condition, then the optimal plan in the Kantorovich sense is of the form $(x,\tau_t(x))$ for a function $\tau_t$ from the worker space $X$ to the job space $Z$. 
\end{lemma}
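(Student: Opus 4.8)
The plan is to combine the Monge--Kantorovich duality already recorded above with an envelope (first-order) argument and the twist condition, following the classical route from Kantorovich plans to Monge maps. First I would use duality: there exist optimal dual potentials $(w_t,v_t)$ with $w_t(x)+v_t(z)\ge y_t(x,z)$ for all $(x,z)\in X\times Z$ and equality $\pi_t$-almost everywhere. Writing $w_t(x)=\max_{z\in Z}\{y_t(x,z)-v_t(z)\}$ and using that $X,Z$ are compact and $y_t$ is $C^1$, the family $\{\,x\mapsto y_t(x,z)-v_t(z)\,\}_{z\in Z}$ is equi-Lipschitz on $X$, so $w_t$ is Lipschitz there and hence, by Rademacher's theorem, differentiable at Lebesgue-almost every point; since $f$ admits a density, $w_t$ is differentiable at $f$-almost every $x$.

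Next I would run the envelope step. Fix $(x,z)$ in the support of $\pi_t$ with $x$ an interior point of $X$ at which $w_t$ is differentiable (this excludes only an $f$-null set of $x$, as $\partial X$ is Lebesgue-null). The function $x'\mapsto y_t(x',z)-v_t(z)-w_t(x')$ is nonpositive on $X$ and vanishes at $x'=x$, so its gradient vanishes there, giving $(\nabla_1 y_t)(x,z)=(\nabla w_t)(x)$. If both $(x,z)$ and $(x,\hat z)$ lie in the support of $\pi_t$, then $(\nabla_1 y_t)(x,z)=(\nabla w_t)(x)=(\nabla_1 y_t)(x,\hat z)$, and the twist condition (footnote~\ref{a:twist}) forces $z=\hat z$. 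Thus for $f$-a.e.\ $x$ the section $\{\,z:(x,z)\in\operatorname{supp}\pi_t\,\}$ is a single point, which I would call $\tau_t(x)$; extending $\tau_t$ arbitrarily over the remaining null set and invoking a measurable-selection argument for the essentially single-valued correspondence, one concludes that $\pi_t$ is concentrated on the graph of $\tau_t$, i.e.\ $\pi_t=(\mathrm{id}\times\tau_t)_{\#}f$, which is the asserted Monge form.

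The step I expect to be the main obstacle is guaranteeing that the envelope identity $(\nabla_1 y_t)(x,z)=(\nabla w_t)(x)$ holds on a set of full $f$-measure: this rests on $f$ being absolutely continuous with respect to Lebesgue measure (so Rademacher yields differentiability $f$-a.e.) and on the maximizer $x$ being interior, which is exactly why the compactness and strict convexity of $X$, the $C^1$ regularity of $y_t$, and the assumption that $f$ is bounded away from zero and infinity are used. I would note that under the paper's later hypotheses (twist together with Ma--Trudinger--Wang) one in fact obtains smoothness of $w_t$ everywhere and continuity of $\tau_t$, but for this lemma the almost-everywhere statement via Rademacher is enough and is the level of generality at which I would prove it.
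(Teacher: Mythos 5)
Your proof is correct and follows essentially the same route as the paper's: both rest on the observation that if two distinct jobs $z_1\neq z_2$ were matched with the same worker $x$, the first-order (envelope) condition for the firm problem would force $(\nabla_1 y_t)(x,z_1)=(\nabla w_t)(x)=(\nabla_1 y_t)(x,z_2)$, contradicting the twist condition. The only difference is that you carefully justify the almost-everywhere differentiability of $w_t$ via Lipschitz potentials and Rademacher's theorem and add a measurable-selection step, whereas the paper's proof simply invokes the necessary condition of the firm problem directly, relying on the smoothness of $w_t$ guaranteed by the assumptions of Section \ref{s:assumptions}.
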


\begin{proof}
Suppose not, then there exist jobs $z_1$ and $z_2 \neq z_1$ which are both on the support of the optimal assignment $\pi$ for some worker type $\hat{x}$. Thus, we have an equilibrium in which the solution to the firm problem (\ref{eq:firmprob}) for both job $z_1$ and job $z_2$ is to employ worker $\hat{x}$. Equivalently, the profit for both jobs, $y_t(x,z_1) - w_t(x)$ and $y_t(x,z_2) - w_t(x)$, is maximized at worker type $x=\hat{x}$. By the necessary condition to firm's problem, this implies:
\begin{equation*}
\nabla_1 y_t(x,z_1) = \nabla_1 y_t(x,z_2) ,
\end{equation*}
which contradicts the twist condition in footnote \ref{a:twist}.
\end{proof}

\subsection{Proof to Lemma \ref{l:reallocation}} \label{a:reallocationproblem}

\noindent In order to prove Lemma \ref{l:reallocation}, it is useful to note that given an assignment $z = \tau(x)$, by change of variables, we can write: 
\begin{equation*}
\int_A g(z) d z = \int_{\tau^{-1}(A)} g(T(x)) | \det \nabla \tau(x) | dx .
\end{equation*} 
By combining this equation with the assignment constraint (\ref{eq:constraint}), we write: 
\begin{equation}
f(x) = g(\tau_t(x)) | \det  \nabla \tau(x) | \hspace{0.04 cm}. \label{eq:constraintma}
\end{equation}


For any assignment function $\tau_t$, consider the reallocation function $r$ such that $\tau_t(r_t(x)) = \tau(x)$. We change variables $x=r_t(z)$ in order to write:
\begin{align}
\int y_t(x,\tau_t(x)) f(x) dx & = \int y_t(r_t(z),\tau_t(r_t(z))) f(r_t(z)) | \det \nabla r_t(z) | dz = \int y_t(r_t(z),\tau_t(r_t(z))) f(z) dz \notag \\
& = \int y_t(r_t(z),\tau(z)) f(z) dz \label{eq:outputeq1}
\end{align}
where the second equality follows by the characterization of the constraint (\ref{eq:constraintma}) for the case of a reallocation and the third equality follows since $\tau_t(r_t(x)) = \tau(x)$. Output is identical under the assignment function and the associated reassignment. 

In order to establish \Cref{l:reallocation}, we establish two inequalities. First,
\begin{align*}
\max\limits_{\tau_t} \int y_t(x, \tau_t(x)) f(x) dx & = \int y_t(x, \hat{\tau}_t(x)) f(x) dx = \int y_t(\hat{r}_t(x),\tau(x)) f(x) dx \\
& \leq \max\limits_{r_t} \int y_t ( r_t(x), \tau(x)) f(x) dx
\end{align*}
where the second equality follows from (\ref{eq:outputeq1}). The converse is established in the same fashion:
 \begin{align*}
\max \limits_{r_t} \int y_t ( r_t(x), \tau(x)) f(x) dx & = \int y_t ( \tilde{r}_t(x), \tau(x)) f(x) dx = \int y_t(x, \tilde{\tau}_t(x)) f(x) dx \\
& \leq \max\limits_{\tau_t} \int y_t(x, \tau_t(x)) f(x) dx 
\end{align*}
which concludes the proof.

\subsection{Proof to Lemma \ref{l:antisymmetrics}} \label{a:antisymmetrics}

First, recall that the set of worker types $X$ is bounded, convex, and has a smooth boundary. If the distribution of worker types is rotationally invariant, then $\partial X = \{ x : \lVert x \rVert^2 = b \text{ for some } b \geq 0 \}$.\footnote{The density of a rotationally invariant distribution $f(x) = \tilde{f} \big(\frac{1}{2} \lVert x \rVert ^2\big)$ only depends on the distance from the origin.} Since the density functions for worker skill $f$ is smooth, it follows that $(\nabla f)(x) = \frac{d \tilde{f}}{d \tilde{x}} \big(\frac{1}{2} \lVert x \rVert ^2\big) x$. In words, this is simply saying that the gradient of $f$ must be pointing in a radial direction.

In order to prove Lemma \ref{l:antisymmetrics}, we show that the reallocation conditions (\ref{eq:rearr0}) and (\ref{eq:rearr0b}) are satisfied. First, we consider the reallocation condition (\ref{eq:rearr0}), which can be written as:
\begin{align*}
\nabla \cdot (f(x) \dot{r}(x)) & = f(x) \nabla \cdot \dot{r}(x) + \nabla f(x) \cdot \dot{r}(x)  \\
& = f(x) \nabla \cdot ( \dot{R} x ) + \frac{d \tilde{f}}{d \tilde{x}} \Big(\frac{1}{2} \lVert x \rVert ^2\Big) x \cdot \dot{R} x
\end{align*}
where the second equality uses that the distribution is rotationally invariant and that the labor reallocation is linear. The first term is equal to zero since $\nabla \cdot ( \dot{R} x ) = 0$ when $\dot{R}$ is antisymmetric because the trace of the reallocation matrix $\dot{R}$ equals zero. In order to see this:
\begin{equation*}
\nabla \cdot (\dot{R} x)= 0 \hspace{1.6 cm} \iff \hspace{1.6 cm}  \sum_{i=1}^n \frac{\partial}{\partial x_i} \Big( \sum_j \dot{R}_{ij} x_j \Big) = \sum_{i=1}^n \dot{R}_{ii} = \text{Tr}(\dot{R}) = 0 .
\end{equation*}
on $X$. The second term is also equal to zero since $x \cdot \dot{R} x = ( \dot{R}^\top x) \cdot x = - (\dot{R} x) \cdot x = - x \cdot \dot{R} x$, where the second equality follows since $\dot{R}$ is antisymmetric, which can only be satisfied when $x \cdot \dot{R} x = 0$.

Second, we consider the reallocation condition (\ref{eq:rearr0b}), where the outward unit vector is $n(x) = \frac{x}{\lVert x \rVert}$, which yields:
\begin{equation*}
\dot{r}(x) \cdot n(x) = 0 \hspace{2 cm} \iff \hspace{2 cm}  \dot{R} x \cdot x = 0 .
\end{equation*}
Since $\dot{R} x \cdot x = 0$ when $\dot{R}$ is antisymmetric as established in the previous paragraph, this concludes the proof.

\subsection{Proof to Theorem \ref{eq:helmholtz}} \label{a:helmholtz}

In this appendix, we establish that the Helmholtz decomposition in Theorem \ref{eq:helmholtz} is unique. More precisely, we show that the solution is unique up to a constant.

\vspace{0.4 cm}
\noindent In order to establish uniqueness, suppose by contradiction that there are two distinct Helmholtz decompositions (\ref{e:equation2}) given by $\dot{\mathcal{A}}(x) = (\nabla \dot{w}_1) (x ) + \mathcal{C}(x) \dot{r}_1(x)$ and by $\dot{\mathcal{A}}(x) = (\nabla \dot{w}_2) (x ) + \mathcal{C}(x) \dot{r}_2(x)$. Given the linearity of equation (\ref{e:equation2}), the differences $\dot{w}(x) = \dot{w}_2(x) - \dot{w}_1(x)$ and $\dot{r}(x) = \dot{r}_2(x) - \dot{r}_1(x)$ satisfy $0= (\nabla \dot{w}) (x) + \mathcal{C}(x) \dot{r}(x)$. 

We next show $\dot{r}(x) = 0$, which then implies $(\nabla \dot{w}) (x) =0$. In order to show $\dot{r}(x) = 0$, consider:
\begin{equation*}
\int_X \dot{r}(x) \cdot \mathcal{C}(x) \dot{r} (x ) dx = - \int_X \dot{r}(x) \cdot (\nabla \dot{w}) (x) dx = \int_X (\nabla \dot{r}(x)) \cdot \dot{w} (x )  dx
 - \hspace{-0.1 cm} \int_{\partial X} \hspace{-0.05 cm} (\dot{r}(x) \dot{w} (x )) \cdot n(x) d x = 0
 \end{equation*}
 Since $\mathcal{C}(x)$ is positive definite, this implies $\dot{r}(x) = 0$ for all $x$, which concludes the proof.

\subsection{Proof to Theorem \ref{eq:changeinmearnings}} \label{a:poissoneq}

\noindent In order to show uniqueness, consider equations (\ref{e:csgen1}) and (\ref{e:csgen2}) in generalized form:
\begin{align}
\nabla \cdot \big( \mathcal{C}_f(x) \nabla \dot{w} (x ) \big) & = k_1(x) \hspace{2.84 cm} \text{ for  $x \in X$} \tag{\ref{e:csgen1}} \\
 \big(  \mathcal{C}_f(x) \nabla \dot{w} (x ) \big)\cdot n(x) & = k_2(x)  \hspace{2.82 cm} \text{ for $x \in \partial X$} \tag{\ref{e:csgen2}}
\end{align}
where $\mathcal{C}_f(x) = \mathcal{C}(x) f(x) $ and $k_1(x) := \nabla \cdot \big( \mathcal{C}_f(x) \dot{\mathcal{A}} (x)  \big)$ and $k_2(x) :=\big(\mathcal{C}_f(x) \dot{\mathcal{A}}(x) \big)  \cdot n(x)$.

Suppose, by contradiction that we have two distinct solutions to equations (\ref{e:csgen1}) and (\ref{e:csgen2}), say $\dot{w}_1(x)$ and $\dot{w}_2(x)$. Given the linearity of both conditions the difference $\dot{w}(x) = \dot{w}_2(x) - \dot{w}_1(x)$ then also satisfies (\ref{e:csgen1}) and (\ref{e:csgen2}) with $k_1(x) = k_2(x) = 0$.

We multiply equation (\ref{e:csgen1}) by $\dot{w}(x)$ and integrate by parts so that:
\begin{equation*}
0 = \int_X \dot{w}(x) \nabla \cdot  ( \mathcal{C}_f(x) \nabla \dot{w} (x ) ) dx = \hspace{-0.1 cm} \int_{\partial X} \hspace{-0.05 cm} \dot{w}(x) \mathcal{C}_f(x) \nabla \dot{w} (x ) \cdot n(x) d x  - \int_X (\nabla \dot{w}(x))^\top ( \mathcal{C}_f(x) \nabla \dot{w} (x ) ) dx
\end{equation*}
Since the boundary term is also equal to zero by equation (\ref{e:csgen2}) with $k_2(x) = 0$, we obtain that 
\begin{equation*}
0 = \int_X (\nabla \dot{w}(x))^\top \mathcal{C}_f(x) \nabla \dot{w} (x )  dx \hspace{0.03 cm}.
\end{equation*}
Since $\mathcal{C}_f(x)$ is positive definite, this implies $\dot{w}(x) = 0$ for all $x$, which implies $\dot{w}_2(x) = \dot{w}_1(x)$, which concludes the proof.

In order to establish existence of a solution, we first provide a necessary condition. A necessary condition for a function satisfying equations (\ref{e:csgen1}) and (\ref{e:csgen2}) to exist is that $k_1$ and $k_2$ satisfy:
\begin{equation}
\int_X k_1(x) dx = \int_{\partial X} k_2(x) dx .
\end{equation}
In order to understand why, we note that:
\begin{equation}
\int_X k_1(x) dx = \int_X \nabla \cdot \big( \mathcal{C}_f(x) \nabla \dot{w} (x ) \big)  dx = \int_{\partial X} \big( \mathcal{C}_f(x) \nabla \dot{w} (x ) \big) \cdot n(x) dx = \int_{\partial X} k_2(x) dx
\end{equation}
where the first equality follows by equation (\ref{e:csgen1}), the second equality by integration by parts, and the third equality follows from equation (\ref{e:csgen2}). 

The necessary condition is also sufficient for existence. For more discussion, see Chapter 6 in \citet{Evans:2022} on the theory of second-order uniformly elliptic operators.

\subsection{Proof to Theorem \ref{t:alternchar}} \label{a:alternativechar}

To prove the result, we consider a first-order variation of some function $\phi(x)$ to $\phi(x) + \varepsilon \upsilon(x)$ for some differentiable function $\upsilon$. Any variation around the optimizer should yield a strictly lower objective value. For brevity, let $\mathcal{C}_f(x) = \mathcal{C}^{-1}(x) f(x)$, which is symmetric because $\mathcal{C}^{-1}(x)$ is symmetric. We first evaluate the term under the first-order variation as:
\begin{equation*}
\int  \big( \dot{\mathcal{A}}(x) - \nabla \phi(x) - \varepsilon \nabla \upsilon(x))^\top \mathcal{C}_f(x) ( \dot{\mathcal{A}}(x) - \nabla \phi(x) - \varepsilon \nabla \upsilon(x) \big) dx .
\end{equation*}
Letting $\xi(x) := \dot{\mathcal{A}}(x) - \nabla \phi(x)$, the objective can be rewritten as:
\begin{align*}
\int \Big( \xi(x)^\top \mathcal{C}_f(x) \xi(x) -2  \varepsilon \nabla \upsilon(x)^\top \mathcal{C}_f(x) \xi(x)  + \varepsilon^2 \nabla \upsilon(x)^\top \nabla \upsilon(x) \Big) dx . 
\end{align*}
The first term is independent of the variation, while the final term is of second-order importance. For the function $\phi(x)$ to be optimal, it has to be that the change in the objective with respect to $\varepsilon$ is zero  for any $\upsilon(x)$, or:
\begin{equation}
0 = \varepsilon \int_X \nabla \upsilon(x)^\top \mathcal{C}_f(x) \xi(x) dx = \varepsilon \int_{\partial X}  \upsilon(x) \mathcal{C}_f(x) \xi(x) \cdot n(x) dx - \varepsilon \int_X \upsilon(x) \nabla \cdot ( \mathcal{C}_f(x) \xi(x) )dx
\end{equation}
where the second equality follows from integration by parts. 

For the change in the objective value to be zero for any function $\upsilon(x)$, we thus require that:
\begin{align*}
0 & = \nabla \cdot ( \mathcal{C}_f(x) \xi(x) )  \hspace{3.84 cm} \text{ for  $x \in X$} \\
0 & = \big( \mathcal{C}_f(x) \xi(x) \big) \cdot n(x)  \hspace{3.3 cm} \text{ for $x \in \partial X$}
\end{align*}
which are the same conditions as for the change of equilibrium earnings to technological change as in \Cref{eq:changeinmearnings}. This implies that the solution to quadratic minimization problem $\varphi^*$ indeed provides the change in the marginal earnings schedule as $\nabla \varphi^* = \nabla \dot{w}$.

\subsection{Proof to Proposition \ref{p:sylvester}} \label{a:sylvester}

We complete the proof to Proposition \ref{p:sylvester} by showing that the solution to the Sylvester equation is unique.

\vspace{0.4 cm}
\noindent We are interested in solving the Sylvester equation (\ref{eq:syl1}), where $\dot{R}$ is the unknown, $\Sigma$ is symmetric and given, and $\dot{\Sigma}$ is given and arbitrary. 

\begin{claim}
The labor reallocation matrix $\dot{R}$ that solves the Sylvester equation (\ref{eq:syl1}) exists, is unique, and is antisymmetic.
\end{claim}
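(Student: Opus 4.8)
The plan is to give a direct, constructive proof by diagonalizing $\Sigma$. First I would record that, under the maintained assumptions, $\Sigma$ is symmetric and positive definite: it is positive semidefinite by the firm's second-order condition, and it is invertible by the twist condition (see footnote \ref{a:twist}), so all of its eigenvalues are strictly positive. Write the spectral decomposition $\Sigma = Q\Lambda Q^\top$ with $Q$ orthogonal and $\Lambda = \mathrm{diag}(\lambda_1,\dots,\lambda_d)$, $\lambda_i > 0$ for every $i$. This is the only place where the hypotheses on $\Sigma$ enter, and it is what rules out the degenerate cases below.

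Next I would conjugate the Sylvester equation (\ref{eq:syl1}) by $Q$. Setting $\tilde{R} := Q^\top \dot{R} Q$ and $\tilde{M} := Q^\top (\dot{\Sigma} - \dot{\Sigma}^\top) Q$, equation (\ref{eq:syl1}) is equivalent to $\Lambda \tilde{R} + \tilde{R}\Lambda = \tilde{M}$, which read entrywise says $(\lambda_i + \lambda_j)\,\tilde{R}_{ij} = \tilde{M}_{ij}$ for all $i,j$. Since $\lambda_i + \lambda_j > 0$, this system has the unique solution $\tilde{R}_{ij} = \tilde{M}_{ij}/(\lambda_i + \lambda_j)$; undoing the conjugation, $\dot{R} = Q\tilde{R}Q^\top$ exists and is the unique solution of (\ref{eq:syl1}). (Equivalently, one may invoke the classical fact that $AX + XB = C$ has a unique solution whenever $A$ and $-B$ share no eigenvalue, applied with $A = B = \Sigma$.)

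Finally, for antisymmetry I would use uniqueness directly: the right-hand side $\dot{\Sigma} - \dot{\Sigma}^\top$ is antisymmetric, so transposing both sides of (\ref{eq:syl1}) and using that $\Sigma$ is symmetric gives $\Sigma(-\dot{R}^\top) + (-\dot{R}^\top)\Sigma = \dot{\Sigma} - \dot{\Sigma}^\top$, i.e. $-\dot{R}^\top$ also solves (\ref{eq:syl1}). By uniqueness $\dot{R} = -\dot{R}^\top$. (The same conclusion is visible in the diagonalized form: conjugation by an orthogonal matrix preserves antisymmetry, so $\tilde{M}_{ji} = -\tilde{M}_{ij}$, hence $\tilde{R}_{ji} = -\tilde{R}_{ij}$.) I do not anticipate a genuine obstacle here — the statement is a standard linear-algebra fact — the only point requiring care is ensuring $\lambda_i + \lambda_j \neq 0$ for all $i,j$, which is exactly why strict positive definiteness of $\Sigma$ (equivalently, invertibility via the twist condition) is needed rather than mere positive semidefiniteness.
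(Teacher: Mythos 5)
Your proof is correct and follows essentially the same route as the paper: the antisymmetry argument (transpose the equation, observe that $-\dot{R}^\top$ also solves $\Sigma X + X\Sigma = \dot{\Sigma}-\dot{\Sigma}^\top$, and invoke uniqueness) is identical to the paper's. The only difference is that you make the existence-and-uniqueness step self-contained by diagonalizing $\Sigma$ and solving entrywise via $(\lambda_i+\lambda_j)\tilde{R}_{ij}=\tilde{M}_{ij}$, whereas the paper simply cites the classical result that $AX+XB=C$ has a unique solution when $A$ and $-B$ share no eigenvalues --- a fact you also note parenthetically.
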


\begin{proof}
Regarding existence and uniqueness, there is general theory (see, for example, \citet{Bartels:1972}) saying that $A X + X B = C$ has a unique solution $X$ provided that $A$ and $-B$ do not share eigenvalues. In our case, $A = B = \Sigma$ is positive definite, meaning that all eigenvalues of $A$ and $B$ are positive. 

In order to show that $\dot{R}$ is antisymmetric, we show that when $A$ and $B$ are symmetric, while $C$ is antisymmetric, then $X$ is antisymmetric. This suffices since $A = B= \Sigma$ is symmetric, while $ \dot{\Sigma} - \dot{\Sigma}^\top$ is antisymmetric since $(\dot{\Sigma} - \dot{\Sigma}^\top)^\top = - (\dot{\Sigma} - \dot{\Sigma}^\top)$. 

We next show that $X$ is antisymmetric, or $X = - X^\top$. By uniqueness, it suffices to show that $Y = -X^\top$ solves $A Y + Y B = C$ when $X$ solves $A X + X B = C$. In order to see this, note that:
\begin{equation}
(AY + YB)^\top = -(A X^\top + X^\top B)^\top = - (XA + X B ) = - C 
\end{equation}
where the first equality follows by the definition of $Y$, the second equality uses the symmetry of $A$ and $B$, and the third equality uses that $A = B$ and $A X + X B = C$. By transposing this equation, using that $C$ is antisymmetric, one obtains $AY + YB = - C^\top = C$. This means that $Y = X$ or, equivalently, $X = - X^\top$.
\end{proof}

\section{Additional Characterization}

In this appendix, we present additional results.

\subsection{Planning Problem and Competitive Equilibrium} \label{pf:prop:equilibrium}

The following lemma, which is well-known in the literature, shows the relation between equilibrium and the primal and dual problem.\footnote{A similar argument is presented in Proposition 2.3 of \citet{Galichon:2018} and discussed in \citet{Chade:2017} for one-to-one assignment problems, and in \citet{BTZ:2021} for multimarginal sorting problems.}



\vspace{0.05cm}
\begin{lemma} \label{prop:equilibrium}
Earnings schedule $w_t$, firm value function $v_t$, and assignment $\pi_t$ are an equilibrium if and only if the assignment $\pi_t$ solves the primal problem and $(w_t,v_t)$ solve the dual problem.
\end{lemma}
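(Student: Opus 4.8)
The plan is to prove the two implications separately, using Monge-Kantorovich duality as the bridge. Throughout I will use that the primal problem (\ref{e:pp}) and the dual problem attain the same optimal value (Theorem 5.10 in \citet{Villani:2009}), and that at any pair of primal and dual optimizers the complementary slackness condition $w_t(x)+v_t(z)=y_t(x,z)$ holds on the support of the optimal coupling, while $w_t(x)+v_t(z)\geq y_t(x,z)$ everywhere.

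\textbf{Step 1 ($\Longrightarrow$): equilibrium implies optimality.} Suppose $(w_t,v_t,\pi_t)$ is an equilibrium. Firm optimality (\ref{eq:firmprob}) gives $v_t(z)\geq y_t(x,z)-w_t(x)$ for every $x$, i.e. $w_t(x)+v_t(z)\geq y_t(x,z)$ for all $(x,z)$, so $(w_t,v_t)$ is dual-feasible. Feasibility of $\pi_t$ (labor market clearing) makes $\pi_t$ primal-feasible. The resource constraint states $\int y_t \, d\pi_t = \int v_t g\, dz + \int w_t f\, dx$, so the primal value attained by $\pi_t$ equals the dual value attained by $(w_t,v_t)$. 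Since for any primal-feasible $\pi$ and dual-feasible $(w,v)$ weak duality gives $\int y_t\,d\pi \leq \int w f + \int v g$, the common value is simultaneously $\geq$ every primal value and $\leq$ every dual value; hence $\pi_t$ solves the primal and $(w_t,v_t)$ solves the dual.

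\textbf{Step 2 ($\Longleftarrow$): optimality implies equilibrium.} Suppose $\pi_t$ solves the primal and $(w_t,v_t)$ solves the dual. By strong duality the two optimal values coincide, and integrating the dual constraint against $\pi_t$ together with this equality forces $w_t(x)+v_t(z)=y_t(x,z)$ $\pi_t$-almost everywhere; combined with the marginal conditions this is exactly the resource constraint, giving condition $(iv)$. Condition $(iii)$, labor market clearing, is primal feasibility of $\pi_t$. For firm optimality $(i)$: dual feasibility gives $y_t(x,z)-w_t(x)\leq v_t(z)$ for all $x$, and complementary slackness gives equality on the support of $\pi_t$, so for $(x,z)$ in the support, $x$ attains the maximum in (\ref{eq:firmprob}); the symmetric argument (swapping the roles of $w_t$ and $v_t$, $x$ and $z$) gives worker optimality $(ii)$.

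\textbf{Main obstacle.} The routine parts are the feasibility bookkeeping; the one point requiring care is the passage, in Step 2, from equality of optimal \emph{values} to the \emph{pointwise} complementary slackness $w_t(x)+v_t(z)=y_t(x,z)$ on $\operatorname{supp}\pi_t$ — this needs that $\int (w_t(x)+v_t(z)-y_t(x,z))\,d\pi_t=0$ with a nonnegative integrand, hence the integrand vanishes $\pi_t$-a.e., and then continuity of $w_t,v_t,y_t$ (guaranteed by the smoothness assumptions of Section \ref{s:assumptions}) upgrades this to every point of the closed support. Since the twist condition (Lemma \ref{l:lemmamap}) makes $\pi_t$ concentrated on the graph of a smooth map $\tau_t$, this identification is clean and the a.e. statement becomes an everywhere statement on the support.
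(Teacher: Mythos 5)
Your proof is correct and follows essentially the same route as the paper: both directions hinge on Monge–Kantorovich duality, with the resource constraint supplying the zero duality gap in one direction and the gap supplying the resource constraint in the other. The only (immaterial) difference is that in the ``optimality implies equilibrium'' step the paper derives $w_t(x)=\max_z\{y_t(x,z)-v_t(z)\}$ by a perturbation argument (if the inequality were strict one could lower $w_t(x)$ and improve the dual objective), whereas you obtain it from complementary slackness on $\operatorname{supp}\pi_t$ together with full support of the marginals.
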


\vspace{0.02 cm}
\noindent The equilibrium assignment is thus the allocation of workers to jobs that maximizes total output. 

\vspace{0.4 cm}
\noindent We provide the proof to Lemma \ref{prop:equilibrium} by separately proving both directions.

\vspace{0.05cm}
\begin{lemma} \label{prop:equilibrium1}
Let $\pi$ solve the planning problem and let $(w,v)$ solve the dual problem. Then, wage schedule $w$, firm value function $v$, and assignment $\pi$ are an equilibrium.
\end{lemma}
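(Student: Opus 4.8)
The plan is to verify the four equilibrium conditions one at a time, with Monge--Kantorovich duality serving as the bridge between the primal and dual optima. Two of them are essentially free. Labor market clearing holds because $\pi$, being feasible for the primal problem, is by definition a coupling of $f$ and $g$, which is condition $(iii)$. The resource constraint $(iv)$ holds because strong duality (Theorem 5.10 in \citet{Villani:2009}) gives $\int_{X\times Z} y(x,z)\,d\pi = \int_X w(x) f(x)\,dx + \int_Z v(z) g(z)\,dz$, i.e.\ total output is split between aggregate earnings and aggregate profits.

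The substantive step is a complementary-slackness argument. Dual feasibility gives the pointwise inequality $w(x)+v(z)-y(x,z)\ge 0$ for \emph{every} pair $(x,z)$. Integrating this nonnegative continuous function against $\pi$ and using that $\pi$ has marginals $f$ and $g$ together with the resource constraint just established yields $\int_{X\times Z}\big(w(x)+v(z)-y(x,z)\big)\,d\pi = 0$. Hence the integrand vanishes $\pi$-almost everywhere, and by continuity of $w$, $v$, and $y$ it vanishes on all of $\operatorname{supp}\pi$, so $w(x)+v(z)=y(x,z)$ for every matched pair.

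With this identity, firm and worker optimality follow. For a firm with job $z$, dual feasibility gives $y(x,z)-w(x)\le v(z)$ for every worker $x$, so $\max_{x}\{y(x,z)-w(x)\}\le v(z)$; choosing any $x$ with $(x,z)\in\operatorname{supp}\pi$, the identity $w(x)+v(z)=y(x,z)$ shows this bound is attained, giving condition $(i)$. The symmetric argument, fixing a worker $x$ and choosing a job $z$ with $(x,z)\in\operatorname{supp}\pi$, gives $(ii)$. To know that such a matched partner exists for \emph{every} $x\in X$ and every $z\in Z$, I would use that the projection of the compact set $\operatorname{supp}\pi$ onto each factor is closed and has full measure under the corresponding marginal; since $f$ and $g$ are bounded away from zero on the compact sets $X$ and $Z$, these projections are all of $X$ and $Z$ respectively.

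I expect this last parenthetical to be the only delicate point — establishing that the support of $\pi$ projects surjectively onto $X$ and $Z$, which is where compactness and the densities being bounded away from zero are used. Everything else is a direct consequence of the absence of a duality gap and the nonnegativity of the pointwise duality slack.
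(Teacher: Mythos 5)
Your proof is correct, but it reaches conditions $(i)$ and $(ii)$ by a genuinely different route than the paper. The paper never uses the optimal $\pi$ to verify firm and worker optimality: it argues directly from optimality of $(w,v)$ in the dual that $w(x) = \max_z\{y(x,z)-v(z)\}$ (and symmetrically for $v$), since a strict slack at some $x$ would allow lowering $w$ there while preserving dual feasibility and strictly reducing the dual objective; Monge--Kantorovich duality is then invoked only once, for the resource constraint. You instead invoke strong duality first, extract complementary slackness $w+v=y$ on $\operatorname{supp}\pi$, and then need the projections of $\operatorname{supp}\pi$ onto $X$ and $Z$ to be surjective so that every agent has a matched partner attaining the dual bound. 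Your route proves slightly more (each matched pair is mutually optimal, not merely that the value-function identities hold) and makes explicit where compactness and the densities being bounded away from zero enter; the paper's route is shorter and avoids the topological argument about $\operatorname{supp}\pi$ entirely. A symmetric caveat: the paper's improvement argument also tacitly uses positivity of $f$ — a pointwise strict inequality must, via continuity, hold on a set of positive $f$-measure in order to strictly lower $\int w f$ — so neither argument is assumption-free on that score. Both are standard and both are valid.
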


\vspace{-0.15cm}
\begin{proof}
The proof follows from the constraints on the dual problem, or $w(x) \geq y(x,z)  - v (z)$, which imply: \vspace{-0.2 cm}
\begin{equation}
w (x) \geq  \max\limits_z \hspace{0.03 cm} \big(  y(x,z)  - v (z) \big). \label{eq:wproofeq}
\end{equation} 
In order to show that (\ref{eq:wproofeq}) holds with equality, suppose instead that $w (x) > \max\limits_z \hspace{0.03 cm} (  y(x,z) - v (z) )$. In this case, one could reduce $w(x)$ to $\max\limits_z \hspace{0.03 cm} (  y(x,z) - v (z) )$, which satisfies the constraints to the dual problem, and reduces the aggregate payments to workers and firms, thus contradicting that functions $(w,v)$ solve the dual problem. The argument for the firm profit function $v$ is identical. By Monge-Kantorovich duality, it follows that the resource constraint is satisfied.\end{proof}

\noindent Lemma \ref{prop:equilibrium1} implies that assignment function $\pi$, wage schedule $w$, and firm profit function $v$ are an equilibrium. Given this connection, we call $w$ the wage schedule and $v$ the firm value. 

Moreover, Lemma \ref{prop:equilibrium2} shows that the decentralized equilibrium coincides with the solution to the planning problem so the competitive equilibrium is efficient.

\vspace{0.05cm}
\begin{lemma} \label{prop:equilibrium2}
Let $(\pi,w,v)$ be an equilibrium. Then, $\pi$ solves the planning problem and $(w,v)$ solves the dual problem.
\end{lemma}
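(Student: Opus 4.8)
The plan is to obtain both optimality statements from weak duality for the optimal transport problem, using the resource constraint that is part of the definition of equilibrium. First I would check that $(w,v)$ is feasible for the dual problem: firm optimality in (\ref{eq:firmprob}) says $v(z) \geq y(x,z) - w(x)$ for every worker $x$ and job $z$, hence $w(x) + v(z) \geq y(x,z)$ for all $(x,z)$. (Worker optimality yields the same inequality, so only one of conditions $(i)$, $(ii)$ is needed here.)

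Next I would record the weak-duality inequality. For any feasible assignment $\pi' \in \Pi$ — whose marginals are $f$ and $g$ by definition of $\Pi$ — and any dual-feasible pair $(w',v')$, integrating the pointwise constraint $w'(x) + v'(z) \geq y(x,z)$ against $\pi'$ and splitting the integral across the two marginals gives
\[
\int_{X \times Z} y\, d\pi' \;\le\; \int_{X \times Z} \big( w'(x) + v'(z) \big)\, d\pi' \;=\; \int_X w'(x) f(x)\, dx + \int_Z v'(z) g(z)\, dz .
\]

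Finally I would close the argument with the resource constraint, condition $(iv)$ of the equilibrium definition, which states $\int y\, d\pi = \int w f + \int v g$. Taking $\pi' = \pi$ in the display and letting $(w',v')$ range over all dual-feasible pairs gives $\int w' f + \int v' g \geq \int y\, d\pi = \int w f + \int v g$, so $(w,v)$ attains the dual minimum. Taking $(w',v') = (w,v)$ and letting $\pi'$ range over all feasible assignments gives $\int y\, d\pi' \leq \int w f + \int v g = \int y\, d\pi$, so $\pi$ attains the primal maximum; both conclusions follow. As a byproduct, since the nonnegative function $w(x)+v(z)-y(x,z)$ integrates to zero against $\pi$, it vanishes on the support of $\pi$, which recovers surplus splitting on the optimal match. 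There is no genuine obstacle here: the only points requiring care are that the marginals of every competing assignment are exactly $f$ and $g$ (labor market clearing / feasibility) and that $w,v$ are integrable, both guaranteed by the equilibrium definition and the standing assumptions. Unlike the forward direction (Lemma \ref{prop:equilibrium1}), this direction does not invoke Monge--Kantorovich duality — it uses only the elementary weak-duality bound together with $(iv)$.
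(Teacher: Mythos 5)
Your proposal is correct and follows essentially the same route as the paper: establish dual feasibility of $(w,v)$ from the agents' optimality conditions, then use the resource constraint to close the duality gap, which yields both optimality claims at once. The only difference is that you spell out the weak-duality inequality explicitly where the paper compresses it into the phrase ``the dual gap is equal to zero,'' so no substantive gap or divergence exists.
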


\begin{proof}
Since the assignment is feasible, it is a candidate solution to the primal planning problem. Following the worker and the firm problem:
\begin{equation}
w (x) \geq  \max\limits_z \hspace{0.03 cm} \big(  y(x,z)  - v (z) \big) \hspace{1.35 cm} \implies \hspace{1.25 cm} w (x) \geq   y(x,z)  - v (z) \hspace{1.25 cm} \forall \hspace{0.03 cm} z \in Z
\end{equation} 
and similarly
\begin{equation}
v (z) \geq  \max\limits_x \hspace{0.03 cm} \big(  y(x,z)  - w (x) \big) \hspace{1.35 cm} \implies \hspace{1.25 cm} v (z) \geq   y(x,z)  - w (x) \hspace{1.25 cm} \forall \hspace{0.03 cm} x \in X
\end{equation} 
it follows that the wage function and the firm profit function satisfy the dual constraints: $w(x) + v(z) \geq y(x,z)$ for all $(x,z) \in X \times Z$.

Finally, it follows from the resource constraint that the dual gap is equal to zero, and hence that the assignment $\pi$ solves the planning problem and the wage function and firm value function solve the dual problem. 
\end{proof}

\subsection{Changing the Distributions of Workers and Jobs} \label{a:distributionalch}

Thus far, we considered comparative statics with respect to changes in the production technology. In this section, we show that our comparative statics methodology extends to analyzing comparative statics with respect to changes in the distributions of workers and jobs. In this section, we illustrate this by providing comparative statics for a change in the worker skills distribution.\footnote{Adding changes in the job distributions requires additional notation, but is conceptually identical to introducing changes in the distribution of workers.}


\vspace{0.4 cm}
\noindent \textbf{Changing the Skill Distribution}. We first discuss how we formally change the distribution of workers. Parallel to our parameterization of the technology function by $t$ to characterize comparative statics for changes in technology, we parameterize the distribution of worker skills by $t$ to characterize comparative statics with respect to the worker distribution. 

We formalize the change in the worker skill distribution by introducing an assignment function, which we denote $M^{-1}_t(x)$, which states that a worker with skills $x$ under the original distribution $f$ has skills $M^{-1}_t(x)$ under the distribution of worker skills $f_t$. Equivalently, a worker with skills $\tilde{x}$ under the distribution $f_t$ has skills $M_t(\tilde{x})$ in the original distribution. The assignment function $M^{-1}_t$ assigns the initial skill distribution $f$ to the new skill distribution $f_t$ and thus satisfies:
\begin{equation}
\int_X \phi(x) f(x) dx = \int_X \phi(M_t(x)) f_t(x) dx  \label{eq:reallocationssdist}
\end{equation}
for all smooth functions $\phi$, similar to the feasibility constraint for assignment $\tau_t$ in equation (\ref{eq:probconstraint}).

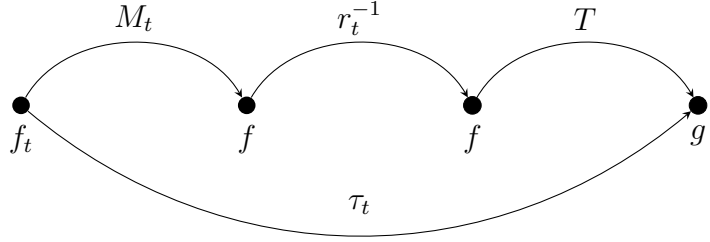
\begin{figure}[!t]
\begin{centering}
\begin{tikzpicture}

\node[circle,fill=black, minimum size=0.1pt,scale=0.6,label=below:{$f_t$}] (A) at  (0,0){};
\node[circle,fill=black, minimum size=0.1pt,scale=0.6,label=below:{$f$}] (B) at (3,0) {};
\node[circle,fill=black,draw, minimum size=0.1cm,scale=0.6,label=below:{$f$}] (C) at  (6,0) {};
\node[circle,fill=black,draw, minimum size=0.1cm,scale=0.6,label=below:{$g$}] (D) at  (9,0){};

\path[->,>=stealth, scale=125,every node/.style={font=\sffamily\small}] (A) edge[bend left=60] node [left] {} (B);
\path[->,>=stealth,every node/.style={font=\sffamily\small}] (B) edge[bend left=60] node [left] {} (C);
\path[->,>=stealth,every node/.style={font=\sffamily\small}] (C) edge[bend left=60] node [left] {} (D);
\path[->,>=stealth] (A) edge[bend right=40] node [yshift=0.2cm,below] {} (D);

\node[circle,draw=none, minimum size=0.1pt,scale=0.6,label=below:{$M_t$}] (F) at  (1.5,1.6){};
\node[circle,draw=none, minimum size=0.1pt,scale=0.6,label=below:{$r_t^{-1}$}] (G) at  (4.5,1.65){};
\node[circle,draw=none, minimum size=0.1pt,scale=0.6,label=below:{$T$}] (H) at  (7.5,1.55){};
\node[circle,draw=none, minimum size=0.1pt,scale=0.6,label=below:{$\tau_t$}] (I) at  (4.5,-0.93){};

        \end{tikzpicture} 
        
\par\end{centering}
\caption{Assignment with Changing Distributions of Workers}
\label{f:changingdist} {\scriptsize{}{}\vspace{0.2cm}
Figure \ref{f:changingdist} visualizes the assignment function $\tau_t$ between $f_t$ and $g_t$ as a composition of the maps $\tau$, $r_t^{-1}$ and $M_t$.}
\end{figure}


Given a distribution of workers parameterized by $t$, the optimal assignment pairs workers with skills $\tilde{x}$ distributed according to distribution $f_t$ to jobs $\tilde{z}$ distributed according to the distribution of job requirements $g$. This assignment function is $z = \tau_t(\tilde{x})$. We construct the optimal assignment as: $\tau_t(\tilde{x}) = \tau (r_t^{-1} (M_t(\tilde{x})))$. The map $M_t(\tilde{x})$ assigns type $\tilde{x}$ in the distribution $f_t$ to a type, say $\hat{x}$, in the original type distribution $f$, where $r_t^{-1}$ is the inverse reallocation which specifies whose job worker $\hat{x}$ will take under the new technology and worker distribution. The job that they will be replacing worker $x = r_t^{-1}(M_t(\tilde{x}))$ in is given by $z = \tau(x)$.  

\vspace{0.4 cm}
\noindent \textbf{Comparative Statics}. By the optimality condition to the profit maximization problem:
\begin{equation*}
(\nabla_1 y_t) (\tilde{x}, \tau_t(\tilde{x})) = (\nabla w_t) (\tilde{x}) 
\end{equation*}
for any distributions and production technology $t$, where the function $\tau_t$ assigns workers from the distribution $f_t$ to jobs in the distribution $g$. We evaluate the optimality condition to the profit maximization for worker $x = r_t^{-1}(M_t(\tilde{x}))$, or equivalently, $\tilde{x} = M_t^{-1} (r_t(x))$, to obtain:
\begin{equation}
(\nabla_1 y_t) (M_t^{-1} (r_t(x)), T (x)) = (\nabla w_t) (M_t^{-1} (r_t(x))) \label{e:focwithdistt} .
\end{equation}
We differentiate (\ref{e:focwithdistt}) with respect to a change in the production technology and evaluate at the initial technology in order to derive comparative statics with respect to changes in the worker skill distribution. 

\begin{theorem}{\textit{Comparative Statics for Changing Skill Distribution}.} \label{t:changeinmearningsdist}
The change of equilibrium earnings to technological change, which is denoted $\nabla \dot{w} (x )$, is uniquely determined as:
\begin{align}
\dot{f}(x) + \nabla \cdot \big( f(x)  \mathcal{C}^{-1}(x) \nabla \dot{w} (x ) \big) & = \nabla \cdot \big(f(x) \mathcal{C}^{-1}(x)  \dot{\mathcal{A}} (x) \big) \hspace{2.84 cm} \text{ for  $x \in X$} \label{e:csgen1dist} \\
 \big(f(x)  \mathcal{C}^{-1}(x) \dot{\mathcal{A}}(x) \big) \cdot n(x) & = \big(  f(x) \mathcal{C}^{-1}(x) \nabla \dot{w} (x ) \big) \cdot n(x)  \hspace{2.05 cm} \text{ for $x \in \partial X$} \label{e:csgen2dist} 
\end{align}
This is a system with a single unknown given by $\dot{w}$.
\end{theorem}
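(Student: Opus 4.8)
\noindent The plan is to follow the proof of \Cref{eq:changeinmearnings} step for step, with one additional ingredient: a transport equation for the displacement generated by the change in the worker distribution. First I would differentiate the composed optimality condition (\ref{e:focwithdistt}) in $t$ and evaluate at the initial point, where $M_0 = \mathrm{id}$, $r_0 = \mathrm{id}$, and $\tau_0 = \tau$. Writing $m(x) := \frac{d}{dt}\big|_{t=0} M_t^{-1}(x)$ and $\dot r(x) := \frac{d}{dt}\big|_{t=0} r_t(x)$, the chain rule applied to $x \mapsto M_t^{-1}(r_t(x))$ at $t=0$ yields total displacement $s(x) := m(x) + \dot r(x)$, since $\nabla M_0^{-1} = I$. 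Differentiating both sides of (\ref{e:focwithdistt}) then produces $\dot{\mathcal A}(x) + (\nabla^2_{11} y)(x,\tau(x))\,s(x) = (\nabla\dot w)(x) + (\nabla^2 w)(x)\,s(x)$ with $\dot{\mathcal A}(x) = (\nabla_1\dot y)(x,\tau(x))$, and substituting the identity (\ref{eq:difffocwrtx}) collapses this to $\dot{\mathcal A}(x) = (\nabla\dot w)(x) + \mathcal C(x)\,s(x)$, equivalently $f(x)\,s(x) = f(x)\mathcal C^{-1}(x)\dot{\mathcal A}(x) - f(x)\mathcal C^{-1}(x)\nabla\dot w(x)$, where $f\mathcal C^{-1} = \mathcal C_f$.

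The second step is to pin down the constraint on $f\,s$. Feasibility of the rearrangement $r_t$ gives, exactly as in \Cref{s:feasiblereall}, $\nabla\cdot(f\,\dot r) = 0$ in $X$ and $f\,\dot r \cdot n = 0$ on $\partial X$. For $m$, I would differentiate the pushforward identity (\ref{eq:reallocationssdist}) at $t=0$, in the form $\int \psi(M_t^{-1}(x)) f(x)\,dx = \int \psi(y) f_t(y)\,dy$, and integrate by parts as in \Cref{a:transporteq}, to obtain $\nabla\cdot(m\,f) = -\dot f$ in $X$ and $m\,f\cdot n = 0$ on $\partial X$. Adding the two gives $\nabla\cdot(f\,s) = -\dot f$ in $X$ and $f\,s\cdot n = 0$ on $\partial X$. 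Taking the divergence of $f\,s = f\mathcal C^{-1}\dot{\mathcal A} - f\mathcal C^{-1}\nabla\dot w$ then delivers (\ref{e:csgen1dist}), and taking the inner product with the outward normal $n$ at the boundary delivers (\ref{e:csgen2dist}).

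For uniqueness I would reuse the argument of \Cref{a:poissoneq} verbatim: the difference of two solutions solves the homogeneous equation with homogeneous boundary data; multiplying by that difference and integrating by parts gives $\int_X (\nabla\dot w)^\top f\mathcal C^{-1}\nabla\dot w\,dx = 0$, and positive definiteness of $f\mathcal C^{-1}$ forces $\nabla\dot w = 0$. For existence, the associated uniformly elliptic Neumann problem is solvable precisely when the compatibility condition holds; integrating (\ref{e:csgen1dist}) over $X$ and using (\ref{e:csgen2dist}) for the boundary terms, the condition reduces to $\int_X \dot f(x)\,dx = 0$, which holds because $f$ and $f_t$ are probability densities, so $\frac{d}{dt}\int_X f_t = 0$. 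Existence then follows from the standard theory of second-order uniformly elliptic operators (Chapter 6 of \citet{Evans:2022}), as in the proof of \Cref{eq:changeinmearnings}.

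The main obstacle I anticipate is the bookkeeping around the composition $M_t^{-1}\circ r_t$: one must verify that at $t=0$ the displacement splits additively into $m + \dot r$ and that the two pieces satisfy the two distinct transport equations — the $m$-piece carrying the source term $\dot f$, the $\dot r$-piece remaining divergence-free — so that exactly one copy of $\dot f$ enters the Poisson equation. A secondary point is that the derivation presupposes differentiability of $t\mapsto(w_t,\tau_t,M_t,r_t)$ near $t=0$; I would justify this by the same Cauchy--Kovalevskaya and Ma--Trudinger--Wang argument as in \Cref{t:flow}, now run with the additional analytic family $f_t$.
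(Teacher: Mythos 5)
Your route is the same as the paper's: differentiate the composed optimality condition (\ref{e:focwithdistt}) at $t=0$, split the displacement of $M_t^{-1}\circ r_t$ into a divergence-free rearrangement piece and a distributional piece carrying the source $\dot f$, and then take divergences and normal traces of $f\,s=\mathcal C_f(\dot{\mathcal A}-\nabla\dot w)$. The uniqueness argument is the paper's energy argument verbatim, and your explicit verification of the Neumann compatibility condition (reducing it to $\int_X\dot f=0$ via the divergence theorem) is a worthwhile addition that the paper omits for this theorem.

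There is, however, one point you must reconcile: your own computation does \emph{not} deliver (\ref{e:csgen1dist}) as stated. You correctly obtain $\nabla\cdot(m\,f)=-\dot f$ for $m=\frac{d}{dt}\big|_{t=0}M_t^{-1}$ and hence $\nabla\cdot(f\,s)=-\dot f$, which upon taking the divergence of $f\,s=\mathcal C_f\dot{\mathcal A}-\mathcal C_f\nabla\dot w$ gives
\begin{equation*}
-\dot f(x)+\nabla\cdot\big(\mathcal C_f(x)\nabla\dot w(x)\big)=\nabla\cdot\big(\mathcal C_f(x)\dot{\mathcal A}(x)\big),
\end{equation*}
i.e.\ (\ref{e:csgen1dist}) with the sign of $\dot f$ reversed. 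Your sign appears to be the right one: in the one-dimensional bilinear benchmark with positive assortative matching, $g\equiv 1$, no technological change, and a perturbed $f_t$, one has $\nabla\dot w(x)=\sigma\dot F(x)$ and hence $\nabla\cdot(f\mathcal C^{-1}\nabla\dot w)=+\dot f$, consistent with $-\dot f$ on the left-hand side and inconsistent with $+\dot f$. The paper's own proof is internally inconsistent here $-$ it states the transport equation as $0=\nabla\cdot(m f)+\dot f$ and two lines later uses $\nabla\cdot(m f)=\dot f$ with the same $m$ $-$ so the discrepancy traces to the paper rather than to your argument. But as written, your claim that the divergence step ``delivers (\ref{e:csgen1dist})'' is false for the equation as printed; you should either flag the sign of $\dot f$ in the statement or present the corrected equation explicitly.
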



\begin{proof} Before analyzing the sensitivity of equation (\ref{e:focwithdistt}) with respect to a change in the worker distribution and the technology, recall the transport equations (\ref{eq:rearrgen}) and (\ref{eq:rearrbgen}) in Appendix \ref{a:transporteq}. Since $M_t$ assigns workers from the distribution $f_t$ to jobs in the distribution $g$, the transport equations are: 
\begin{align*}
0 & = \nabla \cdot (m_t(x) f_t(x)) + \dot{f}_t(x) \hspace{7.84 cm} \text{$ x \in X$}  \\
0 & = m_t(x) f_t(x) \cdot n(x)  \hspace{9.09 cm} \text{$x \in \partial X$} 
\end{align*}
where $(\nabla M_t)(x) m_t(x) - \dot{M}_t(x) = 0$, which evaluated at the original technology yields:
\begin{align}
0 & = \nabla \cdot (m(x) f(x)) + \dot{f}(x) \hspace{8.18 cm} \text{$ x \in X$}  \\
0 & = m(x) f(x) \cdot n(x)  \hspace{9.34 cm} \text{$x \in \partial X$} .
\end{align}
For the reallocation between workers and their replacement, the reallocation conditions (\ref{eq:rearr0}) and (\ref{eq:rearr0b}) hold. Given these transport equations, we can also write $\dot{M^{-1}}(x) = - m(x)$.\footnote{By definition $M_t(M_t^{-1}(x)) = x$, which we differentiate with respect to technlogical change and evaluate at the initial technology to obtain: $ \dot{M^{-1}}(x) + \dot{M} (x) = 0$ since $M(x) = x$ and hence $(\nabla M)(x) = I$ as well as $x = M^{-1}(x)$. As a consequence $ \dot{M^{-1}}(x) = - \dot{M} (x) = - m(x)$.}

\vspace{0.4 cm}
\noindent We differentiate the marginal earnings condition (\ref{e:focwithdistt}) with respect to a change in the production technology and the worker skill distribution and evaluate at the initial technology to derive:
\begin{align*}
& (\nabla_1 \dot{y} ) (x,\tau(x)) + (\nabla^2_{11} y) (x, \tau(x)) ( r(x) - m(x) )) = (\nabla_1 \dot{w})(x) + (\nabla^2 w)(x)  (r(x) - m(x) ) \hspace{0.05 cm}.
\end{align*}
By using the marginal earnings condition and equation (\ref{eq:difffocwrtx}) this expression simplifies to:
\begin{equation*}
(\nabla_1 \dot{w})(x) = (\nabla_1 \dot{y} ) (x,\tau(x)) - (\nabla^2_{12} y) (x, \tau(x)) \cdot \nabla \tau(x) ( r(x) - m(x) ) .
\end{equation*}
We multiply this expression by $f(x) \mathcal{C}^{-1}(x)$, where $\mathcal{C}^{-1}(x)$ is defined by (\ref{eq:C0}), to write:
\begin{equation*}
f(x) ( r(x) - m(x) )  = f(x) \mathcal{C}^{-1}(x) \big( (\nabla_1 \dot{y} ) (x,\tau(x)) - (\nabla_1 \dot{w})(x) \big)  .
\end{equation*}
We take the divergence on both sides, where we observe $\nabla \cdot (r(x) f(x)) = 0$ and $\nabla \cdot (m(x) f(x)) = \dot{f}(x)$, so that overall we obtain:
\begin{equation*}
\dot{f}(x)  = \nabla \cdot \big( f(x) \mathcal{C}^{-1}(x) \big( \dot{\mathcal{A}}(x) - (\nabla_1 \dot{w})(x) \big) \big) .
\end{equation*}

\end{proof}

\subsection{Intuition Identity Case} \label{a:intuitionidentitysylvester}


\noindent In order to build first intuition for the Helmholtz decomposition of technological change, consider the setting where the complementarity matrix $\Sigma$ is the identity matrix so that the initial output function is $\mathsf{y}(\tilde{x},x) = \tilde{x}_1 x_1 + \tilde{x}_2 x_2$. Moreover, suppose workers are uniformly distributed on a unit disk. In this setting, the change in marginal earnings in response to technological change (\ref{eq:intuition0by}) is:
\begin{equation}
\dot{\Sigma} x = \nabla \dot{w} (x) + \dot{r}(x)   .\label{eq:intuition0by1}
\end{equation}
We now decompose the technological change $\dot{\Sigma} x$ into a gradient and a divergence-free reallocation. We consider technological change $\dot{\mathsf{y}}(\tilde{x},x) = \dot{\alpha} \tilde{x}_1 x_1 + \dot{\beta} \tilde{x}_1 x_2 + \dot{\gamma} \tilde{x}_2 x_1 + \dot{\delta} \tilde{x}_2 x_2$, or:
\begin{equation}
\dot{\mathcal{A}}(x) = \dot{\Sigma} x = \Bigg(  \begin{matrix} \dot{\alpha} x_1 + \dot{\beta} x_2 \\ \dot{\gamma} x_1 + \dot{\delta} x_2  \end{matrix} \Bigg) . \label{eq:techchangesimp}
\end{equation}

The Helmholtz decomposition decomposes the technological change (\ref{eq:techchangesimp}) into a gradient and a divergence-free reallocation. First, observe that technological change $\dot{\Sigma} x$ is generally not a gradient (\ref{eq:techchangesimp}), and hence labor reallocates. In order for technological change (\ref{eq:techchangesimp}) to be the gradient of a function, its cross-derivatives have to agree, or equivalently, the matrix of technological change $\dot{\Sigma}$ has to be symmetric. When $\dot{\beta} \neq \dot{\gamma}$, the cross-derivatives do not align and hence labor reallocates. 

\vspace{0.3 cm}
\noindent \textit{Symmetric Technological Change}. First, consider the case where the technological change $\dot{\mathcal{A}}$ is the gradient of a function $(\dot{\beta} = \dot{\gamma})$. The marginal earnings change is $(\nabla \dot{w})(x) = \dot{\Sigma} x$, which changes earnings as $\dot{w}(x) =  \dot{\alpha} x_1^2 + \dot{\beta} x_1 x_2 + \dot{\delta} x^2_2$. An increase in within-task complementarities ($\dot{\alpha} > 0$ and $\dot{\delta} > 0$) increases the compensation for cognitive and manual skills, while $\dot{\beta} = \dot{\gamma}$ increases compensation for workers with correlated skills. Symmetric technological change does not change sorting, $\dot{r}(x) = 0$.
 
 \vspace{0.3 cm}
\noindent \textit{Antisymmetric Technological Change}. Second, consider the case where $\dot{\beta} = - \dot{\gamma} > 0$ and within-task complementarities do not change, or $\dot{\alpha} = \dot{\delta} = 0$. The direct effect of technological change $\dot{\mathcal{A}}(x)$ is not the gradient of a function but is instead written as the product of the complementarity matrix $\mathcal{C}(x) = \Sigma = I$, and a vector $\rho(x)f(x)$ that is divergence-free and parallel to the boundary. Since workers are  distributed uniformly on a unit disk $f(x) = 1$, the technological change (\ref{eq:intuition0by}) is:
\begin{equation*}
\dot{\beta} \Bigg(  \begin{matrix} \phantom{-} x_2 \\ - x_1 \end{matrix} \Bigg) = \dot{r}(x)   .
\end{equation*}
The reallocation is indeed divergence free $\nabla \cdot \dot{r}(x) = 0$ and parallel to the boundary as $\dot{r}(x) \cdot n(x) =  \dot{r}(x) \cdot (x_1,x_2) = 0$. A change of between-task complementarities in opposite directions ($\dot{\beta} = - \dot{\gamma}$) reallocates workers without a change in compensation.

\vspace{0.3 cm}
\noindent \textit{General Technological Change}. Finally, we combine the prior two cases to provide the Helmholtz decomposition of the general direct effect of technological change (\ref{eq:techchangesimp}). In order to split the technological change into a symmetric component $-$ the change in earnings $-$ and a divergence-free component, we note that the change in between-task complementarities can be written as:
\begin{equation*}
\dot{\beta} = \frac{\dot{\beta} + \dot{\gamma}}{2} + \frac{\dot{\beta} - \dot{\gamma}}{2}  \hspace{2.5 cm} \text{ and } \hspace{2.5 cm}  \dot{\gamma} = \frac{\dot{\beta} + \dot{\gamma}}{2} - \frac{\dot{\beta} - \dot{\gamma}}{2} .
\end{equation*}
Using this, we can write the matrix of technological change as:
\begin{equation}
\dot{\mathcal{A}}(x) = \Bigg(  \begin{matrix} \dot{\alpha}  &  \frac{\dot{\beta} + \dot{\gamma}}{2} \\ \frac{\dot{\beta} + \dot{\gamma}}{2}  & \dot{\delta}   \end{matrix} \Bigg)  \Bigg(  \begin{matrix} x_1 \\ x_2 \end{matrix} \Bigg) + \frac{\dot{\beta} - \dot{\gamma}}{2} \Bigg(  \begin{matrix} \phantom{-} x_2 \\ - x_1 \end{matrix} \Bigg) \; .
\end{equation}
The first term on the right is the change in marginal earnings, as discussed in the first case above. The change in within-task complementarity only affects worker earnings. The average technological change in between-task complementarity increases earnings for workers with correlated skills. The second term on the right side is the reallocation, as discussed in the second case. The difference in the change of between-task complementarities leads to labor reallocation.

\subsection{Second-Order Approximation} \label{a:2ndorderapp}

\noindent We want to find the reallocation of labor $r_t(x)$ that maximizes the increase in aggregate output. We consider all possible reallocations $r_t(x)$ that satisfy reallocation restrictions (\ref{eq:rearr}) and (\ref{eq:rearrb}), where $(\nabla_x r_t(x)) \upsilon_t(x) + \dot{r}_t(x) = 0$. We label the increase in aggregate output given a change $\upsilon_t(x)$ by $\mathcal{I}_t$:
\begin{equation*}
\mathcal{I}_t(\upsilon) = \int \mathsf{y}_t(r_t(x),x) f(x) dx - \int \mathsf{y}_t(x,x) f(x) dx \hspace{0.05 cm}.
\end{equation*}
The objective is to maximize the increase in aggregate output as $t \rightarrow 0$. At the initial technology $t=0$, we have $\mathcal{I}(\upsilon)= 0$ since $r(x) = x$.


\vspace{0.4 cm}
\noindent \textbf{First-Order Effect}. We first analyze how the change in aggregate production due to technological change varies with the choice of the reallocation $\upsilon$. That is, we evaluate:
\begin{equation*}
\frac{d \mathcal{I}_t}{d t} (\upsilon) =  \int \dot{\mathsf{y}}_t(r_t(x),x) f(x) dx + \int (\nabla_1 \mathsf{y}_t) (r_t(x),x) \dot{r}_t(x) f(x) dx - \int \dot{\mathsf{y}}_t(x,x) f(x) dx \hspace{0.05 cm}. 
\end{equation*}
Evaluating at the initial technology,  
\begin{equation*}
\frac{d \mathcal{I}_t}{d t} \bigg\vert_{t=0} (\upsilon) = \int (\nabla_1 \mathsf{y}) (x,x) \dot{r}(x) f(x) dx \hspace{0.05 cm}. 
\end{equation*}
From the optimality condition at the initial technology, $(\nabla_1 \mathsf{y})(x, x) = (\nabla w)(x)$, and since $\dot{r}(x) = - \upsilon(x)$, we can write the first-order change as:
\begin{equation*}
\frac{d \mathcal{I}_t}{d t} \bigg\vert_{t=0} (\upsilon) = \int_X \nabla w(x) \dot{r}(x) f(x) dx = - \int_{\partial X} w(x) \upsilon(x) f(x) \cdot n(x) dx + \int_X w(x) \nabla \cdot ( \upsilon(x) f(x) ) dx = 0 ,
\end{equation*}
where the final equality follows by the reallocation restrictions (\ref{eq:rearr0}) and (\ref{eq:rearr0b}). To a first-order, the change in aggregate output due to technological change due to the choice of $\upsilon$ equals zero. This is not surprising, it is simply a manifestation of the envelope theorem.

\vspace{0.4 cm}
\noindent \textbf{Second-Order Effect}. We next analyze how the change in output varies with the choice of the reallocation $\upsilon$ to a second-order. In order to do so, we examine the second-order effects:
\begin{align*}
\frac{d^2 \mathcal{I}_t}{d t^2} (\upsilon) & =  \int \ddot{\mathsf{y}}_t(r_t(x),x) f(x) dx +  \int (\nabla_1 \dot{\mathsf{y}}_t)(r_t(x),x) \dot{r}_t(x) f(x) dx \\ & +  \int (\nabla_1 \dot{\mathsf{y}}_t) (r_t(x),x) \dot{r}_t(x) f(x) dx + \int \dot{r}_t(x)^\top (\nabla_{11} \mathsf{y}_t) (r_t(x),x) \dot{r}_t(x) f(x) dx \\ & + \int (\nabla_1 \mathsf{y}_t) (r_t(x),x) \ddot{r}_t(x) f(x) dx - \int \ddot{\mathsf{y}}_t(x,x) f(x) dx \hspace{0.05 cm}. 
\end{align*}
Evaluating at the initial technology:
\begin{align}
\frac{d^2 \mathcal{I}_t}{d t^2}  \bigg\vert_{t=0} (\upsilon) & = 2 \int (\nabla_1 \dot{\mathsf{y}}) (x,x) \dot{r}(x) f(x) dx + \int \dot{r}(x)^\top (\nabla_{11} \mathsf{y}) (x,x) \dot{r}(x) f(x) dx \notag \\ & + \int (\nabla_1 \mathsf{y}) (x,x) \ddot{r}(x) f(x) dx \notag \\
& = - 2 \int (\nabla_1 \dot{\mathsf{y}}) (x,x) \cdot \upsilon(x) f(x) dx + \int \upsilon(x)^\top (\nabla_{11} \mathsf{y}) (x,x) \upsilon(x) f(x) dx \notag  \\ & + \int (\nabla w) (x) \ddot{r}(x) f(x) dx \label{eq:soa1}  \hspace{0.05 cm} ,
\end{align}
which features the second-order change in the reallocation $\ddot{r}(x)$. We next establish that the second-order aggregate output gain is independent of this second-order response of the assignment.

In order to simplify the second-order gain in aggregate production, we vary the reallocation definition $0 = (\nabla_x r_t(x)) \upsilon_t(x) + \dot{r}_t(x)$ with respect to technology:
\begin{equation*}
(\nabla_x \dot{r}_t(x)) \upsilon_t(x) + (\nabla_x r_t(x)) \dot{\upsilon}_t(x) + \ddot{r}_t(x) = 0 \;.
\end{equation*}
Evaluated at the initial technology,
\begin{equation*}
- (\nabla_x \upsilon(x)) \upsilon(x) + \dot{\upsilon}(x) + \ddot{r}(x) = 0 ,
\end{equation*}
where we use that $\dot{r}(x) = - \upsilon(x)$. Moreover, we differentiate both of the reallocation conditions (\ref{eq:rearr}) and (\ref{eq:rearrb}) with respect to technology to write:
\begin{align*}
0 & = \nabla \cdot (\dot{\upsilon}_t(x) f(x)) \hspace{3.65 cm} \text{$ x \in X$} \\
0 & = \dot{\upsilon}_t(x) f(x) \cdot n(x)  \hspace{3.5 cm} \text{$x \in \partial X$} 
\end{align*}
so that $\dot{\upsilon}_t(x)$ is divergence-free and parallel to the boundary. At the initial technology this gives $0 = \nabla \cdot (\dot{\upsilon}(x) f(x))$ and $0 = \dot{\upsilon}(x) f(x) \cdot n(x)$. 

We can use that $\dot{\upsilon}$ is divergence-free and parallel to the boundary in order to simplify the final term in equation (\ref{eq:soa1}) as:
\begin{align*}
\int (\nabla w) (x) \ddot{r}(x) f(x) dx & = \int (\nabla w) (x) (\nabla \upsilon)(x) \upsilon(x) f(x) dx - \int (\nabla w) (x) \dot{\upsilon}(x) f(x) dx ,
 \end{align*}
where the second term is equal to zero through integration by parts:
\begin{equation*}
\int_X (\nabla w) (x) \dot{\upsilon}(x) f(x) dx =  \int_X w (x) \nabla \cdot (\dot{\upsilon}(x) f(x)) dx - \int_{\partial X} w(x) (\dot{\upsilon}(x) f(x)) \cdot n(x) dx = 0 \;.
\end{equation*}
We next simplify the first term. We first consider the identity $\nabla (\nabla w(x) \cdot \upsilon (x)) = (\nabla^2 w(x)) \upsilon (x) + (\nabla w)(x) (\nabla \upsilon) (x)$ so that we can write the first term as:
\begin{equation*}
\int (\nabla w) (x) (\nabla \upsilon)(x) \upsilon(x) f(x) dx = \int \nabla (\nabla w(x) \cdot \upsilon (x)) \upsilon(x) f(x) dx - \int (\nabla^2 w)(x) \upsilon (x) \cdot \upsilon(x) f(x) dx \hspace{0.05 cm}.
\end{equation*}
Since $\nabla (\nabla w(x) \cdot \upsilon (x))$ is a gradient with respect to $x$, thus the first integral on the right gives no contribution from integration by parts. In summary, the final term in the second-order change of aggregate output (\ref{eq:soa1}) simplifies to:
\begin{align*}
\int (\nabla w) (x) \ddot{r}(x) f(x) dx & = - \int \upsilon(x)^\top (\nabla^2 w)(x) \upsilon (x)  f(x) dx \hspace{0.05 cm}.
 \end{align*}
In turn, this implies that the second-order change in aggregate output can be written as:
\begin{align*}
\frac{d^2 \mathcal{I}_t}{d t^2}  \bigg\vert_{t=0} (\upsilon) & = - 2 \int (\nabla_1 \dot{\mathsf{y}}) (x,x) \cdot \upsilon(x) f(x) dx  + \int \upsilon(x)^\top \big( (\nabla^2_{11} \mathsf{y}) (x,x) - (\nabla^2 w)(x) \big) \upsilon(x) f(x) d x \hspace{0.05 cm} .
\end{align*}
which can be further simplified using the definition of technological change $\dot{\mathcal{A}}$ and equations (\ref{eq:difffocwrtx}) and (\ref{eq:C0}):
\begin{align*}
\frac{d^2 \mathcal{I}_t}{d t^2}  \bigg\vert_{t=0} (\upsilon) & = - 2 \int \dot{\mathcal{A}}(x) \cdot \upsilon(x) f(x) dx - \int \upsilon(x)^\top \mathcal{C}(x) \upsilon(x) f(x) d x 
\hspace{0.05 cm} .
\end{align*}

\vspace{0.4 cm}
\noindent \textbf{Maximizing the Output Gain}. The realized aggregate output gain due to technological change with a choice of reallocation $\upsilon$ is given by $\mathcal{I}_t(v) \approx \mathcal{I}(v) + t \frac{d \mathcal{I}_t}{d t} \big\vert_{t=0} (\upsilon) + \frac{t^2}{2} \frac{d^2 \mathcal{I}_t}{d t^2} \big\vert_{t=0} (\upsilon) + O(t^3)$. The first-order effects equal zero, the second-order approximation yields: 
\begin{align}
\frac{1}{t^2} \Big[ \mathcal{I}_t(v) - \mathcal{I}(v) \Big]\approx \frac{1}{2} \frac{d^2 \mathcal{I}_t}{d t^2}  \bigg\vert_{t=0} (\upsilon) & = - \int \dot{\mathcal{A}}(x) \cdot \upsilon(x) f(x) dx - \frac{1}{2} \int \upsilon(x)^\top \mathcal{C}(x) \upsilon(x) f(x) d x \notag \\
& = \int \dot{\mathcal{A}}(x) \dot{r}(x) f(x) dx - \frac{1}{2} \int \dot{r}(x)^\top \mathcal{C}(x) \dot{r}(x) f(x) d x, \label{eq:soa4}
\end{align}
where the second equality follows as $\upsilon(x) = - \dot{r}(x)$. Proposition \ref{p:socharacterization} shows that the optimal local reallocation maximizes aggregate output to the second order in response to technological change.

\begin{proposition} \label{p:socharacterization}
The reallocation that maximizes the increase in aggregate output up to a second order $\dot{\tilde{r}}(x)$ is given by: 
\begin{equation*}
\dot{\mathcal{A}}(x) = (\nabla \dot{w})(x) + \mathcal{C}(x) \dot{\tilde{r}}(x) \tag{\ref{e:equation2}} ,
\end{equation*}
where the change in the gradient of earnings $(\nabla \dot{w})(x)$ uniquely solves:
\begin{align}
\nabla \cdot \big(  \mathcal{C}_f(x) \nabla \dot{w} (x ) \big) & = \nabla \cdot \big( \mathcal{C}_f(x)  \dot{\mathcal{A}} (x) \big) \hspace{2.84 cm} \text{ for  $x \in X$} \tag{\ref{e:csgen1}} \\
\big( \mathcal{C}_f(x) \dot{\mathcal{A}}(x) \big) \cdot n(x) & = \big(  \mathcal{C}_f (x) \nabla \dot{w} (x ) \big) \cdot n(x)  \hspace{2.05 cm} \text{ for $x \in \partial X$} \tag{\ref{e:csgen2}}
\end{align}
\end{proposition}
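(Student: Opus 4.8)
The plan is to observe that, after using the reallocation identity $(\nabla_x r_t(x))\upsilon_t(x)+\dot r_t(x)=0$ evaluated at $t=0$ — where $r(x)=x$, so $\upsilon(x)=-\dot r(x)$ and $\dot r(x)f(x)$ is divergence-free and parallel to the boundary, i.e.\ $\dot r\in\mathcal D$ — the second-order output gain $\tfrac12\frac{d^2\mathcal I_t}{dt^2}\big|_{t=0}$ computed just above the proposition (equation (\ref{eq:soa4})) is the quadratic functional
\[
J(\dot r)=\int \dot{\mathcal A}(x)\cdot\dot r(x)\,f(x)\,dx-\frac12\int \dot r(x)^\top\mathcal C(x)\dot r(x)\,f(x)\,dx
\]
of the feasible reallocation $\dot r\in\mathcal D$, and that maximizing $\mathcal I_t$ to second order over feasible $\upsilon$ is the same as maximizing $J$ over $\dot r\in\mathcal D$, since $\dot r=-\upsilon$ also ranges over the linear space $\mathcal D$. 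Because $\mathcal C(x)$ is positive definite, $J$ is strictly concave on $\mathcal D$, so a maximizer, if it exists, is unique.

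Next I would produce the candidate maximizer directly from Theorem \ref{eq:helmholtz}: applying the Helmholtz decomposition to the vector field $\dot{\mathcal A}$ yields a unique pair with $\nabla\dot w\in\mathcal G$ and $\dot{\tilde r}\in\mathcal D$ such that $\dot{\mathcal A}=\nabla\dot w+\mathcal C\dot{\tilde r}$, which is exactly equation (\ref{e:equation2}). To check that $\dot{\tilde r}$ solves the maximization, take any $\dot r\in\mathcal D$ and write $\dot r=\dot{\tilde r}+\eta$ with $\eta\in\mathcal D$ (using that $\mathcal D$ is a linear subspace). Expanding the quadratic gives
\[
J(\dot r)-J(\dot{\tilde r})=\int(\dot{\mathcal A}-\mathcal C\dot{\tilde r})\cdot\eta\,f\,dx-\frac12\int\eta^\top\mathcal C\eta\,f\,dx=\int\nabla\dot w\cdot\eta\,f\,dx-\frac12\int\eta^\top\mathcal C\eta\,f\,dx .
\]
The first term vanishes by the $f$-weighted orthogonality of $\mathcal G$ and $\mathcal D$ (the integration-by-parts identity recorded just after equation (\ref{eq:difffocwrtx})), and the second is nonpositive, with equality only when $\eta=0$, since $\mathcal C$ is positive definite. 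Hence $\dot{\tilde r}$ is the unique maximizer and it satisfies (\ref{e:equation2}) by construction.

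Finally I would invoke Theorem \ref{eq:changeinmearnings}: given $\dot{\mathcal A}=\nabla\dot w+\mathcal C\dot{\tilde r}$ with $\dot{\tilde r}f$ divergence-free and parallel to the boundary, the change in marginal earnings $\nabla\dot w$ is pinned down uniquely as the solution of the Poisson equation (\ref{e:csgen1})--(\ref{e:csgen2}), obtained by multiplying (\ref{e:equation2}) by $\mathcal C_f(x)=f(x)\mathcal C^{-1}(x)$, taking divergences, and taking inner products with the outward normal on $\partial X$. The main obstacle is not any single deep step but ensuring the bookkeeping is airtight: confirming that the $\ddot r$ contribution genuinely drops out of the second-order expansion (already established in the computation preceding the proposition), that the admissible variation directions are precisely $\mathcal D$, and that the orthogonality identity is applied with the correct $f$-weighted inner product. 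Once these are in place, the optimization itself is a one-line completion of squares.
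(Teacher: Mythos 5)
Your proposal is correct and follows essentially the same route as the paper: both start from the second-order expansion (\ref{eq:soa4}), take the candidate $\dot{\tilde r}$ from the Helmholtz decomposition (\ref{e:equation2}), kill the cross term $\int \nabla\dot w\cdot\eta\, f\,dx$ via the gradient/divergence-free orthogonality, and conclude by positive definiteness of $\mathcal{C}$. Your perturbation expansion $\dot r=\dot{\tilde r}+\eta$ is just the paper's completion of squares in the variable $b=\dot r f$ written in a different parametrization, so there is no substantive difference.
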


\noindent Proposition \ref{p:socharacterization} establishes that the reallocation introduced in equation (\ref{e:equation2}), together with the solution to the generalized Poisson equation, is the reallocation maximizing aggregate output up to second order. 

\begin{proof}
First, by construction, $\dot{\tilde{r}}(x) f(x)$ is divergence free and parallel to the boundary. It follows from equation (\ref{e:equation2}) that $\dot{\tilde{r}}(x) =\mathcal{C}^{-1}(x) \dot{\mathcal{A}}(x) - \mathcal{C}^{-1}(x) (\nabla \dot{w})(x)$ and, thus, by substitution into equations (\ref{e:csgen1}) and (\ref{e:csgen2}) it follows that $\dot{\tilde{r}}(x)f(x)$ is divergence free and parallel to the boundary. 

We remain to show that the candidate optimal reallocation maximizes aggregate output up to second order. Given the specification of the candidate local reallocation (\ref{e:equation2}), by multiplying by any feasible local reallocation $\dot{r}(x)$, we write:
\begin{equation*}
 \int (\nabla \dot{w})(x) \dot{r}(x) f(x) dx  =\int \dot{\mathcal{A}}(x) \dot{r}(x) f(x) dx - \int \dot{\tilde{r}}(x)^\top \mathcal{C}(x) \dot{r}(x) f(x) dx .
\end{equation*}
Since the optimal reallocation is divergence free and parallel to the boundary, the left-hand side can shown to equal zero through integrated by parts.

As a result, it follows that the increase of aggregate output up to a second order (\ref{eq:soa4}) is:
\begin{align}
\frac{1}{2} \frac{d^2 \mathcal{I}_t}{d t^2}  \bigg\vert_{t=0} (\upsilon) & = \int \dot{\mathcal{A}}(x) \dot{r}(x) f(x) dx - \frac{1}{2} \int \dot{r}(x)^\top \mathcal{C}(x) \dot{r}(x) f(x) d x \notag \\ 
& = \int \tilde{b}(x)^\top \frac{\mathcal{C}(x)}{f(x)} b(x) dx - \frac{1}{2} \int b(x)^\top \frac{\mathcal{C}(x)}{f(x)} b(x) d x \notag \\ 
& = - \frac{1}{2} \hspace{-0.1 cm} \int (b(x) - \tilde{b}(x))^\top \frac{\mathcal{C}(x)}{f(x)} (b(x) - \tilde{b}(x)) dx + \frac{1}{2} \int \tilde{b}(x)^\top \frac{\mathcal{C}(x)}{f(x)} \tilde{b}(x) d x \label{eq:soa5}
\end{align}
where the second equality follows by defining $b(x) := \dot{r}(x) f(x)$ and $\tilde{b}(x) := \dot{\tilde{r}}(x) f(x)$, and the third equality follows by completing the square and recalling $\mathcal{C}_f(x)$. Given the quadratic objective and since $\mathcal{C}$ is positive definite following the equilibrium conditions in Section \ref{s:eqconditions} and the definition of $\mathcal{C}$ in equation (\ref{eq:C0}), it follows that the optimum is to set $b(x)$ to $\tilde{b}(x)$, that is, $\dot{r}(x) = \dot{\tilde{r}}(x)$.\end{proof}

\vspace{0.2 cm}
\noindent Following the proof to Proposition \ref{p:socharacterization}, setting $b$ to $\tilde{b}$ in equation (\ref{eq:soa5}), the second-order increase in aggregate output is given by: 
\begin{align}
\mathcal{I}_t(v^*) \approx  \frac{1}{2} \int \tilde{b}(x)^\top \frac{\mathcal{C}(x)}{f(x)} \tilde{b}(x) d x + O(t^3) = \frac{1}{2} \int \dot{r}(x)^\top \mathcal{C}(x) \dot{r}(x) f(x) d x + O(t^3) \;.
\end{align}
which is positive since $\mathcal{C}(x)$ is positive semidefinite.

\subsection{Theorem \ref{t:flow}} \label{a:largechange}

We prove Theorem \ref{t:flow}, which consists of an existence results and an optimality result.

\subsubsection{Existence} \label{a:flowexistence}

In this subsection, we show there exists a unique pair $(\tau_t,w_t)$ that satisfies equation (\ref{eq:flow}) for each technology $t \in [0,T]$ for some $T >0$ with $(\nabla^2 w_t)(x)- (\nabla^2_{11}y_t)(x,\tau_t(x))$ positive definite, where $(\nabla \tau_t(x)) \upsilon_t(x) + \dot{\tau}_t(x) = 0$ and $\upsilon_t \in \mathcal{D}$.

\vspace{0.4 cm}
\noindent \textbf{Assumptions}. To handle regularity at the boundary of the worker type space $\partial X$, we assume the worker type space $X$ is contained inside an open set $\underline{X}$. We will refer to $\underline{X}$ as the extended worker type space. We suppose that the skill distribution $f$ is analytic and bounded from below on the extended worker type space, and integrates to one on $X$. The output function is $y_t(x,z)$ is analytic in all three variables on $[0, \overline{T}]\times \underline{X} \times \underline{X}$ for $\overline{T} > 0$. The initial earnings function $w$ is analytic on the extended worker type space, and is such that $w$ is mean zero on $X$ without loss of generality. In addition, if the initial assignment function $\tau$ is analytic\footnote{Following from the fact that inversion and composition of analytic functions are analytic} defined via $(\nabla y)(x,\tau(x))= (\nabla w)(x)$, 
then we require that the symmetric matrix $(\nabla^2 w)(x)-(\nabla^2_{11}y)(x,\tau(x))$,
which varies analytically in $x\in \underline{X}$, is positive definite for $x\in \underline{X}$. 

\vspace{0.4 cm}
\noindent \textbf{Objective}. Our objective is to construct an earnings function $w_t$ for some technologies $t \in[0,T]$ where $0<T\leq \overline{T}$:
\begin{enumerate}[noitemsep]
\item $w_t$ is mean zero and analytic in all variables on $[0,T]\times \underline{X}$
\item The analytic map $\tau_t(x)$ on $\underline{X}$ defined through the firm optimality condition (\ref{eq:focfirm}) is such that $(\nabla^2 w_t)(x)- (\nabla^2_{11}y_t)(x,\tau_t(x))$ is positive definite. Define:
\begin{equation*}
\mathcal{C}_f(t,x,\nabla w_t, \nabla^2 w_t) := f(x)\left[ (\nabla^2 w_t) (x)- (\nabla^2_{11} y_t)(x,\tau_t(x))\right]^{-1}
\end{equation*}
Note that it is an analytic function in all its variables, and positive definite as long as $w_t$ is close enough to the initial earnings schedule $w$.\footnote{This follows since being positive definite is an open condition, so as long as $\nabla w_t, \nabla^2 w_t$ are continuous in time $-$ in fact, they are analytic $-$ the eigenvalues of $\mathcal{C}_f$ at time $t$ remain close to the eigenvalues of $\mathcal{C}_f$ at the initial equilibrium, which are positive by assumptions}
\item For each $x\in X$, the function $w_t$ satisfies the partial differential equation 
\begin{equation}
\nabla \cdot \left[\mathcal{C}_f(t,x,\nabla w_t, \nabla^2 w_t)\left(\nabla \dot w_t(x)-\nabla \dot y_t(x,\tau_t(x)\right)\right]=0.
\end{equation}
\item For each $x\in \partial X$, if $n(x)$ denotes the normal to the boundary of $X$, we want 
\begin{equation}
\mathcal{C}_f(t,x,\nabla w_t, \nabla^2 w_t) \left(\nabla \dot w_t(x)-\nabla \dot y_t(x,\tau_t(x)\right)\cdot n(x)=0.
\end{equation}
\end{enumerate}
If all four conditions are satisfied then the assignment $\tau_t$ assigns the distribution of workers $f$ to the distribution of jobs $g$ for every technology $t \in [0,T]$, and together with the earnings function $w_t$, the assignment $\tau_t$ satisfies the first-order optimality condition for firms. The second condition, positive definiteness of $\mathcal{C}_f$, ensures that the pair $(w_t, \tau_t)$ is in fact optimal. 

\vspace{0.4 cm}
\noindent \textbf{Construction}. We next construct a function $w_t$ satisfying the four conditions for all technologies $t \in [0,T]$. First, we remark that, given $w_t$ defined on $\underline{X}$ for which $\mathcal{C}_f$ is positive definite, we have a partial differential equation for $\dot w_t$ which is linear and elliptic -- since $\mathcal{C}_f$ is positive definite -- with Neumann boundary conditions. It is known that there exists a unique mean zero solution $\dot w_t$ that satisfies the partial differential equation under condition $3$ and the boundary conditions under the fourth condition. Moreover, the solution depends analytically on the coefficients, meaning that we can rewrite it as 
\begin{equation*}
\dot w_t(x)=F(t,x,\nabla w_t(x), \nabla^2 w_t(x), \nabla^3 w_t(x))
\end{equation*}
for a function $F$ that is analytic in all variables up to the boundary of $X$. Details can be found, for instance, in Theorem 4.1.2 in \citet{Treves:2022}. By applying the Cauchy-Kovalevskaya Theorem, one would be tempted to immediately conclude that such a solution $w_t$ exists. However, one has to be careful about two aspects.
\begin{itemize}[noitemsep]
\item At this stage, given the earnings function $w_t$ defined on $\underline{X}$, we obtain $\dot w_t$ only on $X$, and thus we cannot iterate immediately to subsequent $t$. This is important because, at the boundary, the Cauchy-Kovalevskaya Theorem does not apply directly. 
\item The Cauchy-Kovalevskaya Theorem theorem is local, so it only guarantees that at any point $x$, we can start from $w(x)$ and define a solution for some time $T_x$. We still need to establish that there exists a \emph{common} time $T$ that works at all points. 
\end{itemize}

We first handle the second problem using a compactness argument. For each $x\in X$, let $B_x(R_x)$ be the ball of radius $R_x$ centered at $x$ for which the solution is defined up to time $T_x>0$ by the Cauchy-Kovalevskaya Theorem. Since the worker type space $X$ is compact, we can extract a finite set of points $\{ x_1,\ldots, x_n \}$ such that $X$ is covered by the union of balls $B_{x_i}(R_{x_i})$ for $i= \{1,\ldots, n \}$. Taking $T=\min\limits_{i} T_{x_i}>0$ yields a common time $T$ for which a solution is defined for all worker types $x \in X$.

We now handle the first problem. If the earnings schedule $w_t$ and all data are given on the extended worker type space $\underline{X}$ and we have the change in earnings $\dot w_t$ in $X$ that is analytic up to the boundary, consider the function 
\begin{equation*}
H(x) = \mathcal{C}_f(t,x,\nabla w_t, \nabla^2 w_t)\left(\nabla \dot w_t(x)-\nabla \dot y_t(x,\tau_t(x)\right).
\end{equation*}
which we know satisfies $H(x)\cdot n(x)=0$ for all $x \in \partial X$. We are interested in constructing a real analytic extension to $\underline{X}$. Since $X$ is strictly convex with analytic boundary, $n(x)$ is analytic on the boundary. In particular, we consider a fattening of $\partial X$ given by points of the form:
\begin{equation*}
x+ sn(x)
\end{equation*}
for some unique $(x,s)$, and the map between $z$ and $(x,s)$ is analytic since $n(x)$ varies analytically. In other words, we can think of $H(z) = H(x,s)$ being well-defined and analytic for $s \leq 0$ with $H(x,0)\cdot n(x)=0$. 

We can decompose $H(x,s)=h_0(x,s)n(x)+\sum\limits_{i=1}^{d-1} h_i(x,s)m_i(x)$ where $m_i(x)$ span the orthogonal complement of the normal vector $n(x)$. The functions $h_i(x,s)$ are defined for $s\leq 0$ and are analytic, and our condition at the boundary implies $h_0(x,0)=0$. In order to construct the extension, fix $x$ and consider the analytic expansion around $s=0$ for the functions $h_i$. Then analyticity up to the boundary guarantees that, for $s\leq 0$:
\begin{equation*}
h_i(x,s)=\sum_{n\geq 0}c_{i,x,n}s^n
\end{equation*}
where the coefficients are analytic in $x$, and the radii of convergence $R_{i,x}$ are positive and lower semi-continuous in $x$.\footnote{This follows since $R_{i,x}=\liminf\limits_n|c_{i,x,n}|^{-1/n}$, and liminf of continuous function is lower semi-continuous.} In fact, this expression makes sense even for positive $s$, and we can use it to define the functions $h_i$ also for positive $s$, with the same radii of convergence, that are still analytic. In order to conclude, we note that it suffices to redefine $\delta=\inf\limits_{0\leq i\leq d, x\in \partial X}R_{i,x}$ so that the functions $h_i$, and thus $H$ itself, is defined for $s\in [-\delta,\delta]$, and to show that $\delta>0$. Since $R_{i,x}$ is lower-semicontinuous in $x$ and the set $\partial X$ is compact, the infimum is achieved at some point $x_i$, and therefore $\delta=\min\limits_{0\leq i\leq d}R_{i,x_i}>0$ since $R_{i,x_i}>0$ for each $i$ by construction.

The function $H$ is precisely the rearrangement, which we can then define outside of the worker type space $X$. Once the rearrangement is defined analytically outside, since technological change and earnings are already well-defined on the extended worker type space $\underline{X}$, we can then extend the definition of the earnings derivative also to the outside of $X$, in a way that makes them analytic and respect the boundary conditions. In other words, the expression 
\begin{equation*}
\dot w_t(x) =F(t,x,\nabla w_t(x),\nabla^2w_t(x),\nabla w_t^3(x))
\end{equation*}
now makes sense for all $x\in \underline{X}$ slightly larger than $X$, and by design the boundary conditions are satisfied. 

\subsubsection{Equilibrium} \label{a:floweqm}

We next show that assignment $\tau_t$ is the equilibrium assignment and $w_t$ is the equilibrium earnings schedule given technology $t$.

\vspace{0.4 cm}
\noindent Equation (\ref{eq:flow}) implies that $(\nabla_1 y_t)(x, \tau_t(x)) = (\nabla w_t)(x)$ holds for each technology $t$, which is a necessary condition for optimality. A sufficient condition is that $w_t$ comes from a dual pair $(w_t, v_t)$. Given the Ma-Trudinger-Wang condition and the convexity assumption on the support, it suffices to verify that the second order condition $(\nabla_{11} y_t)(x, \tau_t(x))-(\nabla^2 w_t)(x)>0$ holds, as shown by Theorem $12.46$ in \citet{Villani:2009}. Since the inequality holds at time $t=0$ by assumption, and the function is continuous in $t$, we conclude that for $T$ sufficiently small, the solution $(w_t, \tau_t)$ is in fact optimal on $[0,T]$.

\subsection{Computational Approach} \label{a:computation}

In this appendix, we discuss the computational approach to characterizing equilibrium comparative statics for multidimensional sorting. We use this numerical approach for our quantitative analysis of the consequences of technological change in Section \ref{s:quant}.\footnote{We validate our numerical implementation by comparing our numerical solution against the closed-form solution in Section \ref{s:closedform}.}

\vspace{0.4 cm}
\noindent One advantage of the characterization of the change in earnings in Theorem \ref{t:alternchar} is that it provides a direct approach to computationally characterize the earnings change. Specifically, we formulate the quadratic minimization problem (\ref{eq:quadproblem}) as a constrained minimization problem solving:
\begin{equation*}
\min_{v} \int ( \dot{\mathcal{A}}(x) - v(x) )^\top \mathcal{C}_f(x) ( \dot{\mathcal{A}}(x) - v(x) )  dx  \label{eq:quadproblemnum} ,
\end{equation*}
where $v$ is restricted to be the gradient of some function. The condition that a vector $v(x)$ is the gradient of a function $\varphi(x)$ is the restriction that $\frac{\partial v_i}{\partial x_j} (x) = \frac{\partial v_j}{\partial x_i} (x)$ as $v_i(x) = \frac{\partial \varphi}{\partial x_i}$ and $v_j(x) = \frac{\partial \varphi}{\partial x_j}$. Hence the restriction is that the cross-derivatives are equal (by Lemma \ref{l:poincare} in \Cref{a:gradient}). In order to implement this numerically, we introduce a penalized problem. We construct an auxiliary problem of choosing $v$ to solve: 
\begin{equation}
\min\limits_{v} \frac{1}{2}  \int ( \dot{\mathcal{A}}(x) - v(x) )^\top \mathcal{C}_f(x) ( \dot{\mathcal{A}}(x) - v(x) )  dx + \frac{\Psi}{2}  \int \sum_{i,j} \left( \frac{\partial v_i}{\partial x_j} (x) - \frac{\partial v_j}{\partial x_i} (x) \right)^2 dx \label{eq:augquadpap} ,
\end{equation}
for a large penalization parameter $\Psi > 0$. Reallocation is then constructed using equation (\ref{e:equation2}). 

An alternative way to formulate the restriction that the derivatives of the vector $v$ are equal is by introducing the matrix $\Xi$ and the Jacobian matrix for the vector $v$:
\begin{equation*}
\Xi = \left[ \begin{matrix} \phantom{-}0\phantom{-} & 1 \\ -1\phantom{-} & 0 \end{matrix} \right] \hspace{2.4 cm} \text{ and } \hspace{2.4 cm} J_v = \left[ \begin{matrix} \frac{\partial v_1}{\partial x_1} & \frac{\partial v_1}{\partial x_2} \\ \frac{\partial v_2}{\partial x_1} & \frac{\partial v_2}{\partial x_2} \end{matrix} \right] 
\end{equation*}
so that the penalty term can be written as: $\frac{\beta}{2}  \int | \text{Tr}\left( \Xi J_v \right) |^2 dx$.

In order to illustrate the numerical implementation, we consider a square grid of worker types. In both dimensions, we consider $n$ distinct types, such the number of grid points is $N = n^2$. We want to characterize the minimizer $z_v$:
\begin{equation*}
z_v = \big( v(x_{1,1}),v(x_{1,2}), \dots, v(x_{1,n}), \dots, v(x_{n,n}) \big),
\end{equation*}
where $x_{i,j}$ refers to the $i$-th point in the $x$ dimension and the $j$-th point in the $y$-dimension. Each $v(x_{i,j})$ is a two-dimensional vector that consists of $v(x_{i,j}) = (v_1(x_{i,j}),v_2(x_{i,j}) )$. Hence, the vector $z_v$ is of dimension $2N \times 1$. We construct the analogous object with respect to the vector $\dot{\mathcal{A}}$ so that $z_{\dot{\mathcal{A}}}$ is also of dimension $2 N \times 1$.

We discretize the augmented objective function (\ref{eq:augquadpap}). We separately discretize the original objective function and the penalty function. We first discretize the original objective:
\begin{align*}
\frac{1}{2}  \int_X ( \dot{\mathcal{A}}(x) - v(x) )^\top \mathcal{C}_f(x) ( \dot{\mathcal{A}}(x) - v(x) )  dx & \approx \frac{1}{2} \sum_{i,j = 1}^n ( \dot{\mathcal{A}}(x_{i,j}) - v(x_{i,j}) )^\top \mathcal{C}_f(x_{i,j}) ( \dot{\mathcal{A}}(x_{i,j}) - v(x_{i,j}) )  \\ & = \frac{1}{2} (z_{\dot{\mathcal{A}}} - z_v)^\top \mathcal{B}  (z_{\dot{\mathcal{A}}} - z_v) 
\end{align*}
where the final expression follows from defining the matrix $\mathcal{B}$ as:
\begin{equation*}
\mathcal{B} := \left[
\begin{tikzpicture}[baseline=0cm]   
  \matrix [mymatrix] (m)  
    {
   \mathcal{C}_f(x_{1,1}) & 0  & 0 & \cdots  \\
   0 & \mathcal{C}_f(x_{1,2}) & 0 & \cdots  \\
    0 & 0 & \ddots & \cdots  \\
    \cdots & \cdots & \cdots & \mathcal{C}_f(x_{n,n})  \\
    }; ,
\end{tikzpicture} \right]
\end{equation*}
which is a symmetric matrix since $\mathcal{C}_f$ is a symmetric matrix.

We next consider the discretization of the penalty function. In order to discretize the penalty function we discretize the derivatives. We approximate the derivatives using upward derivatives:
\begin{equation*}
\frac{\partial v_1}{\partial x_2}(x_{i,j}) = \frac{v_1(x_{i,j+1}) - v_1(x_{i,j})}{\Delta_2} \hspace{1.5 cm} \text{ and } \hspace{1.5 cm} \frac{\partial v_2}{\partial x_1}(x_{i,j}) = \frac{v_2(x_{i+1,j}) - v_2(x_{i,j})}{\Delta_1} \;. 
\end{equation*}

We next discretize the penalty function given by $\frac{\beta}{2}  \int | \text{Tr}\left( \Xi J_v \right) |^2 dx$. The observation that the cross-derivatives need to be equal imposes on the two-dimensional grid that the change in the function value is identical across different paths, or:
\begin{equation*}
v_1(x_{i,j}) + v_2(x_{i+1,j}) = v_2(x_{i,j}) + v_1(x_{i,j+1}) .
\end{equation*}
As a result, we can approximate the penalty function as:
\begin{equation*}
\int_X | \text{Tr}\left( \Gamma J_v \right) |^2 dx \approx \sum_{i,j = 1}^n | v_1(x_{i,j+1}) - v_1(x_{i,j}) - v_2(x_{i+1,j}) + v_2(x_{i,j})  |^2 ,
\end{equation*}
a sum over $N$ different equations. 

We translate the penalty function to matrix notation by introducing the matrix $\mathcal{Q}$. Matrix $\mathcal{Q}$ is dimension $N \times 2N$ such that we retrieve all $N$ summands in $\mathcal{Q} \times z_v$. Matrix $\mathcal{Q}$ consists of $N \times N$ blocks of $1 \times 2$ row vectors. Each row vector $\mathcal{Q}_{(i,j),(k,l)}$ describes how $x_{k,l}$ features in the summand corresponding to $x_{i,j}$. From the summand for $x_{i,j}$ given by $v_1(x_{i,j+1}) - v_1(x_{i,j}) - v_2(x_{i+1,j}) + v_2(x_{i,j})$ it follows that $\mathcal{Q}$ is very sparse as the only non-zero elements in the row corresponding to $x_{i,j}$ are given by:
\begin{align*}
\mathcal{Q}_{(i,j),(i,j)}^{1} & = -1  \\
\mathcal{Q}_{(i,j),(i,j)}^{2} & = +1 \\
\mathcal{Q}_{(i,j),(i,j+1)}^{1} & = +1 \\
\mathcal{Q}_{(i,j),(i+1,j)}^{2} & = -1 ,
\end{align*}
where the superscript indicates the position in the row vector $\mathcal{Q}_{(i,j),(k,l)}$. Matrix $\mathcal{Q}$ does not contain non-zero elements for the rows corresponding to points at the boundary of the grid, that is, the rows corresponding to $x_{i,n}$ and $x_{n,j}$ for all $i$ and $j$. In summary, the matrix $\mathcal{Q}$ is given by:
\begin{equation*}
\mathcal{Q} = \left[
\begin{tikzpicture}[baseline=0cm]   
  \matrix [mymatrix] (m)  
    {
   \mathcal{Q}_{(1,1),(1,1)} & \mathcal{Q}_{(1,1),(1,2)}  & \mathcal{Q}_{(1,1),(1,3)} & \cdots  \\
   \mathcal{Q}_{(1,2),(1,1)} & \mathcal{Q}_{(1,2),(1,2)}  & \mathcal{Q}_{(1,2),(1,3)} & \cdots  \\
    \cdots & \cdots & \cdots & \cdots  \\
    \cdots & \cdots & \cdots & \mathcal{Q}_{(n,n),(n,n)}  \\
    };
\end{tikzpicture} \right] .
\end{equation*}
As a result, we can write the penalty function as:
\begin{equation*}
\int_X | \text{Tr}\left( \Gamma J_v \right) |^2 dx \approx \sum_{i,j = 1}^n | v_1(x_{i,j+1}) - v_1(x_{i,j}) - v_2(x_{i+1,j}) + v_2(x_{i,j})  |^2 = z_v^\top \mathcal{Q}^\top \mathcal{Q} z_v .
\end{equation*}

Finally, following our approximation of the objective function, we remain to choose the vector $z_v$ to minimize: 
\begin{equation*}
\frac{1}{2} (z_{\dot{\mathcal{A}}} - z_v )^\top \mathcal{B}^\top \mathcal{B} (z_{\dot{\mathcal{A}}} - z_v ) + \frac{\beta}{2} z_v^\top \mathcal{Q}^\top \mathcal{Q} z_v \hspace{0.04 cm}.
\end{equation*}
The optimality condition with respect to $z_v$ is:
\begin{equation*}
0 = \mathcal{B}^2 (z_v - z_{\dot{\mathcal{A}}}) + \beta \mathcal{Q}^\top \mathcal{Q} z_v   \hspace{0.04 cm}.
\end{equation*}
so that the final solution can be written as:
\begin{equation*}
z_{\dot{\mathcal{A}}} = \big( I + \beta \mathcal{B}^{-2} \mathcal{Q}^\top \mathcal{Q} \big) z_{v} \hspace{0.04 cm} ,
\end{equation*}
which is what we characterize numerically.

\end{document}